\DeclareSymbolFontAlphabet{\amsmathbb}{AMSb}%
\newcommand{\cp}[1]{\ifmmode {\mathcal{#1}}\else ${\mathcal{#1}}$\fi}
\newcommand{\bA}{\boldsymbol{A}}
\newcommand{\bB}{\boldsymbol{B}}
\newcommand{\bC}{\boldsymbol{C}}
\newcommand{\bG}{\boldsymbol{G}}
\newcommand{\bP}{\boldsymbol{P}}
\newcommand{\bQ}{\boldsymbol{Q}}
\newcommand{\bS}{\boldsymbol{S}}
\newcommand{\bT}{\boldsymbol{T}}
\newcommand{\bX}{\boldsymbol{X}}
\newcommand{\bY}{\boldsymbol{Y}}
\newcommand{\bg}{\boldsymbol{g}}
\newcommand{\bx}{\boldsymbol{x}}
\newcommand{\bTheta}{\boldsymbol{\Theta}}
\newcommand{\cb}[1]{\boldsymbol{#1}}
\newcommand\tensor[1]{%
  \ifcat\noexpand#1\relax 
    \mathbb{#1}
  \else
      \if\relax\detokenize\expandafter{\romannumeral-0#1}\relax  
        \mathbb{#1}
      \else
        \mathcal{#1}
      \fi
  \fi }
\newcommand{\unfold}[2]{{#1}_{({#2})}}
\newcommand{\vect}{\operatorname{vec}}
\newcommand{\rank}{\operatorname{rank}}
\newcommand{\vspan}{\operatorname{span}}
\newcommand{\tSVD}{\operatorname{tSVD}}
\newcommand{\Ex}{\operatorname{E}}
\newcommand{\opPspat}{\mathscr{P}_{1,2}}
\newcommand{\opPspec}{\mathscr{P}_{3}}
\def\corange{\textcolor{orange}}
\def\cmag{\textcolor{magenta}}
\definecolor{darkgreen}{rgb}{0., 0.4, 0.}
\newcommand{\cdgreen}{\textcolor{darkgreen}}
\newtheorem{definition}{Definition}
\newtheorem{theorem}{Theorem}
\newcommand{\simZPsi}{\sim_{{Z,\Psi}}}
\definecolor{orange}{rgb}{0., 0., 0.}
\definecolor{blue}{rgb}{0., 0., 0.}
\definecolor{darkgreen}{rgb}{0., 0., 0.}
\definecolor{red}{rgb}{0., 0., 0.}
\definecolor{magenta}{rgb}{0., 0., 0.}
\title{Coupled Tensor Decomposition for \\ Hyperspectral and Multispectral Image \\ Fusion with \corange{Inter-Image} Variability}
\author{
Ricardo~A.~Borsoi, Clémence~Prévost, Konstantin~Usevich, David~Brie, Jos\'e~C.~M.~Bermudez, Cédric~Richard
\thanks{This work was  partly  supported  by  the  ANR (Agence Nationale de Recherche) under  grant LeaFleT (ANR-19-CE23-0021), by the National Council for Scientific and Technological Development (CNPq) under grants 304250/2017-1, 409044/2018-0, 141271/2017-5 and 204991/2018-8, by the Foundation for Research Support of the State of Rio Grande do Sul (FAPERGS) under grant 19/2551-0001844-4.
}%
\thanks{We also thank the GdR CNRS ISIS for supporting the collaboration between R.A. Borsoi and C. Pr{\'e}vost, K. Usevich and D. Brie through a \textit{bourse de mobilité}.}
\thanks{R.A. Borsoi is with the Department of Electrical Engineering, Federal University of Santa Catarina (DEE--UFSC), Florian\'opolis, SC, Brazil, and with the Lagrange Laboratory, Universit\'e  C\^ote  d'Azur, Nice, France. e-mail: \mbox{raborsoi@gmail.com}.}
\thanks{C. Pr{\'e}vost, K. Usevich and D. Brie are with Centre de Recherche en Automatique de Nancy (CRAN), Universit{\'e} de Lorraine, CNRS, Vandoeuvre-l{\`e}s-Nancy, France. e-mail: \mbox{firstname.lastname@univ-lorraine.fr}.}
\thanks{J.C.M. Bermudez is with the DEE--UFSC, Florian\'opolis, SC, Brazil. e-mail: \mbox{j.bermudez@ieee.org}.}
\thanks{C. Richard is with the  Universit\'e  C\^ote  d'Azur,  Nice, France (e-mail: cedric.richard@unice.fr), Lagrange Laboratory (CNRS, OCA).}
}
\begin{document}

\maketitle



\begin{abstract}

Coupled tensor approximation has recently emerged as a promising approach for the fusion of hyperspectral and multispectral images, reconciling state of the art performance with strong theoretical guarantees. However, tensor-based approaches previously proposed assume that the different observed images are acquired under exactly the same conditions. A recent work proposed to accommodate \corange{inter-image} spectral variability in the image fusion problem using a matrix factorization-based formulation, but did not account for spatially-localized variations. Moreover, it lacks theoretical guarantees and has a high associated computational complexity.
In this paper, we consider the image fusion problem while accounting for both spatially and spectrally localized changes in an additive model. We first study how the general identifiability of the model is impacted by the presence of such changes. Then, assuming that the high-resolution image and the variation factors admit a Tucker decomposition, two new algorithms are proposed -- one purely algebraic, and another based on an optimization procedure. Theoretical guarantees for the exact recovery of the high-resolution image are provided for both algorithms. Experimental results show that the proposed method outperforms state-of-the-art methods in the presence of spectral and spatial variations between the images, at a smaller computational cost.
\end{abstract}

\begin{IEEEkeywords}
Hyperspectral data, multispectral data, \corange{inter-image variability}, super-resolution, image fusion, tensor decomposition.
\end{IEEEkeywords}

\section{Introduction}

Hyperspectral (HS) cameras are able to acquire images with very high spectral resolution. However, the fundamental compromise between signal-to-noise ratio, spatial resolution, and spectral resolution means that their spatial resolution is usually low~\cite{shaw2003reviewSpecImaging}. Multispectral (MS) devices, on the other hand, are able to achieve a much higher spatial resolution since they contain only a small number of spectral bands.
An approach that attempts to circumvent the physical limitations of imaging sensors consists in combining HS and MS images (MSI) of the same scene to obtain images with high spatial and spectral resolution, in a multimodal image fusion problem~\cite{yokoya2017HS_MS_fusinoRev}, commonly referred to as \emph{hyperspectral super resolution}.

Different algorithms have been proposed to solve this problem. Early approaches were based on component substitution or on multiresolution analysis, which attempt to extract high-frequency spatial details from the MSI and combine them with the HS image (HSI)~\cite{carper1990componentSubstitutionPansharpening,liu2000MRA_HS_MS_fusion}.
Subspace-based formulations have later received a significant amount of interest as they explore the natural representation of the pixels in an HSI as the linear combination of a small number of spectral signatures~\cite{yokoya2017HS_MS_fusinoRev,Keshava:2002p5667,borsoi2018superpixels1_sparseU}. 
Different algorithms have been proposed following this approach using, e.g., Bayesian formulations~\cite{hardie2004hs_ms_fusionMAP} or sparse representations on learned dictionaries~\cite{wei2015hs_ms_fusionBayesianSparse}, and different kinds of matrix factorization formulations employing sparse and spatial regularizations~\cite{kawakami2011hs_ms_fusionMFsparse,simoes2015HySure}, or estimating both the basis vectors and their coefficients blindly/unsupervisedly from the images~\cite{yokoya2012coupledNMF}.

The natural representation of HSIs and MSIs as 3-dimensional tensors has been successfully exploited for hyperspectral unmixing, denoising~\cite{imbiriba2018ULTRA_V,veganzones2015tensorDecompMultisngularHSIs,imbiriba2018_ULTRA,liu2012denoisingHSIsTensorCPD} and super-resolution. Superior super-resolution performance and exact recovery guarantees have been obtained using this formulation~\cite{kanatsoulis2018hyperspectralSRR_coupledCPD,prevost2020coupledTucker_hyperspectralSRR_TSP}.


The image fusion problem was formulated in~\cite{kanatsoulis2018hyperspectralSRR_coupledCPD} as a coupled tensor approximation problem. Assuming that the high resolution (HR) image admits a low-rank canonical polyadic decomposition (CPD), the problem was solved using an alternating optimization strategy. The recovery of the correct HR image (HRI) was shown to be guaranteed provided that the CPD of the MSI is identifiable, and state of the art performance was achieved.
A recent work extended this approach by assuming the high resolution images to follow a block term decomposition (BTD), which shows a closer connection to the physical mixing model when compared to the CPD~\cite{zhang2019hyperspectralSRR_coupledBTD}.
A simpler approach was later proposed in~\cite{kanatsoulis2018hyperspSRR_lowRankMatrixTensor} by requiring only the computation of one CPD of the MSI and a singular value decomposition (SVD) of the HSI.


A Tucker decomposition-based approach was later considered in~\cite{prevost2019coupledTucker_hyperspectralSRR,prevost2020coupledTucker_hyperspectralSRR_TSP}. Closed form SVD-based algorithms were proposed for the image fusion problem, achieving results comparable to~\cite{kanatsoulis2018hyperspectralSRR_coupledCPD} at a very small computational complexity; exact recovery guarantees were also provided. A coupled Tucker approximation was also considered in~\cite{li2018hs_ms_fusionTensorFactorization} using an alternating optimization approach and employing a sparsity regularization on the elements of the core tensor. \corange{This work was later extended by incorporating the piecewise smoothness of the reconstructed image tensor by using a Total Variation regularization along each of its modes~\cite{xu2020HSMSfusion_1DTV}.}
Another approach considered the CPD of non-local similar patch tensors to explore the non-local redundancy of the image~\cite{xu2019nonlocalTensorDecompImageFusion}.
\corange{A different non-local approach was also proposed in~\cite{dian2019hyperspectralSRR_tensorMultiRank} by using a recent definition of the nuclear norm of order-3 tensors.}

Most existing algorithms, however, share a common limitation: they assume that the HSI and the MSI are acquired under the same conditions.
However, despite the short revisit cycles provided by the increasing number of optical satellites orbiting the Earth (e.g. Sentinel, Orbview, Landsat and Quickbird missions), the number of platforms carrying both HS and MS sensors is still considerably limited~\cite{eckardt2015DESIS_satelliteSpecs,kaufmann2006EnMAP_satelliteSpecs}. This makes combining HS and MS observations acquired on board of different satellites of great interest to obtain HRIs~\cite{hilker2009fusion_MODIS_denseTimeSeries_exp2,emelyanova2013fusion_MODIS_denseTimeSeries_exp}.
Images acquired at different time instants can be impacted by, e.g., illumination, atmospheric or seasonal changes. This may result in significant variations between the HSI and the MSI~\cite{borsoi2020variabilityReview}, negatively impacting traditional image fusion algorithms.

Recently, a method was proposed to combine HSIs and MSIs accounting for seasonal \corange{(inter-image)} spectral variability~\cite{Borsoi_2018_Fusion}. Using a low-rank matrix formulation, the set of spectral basis vectors of the HRIs underlying the HS and the MS observations are allowed to be different from each other, with variations introduced by a set of multiplicative scaling factors~\cite{imbiriba2018glmm}. This algorithm led to significant performance improvements when the HSI and MSI are subject to spatially uniform seasonal or acquisition variations. However, it does not account for spatially localized changes commonly seen in practical scenes~\cite{borsoi2020variabilityReview}. Moreover, the algorithm in~\cite{Borsoi_2018_Fusion} presented high computation times and does not offer any theoretical guarantees.

In this paper, we propose a tensor-based image fusion formulation that accounts for localized spatial and spectral changes between the HSI and MSI. A general observation model is considered, in which the HRI underlying the MSI admits an additive variability term to account for changes between the scenes. Studying the general identifiability of this model, we show that this variability term can only be identified in general up to its smooth structure (which is defined according to the degradation operators). To introduce additional a priori information and mitigate the ambiguity associated with the proposed model, both the HRI and the additive perturbations are assumed to have low multilinear rank (i.e., to admit a Tucker decomposition). Two algorithms are then proposed, one totally algebraic and another based on an optimization procedure. Theoretical guarantees for the exact recovery of the HRI are provided for both. Simulation results show that the proposed optimization-based algorithm yields superior performance at a considerably lower computational cost when compared to~\cite{Borsoi_2018_Fusion}, especially when spatially localized variability is considered.


\section{Tensors -- background} \label{sec:background}
\subsection{Notation and definitions}
\label{sec:back_defs}


An order-3 tensor $\tensor{T}\in\amsmathbb{R}^{N_1\times N_2 \times N_3}$ is an $N_1\times N_2 \times N_3$ array whose elements are indexed by~$[\tensor{T}]_{n_1,n_2,n_3}$. Each dimension of \cdgreen{an order-3 tensor} is called a \emph{mode}. A mode-$k$ \emph{fiber} of tensor $\tensor{T}$ is the one-dimensional subset of $\tensor{T}$ which is obtained by fixing \cdgreen{all but one of the three modes -- the $k$-th dimension}. Similarly, a \emph{slab} or \emph{slice} of a tensor $\tensor{T}$ is a matrix whose elements are the two-dimensional subset of $\tensor{T}$ obtained by fixing all but two of its modes. Operator $\vect(\cdot)$ represents the standard matrix column-major vectorization, or tensor vectorization. The (left) pseudo-inverse of matrix $\bX$ is denoted by $\bX^\dagger$. We denote scalars by lowercase ($x$) or uppercase ($X$) plain font, vectors and matrices by lowercase ($\bx$) and uppercase ($\bX$) bold font, respectively, and order-3 tensors by calligraphic plain font ($\tensor{T}$) or using the blackboard Greek alphabet ($\tensor{\Theta}$). In the following, we review some useful operations of multilinear (tensor) algebra that will be used in the rest of the manuscript (see, e.g.,~\cite{cichocki2015tensor,kolda2009tensor} for more details).


\begin{definition}
\noindent
{\color{red}The \textbf{mode-$k$ product} between a tensor $\tensor{T}$ and a matrix $\cb{B}$ produces a tensor $\tensor{U}$ that is evaluated such that each mode-$k$ fiber of $\tensor{T}$ is multiplied by $\cb{B}$.
For instance, the mode-2 product between $\tensor{T} \in \amsmathbb{R}^{N_1\times N_2\times N_3}$ and $\cb{B} \in \amsmathbb{R}^{M_2\times N_2}$ produces a tensor 
$\tensor{U}\in\amsmathbb{R}^{N_{1}\times M_2\times N_{3}}$, denoted by 
$\tensor{U}=\tensor{T}\times_2\cb{B}$.}
{\color{magenta} 
Its elements are accessed as $[\tensor{U}]_{n_1,m_2,n_3}=\sum_{i=1}^{N_2}[\tensor{T}]_{n_1,i,n_3} [\cb{B}]_{m_2,i}\ , \ m_2=1,\ldots,M_2$.}
\end{definition}
{\color{magenta}Note that the mode-$k$ product has the following properties: 
\begin{align}
    \tensor{T} \times_i\bA \times_j\bB &= \tensor{T} \times_j\bB  \times_i \bA\,, \ i \neq j \,,
    \label{eq:mode_k_prop1}
    \\
    \tensor{T} \times_k \bA \times_k\bB &= \tensor{T}\times_k\big(\bA\bB \big) \,.
    \label{eq:mode_k_prop2}
\end{align}}
\begin{definition}


\cdgreen{The full \textbf{multilinear product} is denoted by $\big\ldbrack\tensor{T};\bB_{1},\bB_{2},\bB_{3}\big\rdbrack$, and consists of a series of successive mode-$k$ products, for \corange{$k\in\{1,2,3\}$, between a tensor $\tensor{T}$ and matrices $\bB_1$, $\bB_2$ and $\bB_3$,} respectively, and \corange{is} expressed as $\tensor{T}\times_1\bB_{1}\times_2\bB_{2}\times_3\bB_{3}$.}
\end{definition}
%
%
%
\begin{definition}
The \textbf{mode-$k$ matricization} of an order-3 tensor $\tensor{T}\in\amsmathbb{R}^{N_1\times N_2\times N_3}$, denoted by $\unfold{\bT}{k}$, arranges its mode-$k$ fibers to be the columns of the resulting matrix $\unfold{\bT}{k}\in\amsmathbb{R}^{N_k\times N_{\ell}N_{m}}$, $k,\ell,m\in\{1,2,3\}$, $k \neq \ell \neq m$, where the $n_k$-th row of $\unfold{\bT}{k}$ consists of the vectorization of the slice of $\tensor{T}$ obtained by fixing the index of the $k$-th mode of $\tensor{T}$ as $n_k$.
\end{definition}

\begin{definition}
We define by $\tSVD_R(\bX)$ the operator which returns a matrix containing {\color{red} the $R$ left singular vectors associated with the largest singular values of the matrix $\bX$}.
\end{definition}



\subsection{Tensor decompositions}

\corange{The Tucker decomposition is able to represent an order-3 tensor $\tensor{T}\in\amsmathbb{R}^{N_1\times N_2\times N_3}$ compactly, using a set of factor matrices given by $\bB_{i}\in\amsmathbb{R}^{N_i\times K_i}$, $i\in\{1,2,3\}$ and a small core tensor $\tensor{G}\in\amsmathbb{R}^{K_1\times K_2\times K_3}$, as~\cite{kolda2009tensor}
\begin{align}
    \tensor{T} &= \big\ldbrack \tensor{G};\bB_{1},\bB_{2},\bB_{3} \big\rdbrack \,.
    \label{eq:tensorDef_Tucker}
\end{align}
The tuple $(K_1,K_2,K_3)$ is called the multilinear rank of $\tensor{T}$. Each value $K_i$ is also equal to the rank of the mode-$i$ unfolding of~$\tensor{T}$~\cite{cichocki2015tensor}.}

%
%

{\color{magenta} 
The Tucker decomposition 
allows the rank along each mode of the tensor to be different (i.e., $K_i\neq K_j$)~\cite{sidiropoulos2017tensor}. This property can be very useful since it allows one to set a higher rank to specific modes of the decomposition in order to adequately represent the data diversity while still keeping the \mbox{model low rank.}
}

The matricizations and vectorization of a tensor $\tensor{T}$ following the Tucker decomposition~\eqref{eq:tensorDef_Tucker} are given by~\cite{kolda2009tensor}:
\begin{align}
    \vect(\tensor{T}) &= (\bB_{3} \otimes \bB_{2} \otimes \bB_{1} ) \vect(\cmag{\tensor{G}}) \,,
    \label{eq:gen_tensor_vec}
    \\
    \unfold{\bT}{1} &= \bB_{1} \unfold{\cmag{\bG}}{1} (\bB_{3} \otimes \bB_{2})^\top \,,
    \label{eq:gen_tensor_mat_1}
    \\
    \unfold{\bT}{2} &= \bB_{2} \unfold{\cmag{\bG}}{2} (\bB_{3} \otimes \bB_{1})^\top \,,
    \label{eq:gen_tensor_mat_2}
    \\
    \unfold{\bT}{3} &= \bB_{3} \unfold{\cmag{\bG}}{3} (\bB_{2} \otimes \bB_{1})^\top \,.
    \label{eq:gen_tensor_mat_3}
\end{align}
%
%
\corange{The Tucker decomposition can be computed using fast algorithms such as the high-order SVD~\cite{lathauwer2000multilinearHOSVD}.}


\corange{Another classic tensor decomposition is the Canonical Polyadic Decomposition (CPD)~\cite{kolda2009tensor}. The CPD enjoys uniqueness properties under very mild conditions, and its rank can exceed the dimensions of the tensor. However, it is also more difficult to compute, and the same rank value is used to represent all modes of the tensor.}

{\color{magenta}
The Block Term Decomposition (BTD) generalizes CPD and Tucker, and allows us to combine benefits from both approaches~\cite{lathauwer2008tensor_BTD2_uniqueness}.}
Specifically, the BTD of an order-3 tensor $\tensor{T}$ is defined as a sum of rank-$(K_{1,r},K_{2,r},K_{3,r})$ terms as~\cite{lathauwer2008tensor_BTD2_uniqueness}:
\begin{align} \label{eq:tensor_BTD_i}
    \tensor{T} & = \sum_{r=1}^R \big\ldbrack\tensor{G}_r;\bB_{r,1},\bB_{r,2},\bB_{r,3} \big\rdbrack \,,
\end{align}
where $\tensor{G}_r$, $\bB_{r,1}$, $\bB_{r,2}$, and $\bB_{r,3}$, for $r=1,\ldots,R$, are the core tensors and the factors corresponding to each mode of $\tensor{T}$. \corange{Differently from the Tucker decomposition, the BTD additionally requires the selection of parameter $R$ (number of blocks), and can be more costly to compute.}
{\color{magenta}
However, the BTD benefits from uniqueness results which, although not as strong than those of the CPD, are still interesting for many applications~\cite[Section~5]{lathauwer2008tensor_BTD2_uniqueness}.} 
\corange{These results will prove very important to derive the recoverability guarantees for the algorithm in Section~\ref{sec:alg2_optimizationBased}.
Note that in the CPD can be viewed as a special case of the BTD with $R=1$.}


\begin{figure}
    \centering
    \includegraphics[width=\linewidth]{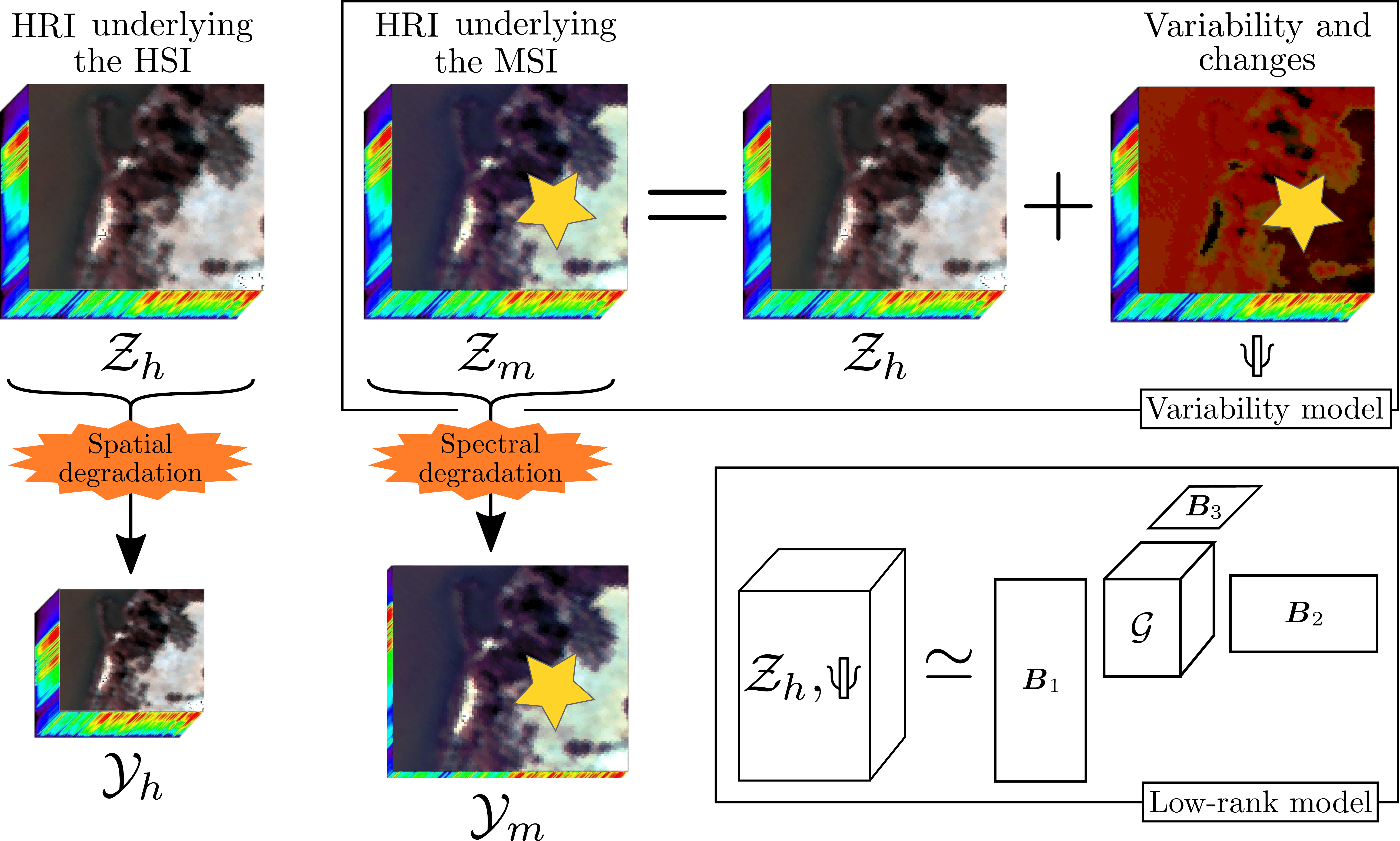}
    \vspace{-0.7cm}
    \caption{\corange{Proposed imaging model. The observed HSI and MSI are acquired under different conditions or at different time instants. Their underlying HRIs, represented by tensors $\tensor{Z}_h$ and $\tensor{Z}_m$, respectively, can be different from each other due to the effect of spectral and acquisition variations, as well as scenery changes. The changes occurring between the HRIs $\tensor{Z}_h$ and $\tensor{Z}_m$ are represented using an additive tensor $\tensor{\Psi}$ which captures variability and changes, and both the HRI and the variability tensors are assumed to have low Tucker rank.}}
    \label{fig:gen_diagram}
\end{figure}

\section{Proposed model and its undeterminacies}
\label{sec:imaging_mdl_identif}

\subsection{The imaging model}

Let an HSI with high spectral resolution and low spatial resolution be represented as an order-3 tensor $\tensor{Y}_h\in\amsmathbb{R}^{N_1\times N_2\times L_h}$, where $N_1$ and $N_2$ are the spatial and $L_h$ the spectral dimensions. 
Similarly, an MSI with high spatial and low spectral resolution is denoted by an order-3 tensor $\tensor{Y}_m\in\amsmathbb{R}^{M_1\times M_2\times L_m}$, where $M_1>N_1$ and $M_2>N_2$ are the spatial and $L_m<L_h$ the spectral dimensions.
%
%
Both the HSI and the MSI are assumed to be degraded versions of a tensor $\tensor{Z}\in\amsmathbb{R}_+^{M_1\times M_2\times L_h}$, with high spectral and spatial resolutions.
This degradation process is commonly described as~\cite{kanatsoulis2018hyperspectralSRR_coupledCPD,prevost2019coupledTucker_hyperspectralSRR,prevost2020coupledTucker_hyperspectralSRR_TSP,li2018hs_ms_fusionTensorFactorization}:
\begin{align}
    \tensor{Y}_h &= \tensor{Z} \times_1 \bP_1 \times_2 \bP_2 + \tensor{E}_h \,,
    \label{eq:tensor_obs_mdl_hsi_novariability}
    \\
    \tensor{Y}_m &= \tensor{Z} \times_3 \bP_3 + \tensor{E}_m  \,,
    \label{eq:tensor_obs_mdl_msi_novariability}
\end{align}
where tensors $\tensor{E}_m\in\amsmathbb{R}^{M_1\times M_2\times L_m}$ and $\tensor{E}_h\in\amsmathbb{R}^{N_1\times N_2\times L_h}$ represent additive noise.
The matrix $\bP_3\in\amsmathbb{R}^{L_m\times L_h}$ contains the spectral response functions (SRF) of each band of the multispectral sensor, and matrices $\bP_1\in\amsmathbb{R}^{N_1\times M_1}$ and $\bP_2\in\amsmathbb{R}^{N_2\times M_2}$ represent the spatial blurring and downsampling in the hyperspectral sensor, which we assume to be separable for each spatial dimension as previously done in, e.g.,~\cite{kanatsoulis2018hyperspectralSRR_coupledCPD,prevost2019coupledTucker_hyperspectralSRR,prevost2020coupledTucker_hyperspectralSRR_TSP,li2018hs_ms_fusionTensorFactorization}. 
\corange{Note that since the mode-$k$ product obeys~\eqref{eq:mode_k_prop1}, the choice of the ordering of $\bP_1$ and $\bP_2$ in~\eqref{eq:tensor_obs_mdl_hsi_novariability} does not affect the result.}
To make notation more convenient, we also denote the (linear) spatial and spectral degradation operators more compactly as
\begin{align}
    \opPspat(\tensor{T}) &= \tensor{T} \times_1 \bP_1 \times_2 \bP_2 \,,
    \label{eq:operatorNotation_a}
    \\
    \opPspec(\tensor{T}) &= \tensor{T} \times_3 \bP_3 \,.
    \label{eq:operatorNotation_b}
\end{align}



Most previous works consider that $\tensor{Y}_h$ and $\tensor{Y}_m$ are acquired under the same conditions, implicitly assuming that no variability \corange{or changes} occur between the images.
\corange{However, when the HSI and MSI are not acquired from the same mission/instrument and at the same time, the scene which underlies the (degraded) observations $\tensor{Y}_h$ and $\tensor{Y}_m$ can be subject to significant changes, referred to as {\color{magenta} inter-image} variability\footnote{\color{magenta} Here, variability should not be confused with the spectral variability considered in \cite{borsoi2020variabilityReview}, which focuses on inter-pixel endmembers variations.}, which include spatial and spectral variations as illustrated in Fig.~\ref{fig:gen_diagram}. Spectral variations originate from different \cmag{chemical}, atmospheric, illumination or seasonal conditions between the scenes~\cite{somers2011variabilityReview,Zare-2014-ID324-variabilityReview,borsoi2020variabilityReview}, and are typical even under short acquisition time differences. Spatial variations, on the other hand, occur due to, e.g., some regions of the scene being affected unequally by seasonal effects (which are strongly material-dependent~\cite{borsoi2020variabilityReview}) or due to the sudden insertion/removal of an object~\cite{liu2019reviewCD_GRSM}. Spatial variations  can be very prominent when large acquisition time differences are considered.
These effects are not accounted for in the majority of the existing algorithms, what motivates the development of more flexible models.
}

Recently, spatially uniform spectral variability has been considered in~\cite{Borsoi_2018_Fusion}. The image fusion problem was formulated as a matrix factorization problem, and the (multiplicative) spectral variability as well as the spatial coefficients were estimated from the observed images.
However, this work still did not address two fundamental problems: 1) How to account for both spatial and spectral variability and 2) what theoretical guarantees can be offered for the recovery of the HRI and (possibly) of the variability factors under these more challenging conditions.

To address these issues, we adopt a more general approach by considering two different HRIs $\tensor{Z}_h\in\amsmathbb{R}_+^{M_1\times M_2\times L_h}$ and $\tensor{Z}_m\in\amsmathbb{R}_+^{M_1\times M_2\times L_h}$, both with high spectral and spatial resolutions, underlying the observed HSI and the MSI, respectively. This leads to the following extension of model~\eqref{eq:tensor_obs_mdl_hsi_novariability}--\eqref{eq:tensor_obs_mdl_msi_novariability}:
\begin{align}
    \tensor{Y}_h &= \opPspat(\tensor{Z}_h) + \tensor{E}_h \,,
    \label{eq:tensor_obs_mdl_hsi_0}
    \\
    \tensor{Y}_m &= \opPspec(\tensor{Z}_m) + \tensor{E}_m  \,.
    \label{eq:tensor_obs_mdl_msi_0}
\end{align}


{\color{magenta} The tensors $\tensor{Z}_h$ and $\tensor{Z}_m$ \cdgreen{represent} the underlying HRIs of the observed scene under different acquisition conditions and at possibly different times. To account for inter-image variability, the HRIs are related to each other as follows:}
\begin{align}
    \tensor{Z}_m = \corange{\tensor{Z}_h} + \tensor{\Psi},
    \label{eq:hr_hsi_msi_add_mdl_0}
\end{align}
where $\tensor{\Psi}\in\amsmathbb{R}^{M_1\times M_2\times L_h}$ is an additive variability tensor representing changes between the scenes. 
%
{\color{magenta}By introducing $\tensor{\Psi}$, model~\eqref{eq:hr_hsi_msi_add_mdl_0} makes the inter-image variability between the HSI and the MSI explicit.}

Considering the variability model~\eqref{eq:hr_hsi_msi_add_mdl_0} along with~\eqref{eq:tensor_obs_mdl_hsi_0}--\eqref{eq:tensor_obs_mdl_msi_0}, we obtain the following observation model for the acquired HSI and MSI:
\begin{align}
    \tensor{Y}_h &= \opPspat(\corange{\tensor{Z}_h}) + \tensor{E}_h \,,
    \label{eq:tensor_obs_mdl_hsi}
    \\
    \tensor{Y}_m &= \opPspec(\corange{\tensor{Z}_h} + \tensor{\Psi}) + \tensor{E}_m.
    \label{eq:tensor_obs_mdl_msi_add}
\end{align}
{\color{magenta}In the following, $\tensor{Z}_h$ and $\tensor{\Psi}$ will be referred to as the HRI and the variability tensor, respectively.}

\subsection{The image fusion problem and its undeterminacies}

The image fusion problem in this case consists in recovering~$\tensor{\Psi}$ and $\corange{\tensor{Z}_h}$ from the observed images $\tensor{Y}_h$ and $\tensor{Y}_m$. More precisely,
\begin{align} \label{eq:HS_MS_fus_var_prob}
\Bigg\{
\begin{split}
    & \text{find } \corange{\tensor{Z}_h}\in\Omega_Z \text{ and }  \tensor{\Psi}\in\Omega_{\Psi},
    \\
    & \text{such that equations~\eqref{eq:tensor_obs_mdl_hsi}--\eqref{eq:tensor_obs_mdl_msi_add} are satisfied.}
\end{split}
\end{align}
The sets $\Omega_Z\subseteq\amsmathbb{R}^{M_1\times M_2\times L_h}$  and $\Omega_{\Psi}\subseteq\amsmathbb{R}^{M_1\times M_2\times L_h}$ denote prior information about the HRI and the variability factor, respectively.

Since the number of unknowns is significantly greater than the number of observations, problem~\eqref{eq:HS_MS_fus_var_prob} is severely ill-posed and additional \emph{a priori} information about the structure of $\corange{\tensor{Z}_h}$ and $\tensor{\Psi}$ must be introduced through the sets $\Omega_Z$ and $\Omega_{\Psi}$ in order to obtain a stable recovery. 
Common information that has been used to construct $\Omega_Z$ includes spatial \mbox{(piecewise-)} smoothness~\cite{simoes2015HySure}, low matrix (spectral) rank~\cite{yokoya2012coupledNMF,Borsoi_2018_Fusion}, low tensor rank~\cite{kanatsoulis2018hyperspectralSRR_coupledCPD,prevost2020coupledTucker_hyperspectralSRR_TSP,xu2019nonlocalTensorDecompImageFusion}, and non-local spatial information~\cite{li2018hs_ms_fusionTensorFactorization}.

The choice of prior information in $\Omega_Z$ and $\Omega_{\Psi}$ turns to the question of whether assuming additional structure over the pair $(\corange{\tensor{Z}_h},\tensor{\Psi})$ makes these variables \emph{identifiable} from the observations $(\tensor{Y}_{h},\tensor{Y}_{m})$.
Recent works in HS-MS image fusion advocates for a low-rank tensor model~\cite{kanatsoulis2018hyperspectralSRR_coupledCPD,prevost2020coupledTucker_hyperspectralSRR_TSP,prevost2019coupledTucker_hyperspectralSRR}. However, the case at hand is more challenging because of the additional variability $\tensor{\Psi}$, which makes the model more ambiguous.


In many inverse problems such as matrix or tensor factorization, dictionary learning and blind deconvolution, identifiability of the underlying variables often can only be defined up to some fundamental ambiguities.
Transformation groups and equivalence classes~\cite{li2016identifiabilityBilinearInverseProbs} can be used to precisely define which sets of solutions can generate each possible observations.
These ideas can be leveraged to characterize some of the fundamental ambiguities associated with the model~\eqref{eq:tensor_obs_mdl_hsi}--\eqref{eq:tensor_obs_mdl_msi_add}, and to provide insights into the development of efficient algorithms.
First, we will show that the presence of $\tensor{\Psi}$ makes the model fundamentally ambiguous, as the content in $\corange{\tensor{Z}_h}$ cannot be easily distinguished from that of $\tensor{\Psi}$. Moreover, we will define an equivalence class that characterizes the sets of images $\corange{\tensor{Z}_h}$ and factors $\tensor{\Psi}$ which are certain to result in different observed HSI and MSI.
This gives us insight into what kind of structure from these variables can be recovered from $(\tensor{Y}_{h},\tensor{Y}_{m})$.

\corange{To proceed, let us first denote by $\mathscr{A}:(\tensor{Z}_h,\tensor{\Psi})\mapsto(\tensor{Y}_h,\tensor{Y}_m)$ the operator which describes the degradation process in equations~\eqref{eq:tensor_obs_mdl_hsi}--\eqref{eq:tensor_obs_mdl_msi_add}.} 
\corange{By representing operators $\opPspat$, $\opPspec$ and $\mathscr{A}$ in matrix form as $\widetilde{\bP}_{1,2}\in\amsmathbb{R}^{N_1N_2L_h \times M_1M_2L_h}$, $\widetilde{\bP}_{3}\in\amsmathbb{R}^{M_1M_2L_m \times M_1M_2L_h}$ and $\bA\in\amsmathbb{R}^{(N_1N_2L_h+M_1M_2L_m)\times 2M_1M_2L_h}$, respectively, we can write the model~\eqref{eq:tensor_obs_mdl_hsi}--\eqref{eq:tensor_obs_mdl_msi_add} in the noiseless case ($\tensor{E}_h = \tensor{E}_m = \tensor{0}$) equivalently as: }
\begin{align}
    \begin{bmatrix} 
    \vect(\tensor{Y}_h) \\ \vect(\tensor{Y}_m)
    \end{bmatrix}
    = \underbrace{
    \begin{bmatrix} 
    \widetilde{\bP}_{1,2} & \cb{0} \\
    \widetilde{\bP}_{3} & \widetilde{\bP}_{3}
    \end{bmatrix}
    }_{\bA}
    \begin{bmatrix} 
    \vect(\tensor{Z}_h) \\ \vect(\tensor{\Psi})
    \end{bmatrix} \,.
    \label{eq:thm1_mtxForm_1}
\end{align}

\corange{Define also the equivalence relation $\simZPsi$ based on operator $\mathscr{A}$ as follows:
\begin{align}
    (\tensor{Z}_h,\tensor{\Psi}) \simZPsi (\tensor{Z}_h',\tensor{\Psi}')  \iff  \mathscr{A}(\tensor{Z}_h,\tensor{\Psi}) - \mathscr{A}(\tensor{Z}_h',\tensor{\Psi}') =\tensor{0}
    \label{eq:thm_prel2_eq_rel}
\end{align}
and its associated equivalence class (EC) $[(\tensor{Z}_h,\tensor{\Psi})]_{\simZPsi}$ as
\begin{align}
    [(\tensor{Z}_h,\tensor{\Psi})]_{\simZPsi} & = \big\{\tensor{X}\in\Omega_Z\times\Omega_{\Psi}\,:\,\tensor{X} \simZPsi (\tensor{Z}_h,\tensor{\Psi})\big\}.
    \label{eq:thm_prel2_eq_class}
\end{align}
Now, we are ready to present the following result.}

\begin{theorem} \label{thm:identifiability_ECc_psi}
Suppose that the observation noise is zero (i.e., $\tensor{E}_h=\tensor{E}_m=\tensor{0}$) and that $\Omega_Z=\amsmathbb{R}^{M_1\times M_2\times L_h}_+$ and $\Omega_{\Psi}=\amsmathbb{R}^{M_1\times M_2\times L_h}$. 
Then, given a set of HSI and MSI observations $(\tensor{Y}_h,\tensor{Y}_m)$\corange{, the following is verified:}
\begin{itemize}
    \item[a)] \corange{If operator $\mathscr{A}$ has nontrivial nullspace (e.g., if $L_m<L_h$ or if $N_1N_2<M_1M_2$), then the pair $(\tensor{Z}_h,\tensor{\Psi})$ cannot be uniquely identified from the observations $(\tensor{Y}_h,\tensor{Y}_m)$.}

    \item[b)] There is only one (unique) equivalence class $[(\tensor{Z}_0,\tensor{\Psi}_0)]_{\simZPsi}$ containing HR images and scaling factors $(\corange{\tensor{Z}_h},\tensor{\Psi})$ that can generate~$(\tensor{Y}_h,\tensor{Y}_m)$ according to model~\eqref{eq:tensor_obs_mdl_hsi}--\eqref{eq:tensor_obs_mdl_msi_add}. In other words, $(\corange{\tensor{Z}_h},\tensor{\Psi})$ can be identified uniquely up to $[(\tensor{Z}_0,\tensor{\Psi}_0)]_{\simZPsi}$.
\end{itemize}

\end{theorem}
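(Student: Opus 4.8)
The plan is to handle the two claims by quite different means: part~(b) is almost entirely a consequence of how the relation $\simZPsi$ is defined, whereas part~(a) is a genuine non-uniqueness statement that I would reduce to a dimension count on $\bA$.

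For part~(b), I would argue straight from~\eqref{eq:thm_prel2_eq_rel}. Since the observations are assumed to come from the model~\eqref{eq:tensor_obs_mdl_hsi}--\eqref{eq:tensor_obs_mdl_msi_add}, at least one feasible pair $(\tensor{Z}_0,\tensor{\Psi}_0)\in\Omega_Z\times\Omega_\Psi$ generates $(\tensor{Y}_h,\tensor{Y}_m)$, giving existence. The key observation is that, because the range of $\mathscr{A}$ is a vector space, $\mathscr{A}(\tensor{Z}_h,\tensor{\Psi})-\mathscr{A}(\tensor{Z}_0,\tensor{\Psi}_0)=\tensor{0}$ is the same statement as $\mathscr{A}(\tensor{Z}_h,\tensor{\Psi})=\mathscr{A}(\tensor{Z}_0,\tensor{\Psi}_0)$. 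Hence a feasible pair generates $(\tensor{Y}_h,\tensor{Y}_m)$ \emph{if and only if} it is $\simZPsi$-equivalent to $(\tensor{Z}_0,\tensor{\Psi}_0)$, so the set of all generating pairs coincides exactly with the single equivalence class $[(\tensor{Z}_0,\tensor{\Psi}_0)]_{\simZPsi}$ of~\eqref{eq:thm_prel2_eq_class}. Any two generating pairs are therefore mutually equivalent, and no distinct second class can contain a generating pair, which is the asserted uniqueness.

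For part~(a), I would produce two distinct feasible pairs with the same image under $\mathscr{A}$ by exhibiting a nonzero element of $\ker\bA$. Using the matrix form~\eqref{eq:thm1_mtxForm_1}, a pair $(\delta\tensor{Z},\delta\tensor{\Psi})$ lies in $\ker\bA$ exactly when $\widetilde{\bP}_{1,2}\vect(\delta\tensor{Z})=\cb{0}$ and $\widetilde{\bP}_3\big(\vect(\delta\tensor{Z})+\vect(\delta\tensor{\Psi})\big)=\cb{0}$. A dimension count then disposes of the two stated cases: if $L_m<L_h$ then $\widetilde{\bP}_3$ has more columns than rows, so I may take $\delta\tensor{Z}=\tensor{0}$ together with any nonzero $\delta\tensor{\Psi}$ satisfying $\widetilde{\bP}_3\vect(\delta\tensor{\Psi})=\cb{0}$; if $N_1N_2<M_1M_2$ then $\widetilde{\bP}_{1,2}$ likewise has a nontrivial kernel, so I may take a nonzero $\delta\tensor{Z}$ with $\widetilde{\bP}_{1,2}\vect(\delta\tensor{Z})=\cb{0}$ and set $\delta\tensor{\Psi}=-\delta\tensor{Z}$, which annihilates the second block. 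In either case $(\tensor{Z}_h,\tensor{\Psi})$ and $(\tensor{Z}_h+\delta\tensor{Z},\tensor{\Psi}+\delta\tensor{\Psi})$ generate identical observations while being distinct, so $(\tensor{Z}_h,\tensor{\Psi})$ cannot be recovered uniquely.

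The hard part will be respecting the nonnegativity constraint $\Omega_Z=\amsmathbb{R}^{M_1\times M_2\times L_h}_+$, since the perturbed HRI $\tensor{Z}_h+\delta\tensor{Z}$ must itself stay nonnegative. This is automatic in the $L_m<L_h$ case, where $\delta\tensor{Z}=\tensor{0}$ leaves $\tensor{Z}_h$ untouched and only the unconstrained factor $\tensor{\Psi}$ is moved. In the $N_1N_2<M_1M_2$ case I would instead rescale $\delta\tensor{Z}$ by a small $\alpha\neq0$: the admissible set of $\alpha$ is a closed interval containing $0$, and it degenerates to the single point $\{0\}$ only when $\tensor{Z}_h$ vanishes precisely along the sign changes of $\delta\tensor{Z}$, so a nonzero feasible perturbation exists outside such degenerate configurations. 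I expect this nonnegativity bookkeeping to be the only delicate point, as both parts otherwise reduce to elementary linear-algebraic facts about $\ker\bA$ and the fibers of $\mathscr{A}$.
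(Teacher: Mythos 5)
Your proposal follows essentially the same route as the paper's own proof: part (b) is deduced from the linearity of $\mathscr{A}$ together with the definition of $\simZPsi$ (the paper phrases this through the kernel characterization $[(\tensor{Z}_h,\tensor{\Psi})]_{\simZPsi}=\{(\tensor{Z}_h,\tensor{\Psi})+\tensor{X}:\tensor{X}\in\ker(\mathscr{A})\}$, which is the same observation), and part (a) exhibits nullspace elements of $\opPspat$ and $\opPspec$ exactly as the paper does, merely written in the matrix form~\eqref{eq:thm1_mtxForm_1} rather than in operator notation. The nonnegativity bookkeeping you single out as the delicate point is in fact ignored entirely in the paper's proof, which constructs the confusable pairs freely (i.e., reads the claim existentially over feasible pairs); your scaling argument, with the base HRI chosen away from the degenerate zero configurations, only makes the argument more careful, not different.
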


\begin{proof}
\corange{\textbf{Proof of a):}} \corange{Due to the special structure of matrix $\bA$ in~\eqref{eq:thm1_mtxForm_1}, is is clear that $\rank(\bA)=\rank(\widetilde{\bP}_{1,2})+\rank(\widetilde{\bP}_{3})\leq 2M_1M_2L_h$. Thus, if $\mathscr{A}$ has nontrivial nullspace then either $\opPspat$ or $\opPspec$ have nontrivial nullspace.}
If the operator $\opPspat$ has nontrivial nullspace, then we can find $\corange{\tensor{Z}_h}$, $\corange{\tensor{Z}_h'}$, different from one another, such that
\begin{align}
    \opPspat(\corange{\tensor{Z}_h}) & = \opPspat(\corange{\tensor{Z}_h'}) \,,
    \label{eq:thm1_i}
\end{align}
implying that $\tensor{Y}_h=\tensor{Y}_h'$. Now, we can always find $\tensor{\Psi}$ and $\tensor{\Psi}'$ satisfying
\begin{align}
    \corange{\tensor{Z}_h} + \tensor{\Psi} & = \corange{\tensor{Z}_h'} + \tensor{\Psi}',
    \label{eq:thm1_ii}
\end{align}
which implies that $\tensor{Y}_m=\tensor{Y}_m'$ (i.e., the model is not identifiable). 

Similarly, if operator $\opPspec$ has nontrivial nullspace, then suppose we select $\corange{\tensor{Z}_h}=\corange{\tensor{Z}_h'}$. This makes $\tensor{Y}_h=\tensor{Y}_h'$. Then, we can select $\tensor{\Psi}$ and $\tensor{\Psi}'$, distinct from one another, satisfying
\begin{align}
    \opPspec(\tensor{\Psi}) - \opPspec(\tensor{\Psi}') = \tensor{0} \,,
    \label{eq:thm1_iii}
\end{align}
where $\tensor{0}$ is the tensor of zeros. Since $\corange{\tensor{Z}_h}=\corange{\tensor{Z}_h'}$, this leads to:
\begin{align}
    \opPspec(\corange{\tensor{Z}_h}) & = \opPspec(\corange{\tensor{Z}_h'})
    \nonumber\\
    & = \opPspec(\corange{\tensor{Z}_h'}) + \opPspec(\tensor{\Psi}) - \opPspec(\tensor{\Psi}') \,,
    \label{eq:thm1_iv}
\end{align}
which also implies that $\tensor{Y}_m=\tensor{Y}_m'$ (i.e., the model is not identifiable).

\corange{\textbf{Proof of b):}} Note that since there is no additive noise, $\mathscr{A}$ is a linear operator and, thus, satisfies the following relation:
\begin{align}
    \mathscr{A}(\corange{\tensor{Z}_h},\tensor{\Psi}) - \mathscr{A}(\corange{\tensor{Z}_h'},\tensor{\Psi}') =
    \mathscr{A}(\corange{\tensor{Z}_h}-\corange{\tensor{Z}_h'},\tensor{\Psi}-\tensor{\Psi}') \,.
\end{align}
By inspecting the definition of the equivalence relation in~\eqref{eq:thm_prel2_eq_rel}, it can be seen that the equivalence class in~\eqref{eq:thm_prel2_eq_class} is characterized by the
kernel of the operator $\mathscr{A}$, and can also be written as:
\begin{align}
    [(\corange{\tensor{Z}_h},\tensor{\Psi})]_{\simZPsi} 
    & = \big\{(\corange{\tensor{Z}_h},\tensor{\Psi}) + \tensor{X} \,:\, \tensor{X} \in \ker(\mathscr{A})\big\}.
    \label{eq:thm_prel2_eq_class2}
\end{align}

Now, suppose that we have two sets of HR images and variability factors belonging to different ECs, i.e., $(\corange{\tensor{Z}_h},\tensor{\Psi})\in[(\tensor{Z}_0,\tensor{\Psi}_0)]_{\simZPsi}$, $(\corange{\tensor{Z}_h'},\tensor{\Psi}')\in[(\tensor{Z}_0',\tensor{\Psi}_0')]_{\simZPsi}$, with $[(\tensor{Z}_0,\tensor{\Psi}_0)]_{\simZPsi}\neq[(\tensor{Z}_0',\tensor{\Psi}_0')]_{\simZPsi}$. Comparing the observations $(\tensor{Y}_h,\tensor{Y}_m)$ and $(\tensor{Y}_h',\tensor{Y}_m')$ generated by elements of each EC, we have:
\begin{align}
    (\tensor{Y}_h,\tensor{Y}_m) 
    &= \mathscr{A}((\corange{\tensor{Z}_h},\tensor{\Psi}) + \tensor{X})
    \nonumber \\
    & \neq \mathscr{A}((\corange{\tensor{Z}_h'},\tensor{\Psi}')  + \tensor{X}')
    =  (\tensor{Y}_h',\tensor{Y}_m') \,,
\end{align}
for all $\tensor{X},\tensor{X}'\in\ker(\mathscr{A})$. Thus, elements selected from different ECs will always lead to different observations, ensuring that the EC is identifiable.
\end{proof}


Intuitively, \corange{item a)} shows that variations in $\tensor{\Psi}$ that occur in the nullspace of operators $\opPspat$ and $\opPspec$ are not reflected in the corresponding observations $(\tensor{Y}_h,\tensor{Y}_m)$ in such a way that they can be differentiated from possible changes in $\corange{\tensor{Z}_h}$.
More generally, changes of $\tensor{\Psi}$ and $\corange{\tensor{Z}_h}$ that occur in the nullspace of matrix $\bA$ do not affect $(\tensor{Y}_h,\tensor{Y}_m)$\corange{, which is clear from~\eqref{eq:thm1_mtxForm_1}}. This notion can be further extended by noting that changes occurring in the column space of $\bA$ will certainly lead to different observations, which is made precise in \corange{item b)}.

\corange{Item b) in} Theorem~\ref{thm:identifiability_ECc_psi} allows us to characterize the ambiguities in the model \corange{in more detail}. However, it is important to consider the characteristics of $\ker(\mathscr{A})$ in our problem to better understand the recoverability of the variability factor $\tensor{\Psi}$. Let us consider the model in~\eqref{eq:thm1_mtxForm_1}, and two sets of variables $(\corange{\tensor{Z}_h},\tensor{\Psi})\simZPsi(\corange{\tensor{Z}_h'},\tensor{\Psi}')$, belonging to the same EC.
It can be seen that to generate the same observations, the HR images need to satisfy $\corange{\tensor{Z}_h}-\corange{\tensor{Z}_h'}\in\ker(\opPspat)$, while the variability factors have to satisfy $\opPspec(\corange{\tensor{Z}_h}-\corange{\tensor{Z}_h'})=-\opPspec(\tensor{\Psi}-\tensor{\Psi}')$. Therefore, the general form of the difference between the variability factors inside each equivalence class is of the form:
\begin{align}
    \tensor{\Psi}-\tensor{\Psi}' = -\underbrace{(\corange{\tensor{Z}_h}-\corange{\tensor{Z}_h'})}_{\in\ker(\opPspat)} + \underbrace{\tensor{X}}_{\in\ker(\opPspec)} \,.
    \label{eq:psi_EC_general_form}
\end{align}
The set of all possible $\tensor{\Psi}-\tensor{\Psi}'$ satisfying~\eqref{eq:psi_EC_general_form} is the sum of $\ker(\opPspec)$ and $\ker(\opPspat)$, which is given by $\ker(\opPspec\circ\opPspat)$. We can readily see that $\tensor{\Psi}$ cannot be recovered from the observations.
Only the spectrally degraded variability factors $\opPspec(\tensor{\Psi})$ can be uniquely recovered (which comes ``for free'' with the recovery of $\corange{\tensor{Z}_h}$ since it can be computed as $\opPspec(\tensor{\Psi})=\tensor{Y}_m-\opPspec(\corange{\tensor{Z}_h})$). This makes it sufficient to study the capability of an algorithm to recover $\corange{\tensor{Z}_h}$ in our model. Since the matrices $\bP_i$, $i\in\{1,2,3\}$ are essentially low-pass filtering and downsampling operators, their nullspaces intuitively encode high-frequency information along each tensor mode. Thus, only the smooth structure of $\tensor{\Psi}$ can be identified uniquely from observations~$(\tensor{Y}_h,\tensor{Y}_m)$, since otherwise we cannot separate the effects of $\tensor{\Psi}$ from $\corange{\tensor{Z}_h}$.

We also note that each EC in~\eqref{eq:thm_prel2_eq_class}, which contains all factors $\tensor{\Psi}$ whose difference lies in the nullspace of the combined operator $\opPspat\circ\opPspec$, is strictly larger than if we considered changes that occur in the nullspace of each of these operators individually (i.e., $\opPspat$ and $\opPspec$). 

Theorem~\ref{thm:identifiability_ECc_psi} guarantees that tensors belonging to different ECs will result in different observations, which is the minimal requirement for having identifiable $\corange{\tensor{Z}_h}$ and $\opPspec(\tensor{\Psi})$. However, the coresponding inverse problem still remains ill-posed as the number of {\color{red}unknowns} is greater than the number of observations.
Thus, stronger identifiability conditions cannot be obtained unless we provide stricter a priori characterizations of the sets $\Omega_Z$ and $\Omega_{\Psi}$.

\subsection{A Low-Multilinear-Rank Model}
\label{sec:low_rank_model}

One possible condition that can be imposed on the structures of both $\Omega_{\Psi}$ and $\Omega_Z$ is the low-rank tensor model. This kind of structure makes it possible to obtain identifiability and exact recovery guarantees for problem~\eqref{eq:HS_MS_fus_var_prob}, where spatial and spectral variabilities are present. Moreover, it also makes the problem well-posed and easier to solve since the number of unknowns becomes smaller than the amount of available data.

Suppose that $\corange{\tensor{Z}_h}$ and $\tensor{\Psi}$ have multilinear ranks $(K_{Z,1},K_{Z,2},K_{Z,3})$ and $(K_{\Psi,1},K_{\Psi,2},K_{\Psi,3})$, respectively. This means that they can be represented as 
\begin{align}
    \corange{\tensor{Z}_h} &= \big\ldbrack\tensor{G}_{Z}; \bB_{Z,1}, \bB_{Z,2}, \bB_{Z,3} \big\rdbrack \,,
    \label{eq:tucker_mdl_Z}
    \\
    \tensor{\Psi} &= \big\ldbrack\tensor{G}_{\Psi}; \bB_{\Psi,1}, \bB_{\Psi,2}, \bB_{\Psi,3} \big\rdbrack \,,
    \label{eq:tucker_mdl_Psi}
\end{align}
where $\bB_{Z,i}\in\amsmathbb{R}^{M_i\times K_{Z,i}}$, $\bB_{\Psi,i}\in\amsmathbb{R}^{M_i\times K_{\Psi,i}}$, $i\in\{1,2\}$, $\bB_{Z,3}\in\amsmathbb{R}^{L_h\times K_{Z,3}}$, $\bB_{\Psi,3}\in\amsmathbb{R}^{L_h\times K_{\Psi,3}}$ are the factor matrices and $\tensor{G}_{Z}\in\amsmathbb{R}^{K_{Z,1}\times K_{Z,2}\times K_{Z,3}}$, $\tensor{G}_{\Psi}\in\amsmathbb{R}^{K_{\Psi,1}\times K_{\Psi,2}\times K_{\Psi,3}}$ are the core tensors.

Our objective is to study the identifiability and exact recovery of these variables given the observation model in~\eqref{eq:tensor_obs_mdl_hsi}--\eqref{eq:tensor_obs_mdl_msi_add}.
Using this model\corange{, and applying the definition of the multilinear product and the properties of the mode-$k$ product defined in Section~\ref{sec:back_defs}}, the noiseless case of the degradation model~\eqref{eq:tensor_obs_mdl_hsi}--\eqref{eq:tensor_obs_mdl_msi_add} can be written as
\begin{align}
    \tensor{Y}_h ={}&  \big\ldbrack\tensor{G}_{Z}; \bP_1 \bB_{Z,1}, \bP_2 \bB_{Z,2}, \bB_{Z,3} \big\rdbrack \,,
    \label{eq:obs_mdl_hsi_lowrank}
    \\
    \tensor{Y}_m ={}& \big\ldbrack\tensor{G}_{Z}; \bB_{Z,1}, \bB_{Z,2}, \bP_3 \bB_{Z,3} \big\rdbrack 
    \nonumber \\
    & + \big\ldbrack\tensor{G}_{\Psi}; \bB_{\Psi,1}, \bB_{\Psi,2}, \bP_3 \bB_{\Psi,3} \big\rdbrack \,. 
    \label{eq:obs_mdl_msi_lowrank_btd}
\end{align}

Note that we can represent the multispectral image model in~\eqref{eq:obs_mdl_msi_lowrank_btd} equivalently using a standard Tucker model as:
\begin{align}
    \tensor{Y}_m ={}& \big\ldbrack \tensor{C}_{m}; \bC_{m,1}, \bC_{m,2}, \bP_3 \bC_{m,3} \big\rdbrack \,, 
    \label{eq:obs_mdl_msi_lowrank_hightucker}
\end{align}
where $\bC_{m,i}$, $i\in\{1,2,3\}$ and $\tensor{C}_m$ are the factor matrices and the core tensor of the MSI, which satisfy:
\begin{align}
    \tensor{C}_{m} &= \tensor{G}_{Z}\oplus\tensor{G}_{\Psi} \,,   \label{eq:tucker_mdl_msi_var_ext_core}
    \\
    \bC_{m,i} &= \big[\bB_{Z,i} \, \bB_{\Psi,i}\big],\,\,\, i\in\{1,2,3\} \,, \label{eq:tucker_mdl_msi_var_ext_factors}
\end{align}
{\color{red}where for two tensors $\tensor{A}$ and $\tensor{B}$, the binary operation $\tensor{A}\oplus\tensor{B}$ returns a block-diagonal tensor whose diagonal blocks are $\tensor{A}$ and $\tensor{B}$.}

\corange{The model in equations~\eqref{eq:tucker_mdl_Z}--\eqref{eq:tucker_mdl_msi_var_ext_factors} will be subsequently used in Sections~\ref{sec:alg1_algebraic} and~\ref{sec:alg2_optimizationBased} to develop two image fusion algorithms, one algebraic (faster, but with stringent rank constraints) and another based on an optimization procedure (which allows for higher rank values). In each case, a new algorithm will be presented followed by its recoverability guarantees. It should be noted that in practice, the HRI $\tensor{Z}_h$ and the variability tensor $\tensor{\Psi}$ can have high rank. Nevertheless, we will perform only a coupled tensor approximation, with which we are able to capture most of the energy of the data even with insufficient ranks. In practice, however, higher rank models will be preferred to ensure the data is well represented and to avoid the presence artifacts in the reconstructed HRI (which will be achieved with the method of Section~\ref{sec:alg2_optimizationBased}).}

\section{An algebraic algorithm}
\label{sec:alg1_algebraic}

Considering the model in Section~\ref{sec:low_rank_model}, the image fusion problem consists in estimating the factors and core tensor $\tensor{G}_Z$, $\bB_{Z,i}$, $i\in\{1,2,3\}$. However, if the values composing the multilinear rank of $\corange{\tensor{Z}_h}$ are sufficiently low, those variables can be computed by solving the following coupled system of equations:
\begin{align}
\begin{cases}
    \tensor{Y}_h &= \big\ldbrack\tensor{G}_{Z}; \bC_{h,1}, \bC_{h,2}, \bB_{Z,3} \big\rdbrack
    \\
    \tensor{Y}_m &= \big\ldbrack\tensor{C}_{m}; \bC_{m,1}, \bC_{m,2}, \bP_3 \bC_{m,3} \big\rdbrack
    \\
    \bC_{h,i} &= \bP_i \bB_{Z,i}, \,\,\, i\in\{1,2\}
    \\
    \bC_{m,i} &= \big[\bB_{Z,i},\bB_{\Psi,i}\big], \,\,\, i\in\{1,2,3\}
\end{cases},
\label{eq:alg1_coupled_eqs_formulation}
\end{align}
where $\bC_{h,i}$, $i\in\{1,2\}$ denote the spatial factor matrices of the HSI, and the HRI is obtained from the solution of~\eqref{eq:alg1_coupled_eqs_formulation} as $\corange{\tensor{Z}_h}=\ldbrack\tensor{G}_{Z};\bB_{Z,1},\bB_{Z,2},\bB_{Z,3}\rdbrack$. 



If we suppose that $K_{Z,i}+K_{\Psi,i}\leq N_i$, $i\in\{1,2\}$,~\eqref{eq:alg1_coupled_eqs_formulation} can be solved using an efficient, algebraic approach detailed in Alg.~\ref{alg:alg1}, which we call CT-STAR (Coupled Tucker decompositions for hyperspectral Super-resoluTion with vARiability). \corange{The basic intuition behind this algorithm is to use the correspondence between the mode-1 and mode-2 matricizations of the HSI and MSI in order to separate the HRI from the variability tensor when computing its factor matrices. More details will be provided in the following.}

\begin{algorithm} [thb]
\footnotesize
\SetKwInOut{Input}{Input}
\SetKwInOut{Output}{Output}
\caption{\mbox{Algebraic image fusion method (CT-STAR)}\label{alg:alg1}}
\Input{\mbox{Images $\tensor{Y}_h$, $\tensor{Y}_m$ ranks $K_{Z,i}$, $K_{\Psi,i}$, $i\in\{1,2,3\}$}}
\Output{HRI $\corange{\widehat{\tensor{Z}}_h}$, spectrally degraded variability factors $\opPspec(\widehat{\tensor{\Psi}})$}

Check if $K_{Z,i}+K_{\Psi,i}\leq N_i$, $i\in\{1,2\}$ \;

Compute $\widehat{\bC}_{h,3} = \tSVD_{K_{Z,3}}(\unfold{\bY_{\!h}}{3})$ \;

Compute $\widehat{\bC}_{m,i} = \tSVD_{K_{Z,i}+K_{\Psi,i}}(\unfold{\bY_{\!m}}{i})$ for $i\in\{1,2\}$ \;

Compute $\widetilde{\bQ}_i$, for $i\in\{1,2\}$, as $\widetilde{\bQ}_i = \big(\bP_i \widehat{\bC}_{m,i}\big)^{\dagger} \tSVD_{K_{Z,i}}(\unfold{\bY_{\!h}}{i})$\;

Compute $\widetilde{\bC}_{m,i}=\widehat{\bC}_{m,i} \widetilde{\bQ}_i$, for $i\in\{1,2\}$ \;

Compute $\widehat{\tensor{G}}_Z$ by solving
$(\widehat{\bC}_{h,3} \otimes \bP_2\widetilde{\bC}_{m,2} \otimes \bP_1\widetilde{\bC}_{m,1} )\vect(\tensor{G}_Z)=\vect(\tensor{Y}_h)$\;

Compute $\corange{\widehat{\tensor{Z}}_h}=\ldbrack\widehat{\tensor{G}}_{Z}; \widetilde{\bC}_{m,1}, \widetilde{\bC}_{m,2}, \widehat{\bC}_{h,3}\rdbrack$ \;

Compute $\opPspec(\widehat{\tensor{\Psi}})=\tensor{Y}_m-\corange{\widehat{\tensor{Z}}_h}\times_3\bP_3$ \;
\end{algorithm}

It is important to note that CT-STAR does not \cmag{enforce} the block diagonal structure of the core tensor of the MSI (described in~\eqref{eq:tucker_mdl_msi_var_ext_core}). The following theorem gives a constructive proof of exact recovery conditions from which Alg.~\ref{alg:alg1} is derived.

\begin{theorem} \label{thm:exacr_rec_alg1}
Suppose that the HRI $\corange{\tensor{Z}_h}$ and the variability tensor $\tensor{\Psi}$ have multilinear ranks $(K_{Z,1},K_{Z,2},K_{Z,3})$ and $(K_{\Psi,1},K_{\Psi,2},K_{\Psi,3})$, respectively, that $\tensor{Y}_h$ and {\color{red}$\tensor{Y}_m$} admit Tucker decompositions as denoted in~\eqref{eq:alg1_coupled_eqs_formulation}, that the observation noise is zero (i.e. $\tensor{E}_h=\tensor{0}$, $\tensor{E}_m=\tensor{0}$), and that
\begin{align}
    & \rank(\bP_i \bB_{Z,i}) = K_{Z,i}\,, \,\,\, i\in\{1,2\}
    \label{eq:thm3_rank_cond1}
    \\
    & \rank(\bP_i \bB_{\Psi,i}) \leq K_{\Psi,i}\,, \,\,\, i\in\{1,2\}
    \label{eq:thm3_rank_cond2}
    \\
    & \rank(\unfold{\bY_{\!h}}{i}) = K_{Z,i}\,, \,\,\, i\in\{1,2,3\}
    \label{eq:thm3_rank_cond3}
    \\
    & \rank(\unfold{\bY_{\!m}}{i}) = K_{Z,i}+K_{\Psi,i} \leq N_i\,, \,\,\, i\in\{1,2\}
    \label{eq:thm3_rank_cond4}
\end{align}
Then, if all columns in $\bP_i \bB_{Z,i}$ are linearly independent from those in $\bP_i \bB_{\Psi,i}$, for $i\in\{1,2\}$, Algorithm~\ref{alg:alg1} exactly recovers $\corange{\tensor{Z}_h}$ from the observations.
\end{theorem}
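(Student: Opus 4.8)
The plan is to show that each factor matrix produced by Algorithm~\ref{alg:alg1} spans the correct column space, and that the concluding linear solve then reconstructs $\tensor{Z}_h$ exactly. Writing $\vspan(\cdot)$ for the span of the columns of a matrix, the argument proceeds in four stages: (i) recover $\vspan(\bB_{Z,3})$ from $\tensor{Y}_h$; (ii) recover the joint span $\vspan([\bB_{Z,i},\bB_{\Psi,i}])$ from $\tensor{Y}_m$ for $i\in\{1,2\}$; (iii) use $\tensor{Y}_h$ to \emph{separate} the HRI-specific spatial subspace $\vspan(\bB_{Z,i})$ out of that joint span; and (iv) solve for the core and reassemble $\tensor{Z}_h$.

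For (i)--(ii) I would apply the matricization identities \eqref{eq:gen_tensor_mat_1}--\eqref{eq:gen_tensor_mat_3} to the Tucker forms \eqref{eq:obs_mdl_hsi_lowrank} and \eqref{eq:obs_mdl_msi_lowrank_hightucker}. The mode-3 unfolding of $\tensor{Y}_h$ reads $\unfold{\bY_{\!h}}{3}=\bB_{Z,3}\,\unfold{\bG_Z}{3}\,(\bP_2\bB_{Z,2}\otimes\bP_1\bB_{Z,1})^\top$; condition \eqref{eq:thm3_rank_cond3} forces its trailing factor to have full row rank, so $\vspan(\unfold{\bY_{\!h}}{3})=\vspan(\bB_{Z,3})$ and the $\tSVD$ in Step~2 returns $\widehat{\bC}_{h,3}=\bB_{Z,3}\bR_3$ with $\bR_3$ invertible. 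Applying the same reasoning to the mode-$i$ unfolding of $\tensor{Y}_m$ with condition \eqref{eq:thm3_rank_cond4} gives $\vspan(\widehat{\bC}_{m,i})=\vspan([\bB_{Z,i},\bB_{\Psi,i}])$ and shows $\widehat{\bC}_{m,i}$ has full column rank $K_{Z,i}+K_{\Psi,i}$ for $i\in\{1,2\}$.

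Stage (iii) is the heart of the proof and the step I expect to be the main obstacle. I would first note that, by \eqref{eq:thm3_rank_cond1} and \eqref{eq:thm3_rank_cond3}, $\tSVD_{K_{Z,i}}(\unfold{\bY_{\!h}}{i})$ spans $\vspan(\bP_i\bB_{Z,i})$. The crucial consequence of the linear-independence hypothesis, read together with \eqref{eq:thm3_rank_cond1}--\eqref{eq:thm3_rank_cond2}, is that $[\bP_i\bB_{Z,i},\bP_i\bB_{\Psi,i}]$ has full column rank $K_{Z,i}+K_{\Psi,i}$; in particular $\vspan(\bP_i\bB_{Z,i})\cap\vspan(\bP_i\bB_{\Psi,i})=\{\cb{0}\}$ and $\bP_i\bB_{\Psi,i}$ is injective. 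Because $\vspan(\bP_i\bB_{Z,i})\subseteq\vspan(\bP_i\widehat{\bC}_{m,i})$, the least-squares problem in Step~4 is consistent, so $\bP_i\widetilde{\bC}_{m,i}=\bP_i\widehat{\bC}_{m,i}\widetilde{\bQ}_i=\tSVD_{K_{Z,i}}(\unfold{\bY_{\!h}}{i})$. Since the columns of $\widetilde{\bC}_{m,i}$ lie in $\vspan(\widehat{\bC}_{m,i})$ I may write $\widetilde{\bC}_{m,i}=\bB_{Z,i}\bW_i^Z+\bB_{\Psi,i}\bW_i^\Psi$; applying $\bP_i$ and using that $\bP_i\widetilde{\bC}_{m,i}\in\vspan(\bP_i\bB_{Z,i})$ forces $\bP_i\bB_{\Psi,i}\bW_i^\Psi=\cb{0}$ by the trivial-intersection property, and injectivity of $\bP_i\bB_{\Psi,i}$ then gives $\bW_i^\Psi=\cb{0}$. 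Hence $\widetilde{\bC}_{m,i}=\bB_{Z,i}\bW_i^Z$, and since $\bP_i\widetilde{\bC}_{m,i}$ has full column rank $K_{Z,i}$ so does $\bW_i^Z$, yielding $\vspan(\widetilde{\bC}_{m,i})=\vspan(\bB_{Z,i})$. The delicate point is exactly this vanishing of $\bW_i^\Psi$: without full column rank of $\bP_i\bB_{\Psi,i}$ the variability contribution could persist inside $\ker(\bP_i)$ and corrupt the recovered factor, so the linear-independence hypothesis is indispensable here, and it is what lets the algorithm succeed even though it never enforces the block-diagonal core structure of \eqref{eq:tucker_mdl_msi_var_ext_core}.

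For stage (iv), the factors $\bP_1\widetilde{\bC}_{m,1}$, $\bP_2\widetilde{\bC}_{m,2}$ and $\widehat{\bC}_{h,3}$ all have full column rank, so their Kronecker product in Step~6 is injective and the linear system has the unique solution $\widehat{\tensor{G}}_Z$ satisfying $\ldbrack\widehat{\tensor{G}}_Z;\bP_1\widetilde{\bC}_{m,1},\bP_2\widetilde{\bC}_{m,2},\widehat{\bC}_{h,3}\rdbrack=\tensor{Y}_h$. Substituting $\widetilde{\bC}_{m,i}=\bB_{Z,i}\bW_i^Z$ and $\widehat{\bC}_{h,3}=\bB_{Z,3}\bR_3$ and comparing with \eqref{eq:obs_mdl_hsi_lowrank}, uniqueness of the core for full-column-rank factors identifies $\widehat{\tensor{G}}_Z=\ldbrack\tensor{G}_Z;(\bW_1^Z)^{-1},(\bW_2^Z)^{-1},\bR_3^{-1}\rdbrack$. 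Feeding this into the reconstruction of Step~7 and collapsing the change-of-basis matrices via property \eqref{eq:mode_k_prop2} gives $\widehat{\tensor{Z}}_h=\ldbrack\tensor{G}_Z;\bB_{Z,1},\bB_{Z,2},\bB_{Z,3}\rdbrack=\tensor{Z}_h$, which is the claimed exact recovery.
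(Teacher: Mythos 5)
Your proof is correct and follows essentially the same route as the paper's: truncated SVDs of the unfoldings recover the factor subspaces, the consistent linear system of Step~4 combined with the linear-independence hypothesis forces the variability component (your $\bW_i^{\Psi}$, the paper's $\overline{\bQ}_{\Psi,i}$) to vanish, and the full-column-rank Kronecker system then determines the core up to the same change-of-basis matrices, which cancel in the final reconstruction. Your explicit observation that the separation step needs $[\bP_i\bB_{Z,i},\,\bP_i\bB_{\Psi,i}]$ to have full column rank (so that $\bP_i\bB_{\Psi,i}$ is injective) is precisely the reading the paper relies on implicitly when it concludes $\overline{\bQ}_{\Psi,i}=\cb{0}$, so you have flagged, rather than missed, the one delicate point.
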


\begin{proof}
Let us compute matrices $\widehat{\bC}_{m,i}$, $i\in\{1,2\}$, and $\widehat{\bC}_{h,i}$, $i\in\{1,2,3\}$ as the left-singular vectors associated with the non-zero singular values of $\unfold{\bY_{\!m}}{i}$, $i\in\{1,2\}$ and $\unfold{\bY_{\!h}}{i}$, $i\in\{1,2,3\}$, respectively. 
Then, due to~\eqref{eq:thm3_rank_cond3}--\eqref{eq:thm3_rank_cond4} and to the non-uniqueness of matrix decomposition, these matrices satisfy:
\begin{align}
    \widehat{\bC}_{h,i}  &= \bP_i \bB_{Z,i} \bQ_{h,i}, \,\,\, i\in\{1,2\}
    \label{eq:thm3_4a}
    \\
    \widehat{\bC}_{h,3}  &= \bB_{Z,3} \bQ_{h,3}
    \label{eq:thm3_4b}
    \\
    \widehat{\bC}_{m,i}  &= \bC_{m,i} \bQ_{m,i}, \,\,\, i\in\{1,2\}
    \label{eq:thm3_4c}
\end{align}
for invertible matrices $\bQ_{h,i}\in\amsmathbb{R}^{K_{Z,i}\times K_{Z,i}}$, $i\in\{1,2,3\}$ and $\bQ_{m,i}\in\amsmathbb{R}^{(K_{Z,i}+K_{\Psi,i})\times (K_{Z,i}+K_{\Psi,i})}$, $i\in\{1,2\}$.

Now, the main problem caused by the presence of variability is that matrices $\bQ_{m,1}$ and $\bQ_{m,2}$ preclude us from distinguishing the factors $\bB_{Z,1}$ and $\bB_{Z,2}$, associated with $\corange{\tensor{Z}_h}$, from $\bB_{\Psi,1}$ and $\bB_{\Psi,2}$, associated with $\tensor{\Psi}$, using only information available in the MSI. These two become mixed in the spatial factors $\widehat{\bC}_{m,i}$.
Nonetheless, consider the relationship between spatial degradation of the factors estimated from the MSI and the spatial factors of the HSI:
\begin{align}
    \bP_i \widehat{\bC}_{m,i} &= \bP_i \bC_{m,i} \bQ_{m,i}
    \label{eq:thm3_5a}
    \\
    &= \big[\bP_i \bB_{Z,i}, \, \bP_i \bB_{\Psi,i}\big] \bQ_{m,i} \,.
    \label{eq:thm3_5b}
\end{align}
Now, let us compute matrices $\widetilde{\bQ}_i\in\amsmathbb{R}^{(K_{Z,i}+K_{\Psi,i})\times K_{Z,i}}$, $i\in\{1,2\}$ such that
\begin{align}
    & \widehat{\bC}_{h,i} = \bP_i \widehat{\bC}_{m,i} \widetilde{\bQ}_i \,.
    \label{eq:thm3_6a}
\end{align}
By partitioning the following matrix product as $\bQ_{m,i}\widetilde{\bQ}_i=[\overline{\bQ}_{Z,i}^\top,\overline{\bQ}_{\Psi,i}^\top]^\top$,~\eqref{eq:thm3_6a} can be represented as
\begin{align}
    \widehat{\bC}_{h,i} &= \bP_i \bB_{Z,i} \bQ_{h,i} 
    \nonumber \\
    &= \bP_i \bB_{Z,i} \overline{\bQ}_{Z,i} + \bP_i \bB_{\Psi,i} \overline{\bQ}_{\Psi,i} \,.
    \label{eq:thm3_7a}
\end{align}

Since all columns in $\bP_i \bB_{Z,i}$ are linearly independent from those in $\bP_i \bB_{\Psi,i}$, equality~\eqref{eq:thm3_6a} will be satisfied if and only if the result of the product $\widehat{\bC}_{m,i} \widetilde{\bQ}_i$ (i.e., the right hand side of~\eqref{eq:thm3_7a}) does not contain any nontrivial linear combination of the columns of $\bP_i \bB_{\Psi,i}$. Thus, $\overline{\bQ}_{\Psi,i}=\cb{0}$ and $\overline{\bQ}_{Z,i}=\bQ_{h,i}$ due to~\eqref{eq:thm3_rank_cond1} and~\eqref{eq:thm3_rank_cond3}.
This allows us to ``separate'' the variability and image factors as
\begin{align}
    \widetilde{\bC}_{m,i} &= \widehat{\bC}_{m,i} \widetilde{\bQ}_i
    \nonumber\\
    & = \bB_{Z,i} \bQ_{h,i} \,,
\end{align}
for $i\in\{1,2\}$.
Now, consider the vectorization of the HSI as:
\begin{align}
    (\widehat{\bC}_{h,3} \otimes \bP_2\widetilde{\bC}_{m,2} \otimes \bP_1\widetilde{\bC}_{m,1} )\vect(\tensor{G}_Z) = \vect(\tensor{Y}_h) \,.
    \label{eq:thm3_7b}
\end{align}
Since {\color{magenta} the matrix in the  left hand side of~\eqref{eq:thm3_7b}  has full column rank}, $\widehat{\tensor{G}}_Z$ can be uniquely recovered from this equation, and will satisfy $\widehat{\tensor{G}}_Z=\tensor{G}_Z\times_1\bQ_{h,1}^{-1}\times_2\bQ_{h,2}^{-1}\times_3\bQ_{h,3}^{-1}$.
The HRI and the spectrally degraded scaling factors are then finally recovered as:
\begin{align}
\begin{split}
    \corange{\widehat{\tensor{Z}}_h} &= \ldbrack\widehat{\tensor{G}}_{Z}; \widetilde{\bC}_{m,1}, \widetilde{\bC}_{m,2}, \widehat{\bC}_{h,3}\rdbrack
    \\
    &= \ldbrack \tensor{G}_{Z}; \bB_{Z,1}, \bB_{Z,2}, \bB_{Z,3}\rdbrack = \tensor{Z}
\end{split}
\end{align}
and
\begin{align} \label{eq:recovering_low_psi1}
    \widehat{\tensor{\Psi}\times_3\bP_3}=\tensor{Y}_m-\corange{\widehat{\tensor{Z}}_h}\times_3\bP_3 \,,
\end{align}
which completes the proof.
\end{proof}


\corange{Note that CT-STAR does not use the spectral degradation operation $\bP_3$ to recover the HRI $\tensor{Z}_h$, what makes it a spectrally ``blind'' algorithm.}
The CT-STAR algorithm is \corange{also} fast (see Section~\ref{sec:comp}), but only works for the cases where the ranks of the spatial modes of $\corange{\tensor{Z}_h}$ are smaller than the dimensions of the HSI, which is quite restrictive. \corange{This can make CT-STAR unsuited to process real images which can have high rank values, and motivates the search for a method with more flexibility in the selection of the ranks.} Moreover, both Alg.~\ref{alg:alg1} and Theorem~\ref{thm:exacr_rec_alg1}, in considering model~\eqref{eq:alg1_coupled_eqs_formulation}, made no assumptions about the (block diagonal) structure of the core tensor of the MSI. Although this led to more freedom from a modeling perspective, the recoverability conditions turned out to be restrictive.
\corange{In the following section, we will explore the block diagonal structure of $\tensor{C}_m$ using an optimization-based algorithm to address these limitations.}

\section{An optimization-based algorithm}
\label{sec:alg2_optimizationBased}

In this section, we pursue a different approach. Assume that model~\eqref{eq:obs_mdl_hsi_lowrank}--\eqref{eq:obs_mdl_msi_lowrank_btd} holds and that the values forming the multilinear ranks of both $\corange{\tensor{Z}_h}$ and $\tensor{\Psi}$ are sufficiently low so that $\tensor{Y}_m$ admits a block term decomposition (BTD) in the noiseless case~\cite{lathauwer2008tensor_BTD2_uniqueness}. We can then use uniqueness results thereof to guarantee the identifiability of $\corange{\tensor{Z}_h}$ under less restrictive conditions.

Let us consider the image fusion problem as the solution to the following optimization problem:
\begin{align}
     \mathop{\min}_{\bTheta} & \,\, J(\bTheta) \triangleq \Big\|\tensor{Y}_h - \big\ldbrack {\tensor{G}}_{Z}; \bP_1 {\bB}_{Z,1}, \bP_2 {\bB}_{Z,2}, {\bB}_{Z,3} \big\rdbrack \Big\|_F^2
    \nonumber\\
    & \hspace{-1ex} + \corange{\lambda}\,\bigg\|\tensor{Y}_m - \sum_{\iota\in\{Z,\Psi\}} \big\ldbrack {\tensor{G}}_{\iota}; {\bB}_{\iota,1}, {\bB}_{\iota,2}, \bP_3 {\bB}_{\iota,3} \big\rdbrack \bigg\|_F^2
    \label{eq:opt_prob_fus_btd}
\end{align}
where $\bTheta=\{{\tensor{G}}_{\iota},{\bB}_{\iota,i}:\iota\in\{Z,\Psi\}, \, i\in\{1,2,3\}\}$ \corange{and $\lambda\in\amsmathbb{R}_+$ is a fixed parameter which balances the contribution of each term in the cost function. In the following, we first describe a procedure to solve problem~\eqref{eq:opt_prob_fus_btd} in Section~\ref{sec:CBSTAR_opt}, and later provide exact recovery guarantees in Section~\ref{sec:CBSTAR_recovery}.}

\begin{algorithm} [thb]
\footnotesize
\SetKwInOut{Input}{Input}
\SetKwInOut{Output}{Output}
\caption{\mbox{Optimization-based image fusion (CB-STAR)}\label{alg:alg2}}
\Input{\mbox{Images $\tensor{Y}_h$,$\tensor{Y}_m$ ranks $K_{Z,i}$,$K_{\Psi,i}$, $i\in\{1,2,3\}$\corange{, iterations~$F$}}}
\Output{HRI $\corange{\widehat{\tensor{Z}}_h}$, spectrally degraded variability factors $\opPspec(\widehat{\tensor{\Psi}})$}

Initialize $\bTheta^{(0)}$ according to Section~\ref{sec:alg2_initialization}\;

\While{Stopping criteria is not satisfied}{


\corange{Compute $\tensor{G}_Z$ and $\bB_{Z,i}$, $i\in\{1,2,3\}$ by solving~\eqref{eq:opt_subp_Z} with Algorithm~\ref{alg:alg3} in Appendix~\ref{sec:appendix1}, using $F$ iterations\;}

Compute $\bB_{\Psi,1}$, $\bB_{\Psi,2}$, $\bP_3\bB_{\Psi,3}$ and $\tensor{G}_{\Psi}$ by solving \eqref{eq:HOSVD_optPsi} using the high-order SVD with rank $(K_{\Psi,1},K_{\Psi,2},K_{\Psi,3})$\;
}
Compute $\corange{\widehat{\tensor{Z}}_h}=\ldbrack\widehat{\tensor{G}}_{Z}; \widehat{\bB}_{Z,1}, \widehat{\bB}_{Z,2}, \widehat{\bB}_{Z,3}\rdbrack$ \;

Compute $\opPspec(\widehat{\tensor{\Psi}})=\tensor{Y}_m-\corange{\widehat{\tensor{Z}}_h}\times_3\bP_3$ \;
\end{algorithm}

\subsection{Optimization}
\label{sec:CBSTAR_opt}



\corange{In order to minimize the cost function in~\eqref{eq:opt_prob_fus_btd}, we consider a block coordinate descent strategy, which successively minimizes $J$ w.r.t. $\tensor{Z}_h$ (i.e., $\tensor{G}_Z$ and $\bB_{Z,i}$, $i\in\{1,2,3\}$) and w.r.t. $\tensor{\Psi}$ (i.e., $\tensor{G}_{\Psi}$ and $\bB_{\Psi,i}$, $i\in\{1,2,3\}$), while keeping the remaining variables fixed.}
The optimization procedure is detailed in Alg.~\ref{alg:alg2}, which we call CB-STAR (Coupled Block term decompositions for hyperspectral Super-resoluTion with vARiability).

\begin{color}{orange}

\subsubsection{Optimizing w.r.t. $\tensor{Z}_h$}
The optimization problem w.r.t. $\tensor{Z}_h$ can be written as:
\begin{align} \label{eq:opt_subp_Z}
    \min_{\tensor{G}_Z,\bB_{Z,i}} \,\, 
    & \Big\|\tensor{Y}_h - \big\ldbrack {\tensor{G}}_{Z}; \bP_1{\bB}_{Z,1}, \bP_2{\bB}_{Z,2}, {\bB}_{Z,3}\big\rdbrack \Big\|_F^2
    \nonumber \\
    +\lambda & \Big\|\cmag{\tensor{Y}}_0 - \big\ldbrack {\tensor{G}}_{Z}; {\bB}_{Z,1}, {\bB}_{Z,2}, \bP_3{\bB}_{Z,3} \big\rdbrack \Big\|_F^2 \,,
\end{align}
where $\cmag{\tensor{Y}}_0=\tensor{Y}_m-\big\ldbrack {\tensor{G}}_{\Psi}; {\bB}_{\Psi,1}, {\bB}_{\Psi,2}, \bP_3{\bB}_{\Psi,3}\big\rdbrack$. This is a variability-free, Tucker-based image fusion problem. We propose to solve~\eqref{eq:opt_subp_Z} using a block coordinate descent strategy w.r.t.
$\tensor{G}_Z$ and $\bB_{Z,i}$, $i\in\{1,2,3\}$, with a small number of iterations~$F$. This procedure is detailed in Alg.~\ref{alg:alg3} and in Appendix~\ref{sec:appendix1}.
An approximate closed form solution to~\eqref{eq:opt_subp_Z} can also be computed efficiently using the SCOTT algorithm~\cite{prevost2020coupledTucker_hyperspectralSRR_TSP}.
\end{color}

\subsubsection{Optimizing w.r.t. $\tensor{\Psi}$}
This optimization problem can be written equivalently as
\begin{align} \label{eq:HOSVD_optPsi}
    \min_{\tensor{G}_{\Psi},\bB_{\Psi,i},\bX_2} \,\, \Big\|\cmag{\tensor{Y}}_1 - \big\ldbrack {\tensor{G}}_{\Psi}; {\bB}_{\Psi,1}, {\bB}_{\Psi,2}, \bX_2 \big\rdbrack \Big\|_F^2 \,,
\end{align}
where $\cmag{\tensor{Y}}_1=\tensor{Y}_{m}-\ldbrack {\tensor{G}}_{Z}; {\bB}_{Z,1}, {\bB}_{Z,2}, \bP_3 {\bB}_{Z,3}\rdbrack$ and $\bX_2=\bP_3 {\bB}_{\Psi,3}$. This problem can be solved by computing the high-order SVD of $\cmag{\tensor{Y}}_1$ with rank $(K_{\Psi,1},K_{\Psi,2},K_{\Psi,3})$~\cite{lathauwer2000multilinearHOSVD}. Note that problem~\eqref{eq:HOSVD_optPsi} only returns $\bX_2=\bP_3 {\bB}_{\Psi,3}$ instead of $\bB_{\Psi,3}$. This is not a problem since the variations of $\bB_{\Psi,3}$ in the nullspace of $\bP_3$ are not identifiable.



\subsubsection{Initialization}
\label{sec:alg2_initialization}

Since this optimization problem is non-convex, the choice of initialization can have a significant impact on the performance of the algorithm. This can be particularly prominent in this algorithm since the model considered in~\eqref{eq:tensor_obs_mdl_hsi}--\eqref{eq:tensor_obs_mdl_msi_add} allows for a significant amount of ambiguity.
Fortunately, for practical scenes, we can consider a simple strategy to provide a reasonably accurate initial guess.

In the noiseless case, the following relation is satisfied:
\begin{align} \label{eq:initialization_psi_1}
    \opPspat(\tensor{Y}_m)
    & = \opPspat\big(\opPspec(\corange{\tensor{Z}_h} + \tensor{\Psi})\big)
    \\ \nonumber
    & = \opPspec(\tensor{Y}_h) + \opPspat\big(\opPspec(\tensor{\Psi})\big) \,.
\end{align}
Thus, we can obtain a spatially and spectrally degraded version of $\tensor{\Psi}$ directly from the HSI and MSI simply as:
\begin{align}
    \widetilde{\tensor{\Psi}} 
    & = \opPspat(\tensor{Y}_m) - \opPspec(\tensor{Y}_h)
    \nonumber \\
    & = \opPspat\big(\opPspec(\tensor{\Psi})\big) \,.
    \label{eq:initialization_psi_2}
\end{align}
Then, if $\tensor{\Psi}$ is smooth, we can spatially upscale $\widetilde{\tensor{\Psi}}$ using some form of interpolation (e.g., bicubic), leading to $\opPspec(\tensor{\Psi}^{(0)})$.
Finally, we can use the Tucker decomposition of $\opPspec(\tensor{\Psi}^{(0)})$ to initialize $\tensor{G}_{\Psi}$, $\bB_{\Psi,i}$, $i\in\{1,2,3\}$. We call this the \emph{interpolation} initialization.

Another option is to try to invert~\eqref{eq:initialization_psi_2} using the pseudoinverse of operator $\opPspat$, which can be computed using properties of the tensor vectorization and Kronecker product:
\begin{align}
    \big(\opPspat\big)^{\dagger} = 
    (\cdot) \times_1 \bP_1^\dagger \times_2 \bP_2^\dagger \,,
\end{align}
where $\bX^{\dagger}$ denotes the pseudoinverse of $\bX$. the initialization can be then computed as $\opPspec(\tensor{\Psi}^{(0)})=\widetilde{\tensor{\Psi}}\times_1 \bP_1^\dagger \times_2 \bP_2^\dagger$. We call this the \emph{pseudoinverse} initialization.

The initialization of $\bB_{Z,i}$, $i\in\{1,2,3\}$ can then be performed as ${\bB}_{Z,3} = \tSVD_{K_{Z,3}}(\unfold{\bY_{\!h}}{3})$ and ${\bB}_{Z,i} = \tSVD_{K_{Z,i}}(\unfold{\bX}{i})$ for $i\in\{1,2\}$, where $\tensor{X}=\tensor{Y}_m-\opPspec(\tensor{\Psi}^{(0)})$.

\subsection{Exact Recovery}
\label{sec:CBSTAR_recovery}

Suppose that $K_{Z,i}=K_{\Psi,i}$, $i\in\{1,2,3\}$, without loss of generality, so that the MSI follows a standard BTD as considered in~\cite{lathauwer2008tensor_BTD2_uniqueness}. Then we have the following result regarding the identifiability of the proposed algorithm.

\begin{theorem} \label{thm:exacr_rec_alg2}
Suppose that $K_i\equiv K_{Z,i}=K_{\Psi,i}$, $i\in\{1,2,3\}$, that the observations are noise free (i.e., $\tensor{E}_h=\tensor{0}$, $\tensor{E}_m=\tensor{0}$), that $\{\tensor{G}_{\iota},\bB_{\iota,i}:\iota\in\{Z,\Psi\},\,i\in\{1,2,3\}\}$ are drawn from some joint absolutely continuous distribution, and that the following conditions on the dimensions hold:
\begin{align}
    M_1 \ge 2 K_1  & \mbox{ and } M_2 \ge 2 K_2
    \label{eq:dimension_conditions_thm_alg2_c}
    \\
    K_{Z,3} \leq \min & \big\{ N_1 N_2, \, K_{Z,1}K_{Z,2} \big\}
    \label{eq:dimension_conditions_thm_alg2_ef}
\end{align}
and either one of the following:
\begin{align}
\begin{cases}
    \begin{cases}
    K_3 > K_1+K_2-2 \,, &  \text{or} \\
    |K_1-K_2| > K_3-2 \,, & \\
    \end{cases}
    \\ 
    \mbox{ and } L_m \ge 2 K_3
\end{cases}
\label{eq:dimension_conditions_thm_alg2_a}
\end{align}
or
\begin{align}
\begin{split}
    K_1=K_2 \,,\,\,\, K_3 \geq 3 \,\,\,\mbox{and}\,\,\, K_3 < L_m \,,
\end{split}
\label{eq:dimension_conditions_thm_alg2_b}
\end{align}
is satisfied. Then, the solution to optimization problem~\eqref{eq:opt_prob_fus_btd} satisfies \corange{$\widehat{\tensor{Z}}_h=\tensor{Z}_h$} almost surely.
\end{theorem}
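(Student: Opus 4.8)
The plan is to use the noiseless assumption to turn the optimization into two exact tensor equations, and then to combine BTD essential uniqueness (for the MSI term) with the Tucker structure of the HSI term in order to recover $\tensor{Z}_h$ up to transformations that ultimately cancel. First I would observe that the ground-truth parameters attain $J(\bTheta)=0$ in~\eqref{eq:opt_prob_fus_btd}, so the global minimum is $0$ and every minimizer $\widehat{\bTheta}$ must drive both squared terms to zero. Hence any solution satisfies, exactly,
\begin{align}
\tensor{Y}_h &= \big\ldbrack\widehat{\tensor{G}}_{Z};\bP_1\widehat{\bB}_{Z,1},\bP_2\widehat{\bB}_{Z,2},\widehat{\bB}_{Z,3}\big\rdbrack,
\label{eq:pf_hsi}\\
\tensor{Y}_m &= \sum_{\iota\in\{Z,\Psi\}}\big\ldbrack\widehat{\tensor{G}}_{\iota};\widehat{\bB}_{\iota,1},\widehat{\bB}_{\iota,2},\bP_3\widehat{\bB}_{\iota,3}\big\rdbrack.
\label{eq:pf_msi}
\end{align}
Equation~\eqref{eq:pf_msi} is a BTD of $\tensor{Y}_m$ into two rank-$(K_1,K_2,K_3)$ blocks, with undegraded mode-$1$ and mode-$2$ factors and mode-$3$ factors $\bP_3\widehat{\bB}_{\iota,3}$. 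Under absolute continuity and the size bounds~\eqref{eq:dimension_conditions_thm_alg2_c}, all factors $\bB_{\iota,i}$ ($i\in\{1,2\}$), $\bP_i\bB_{Z,i}$ and $\bP_3\bB_{\iota,3}$ have full column rank almost surely, a fact I will use freely.

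Next I would invoke BTD uniqueness and fix the block labeling. Conditions~\eqref{eq:dimension_conditions_thm_alg2_a} and~\eqref{eq:dimension_conditions_thm_alg2_b} are precisely the generic essential-uniqueness hypotheses for a two-term rank-$(K_1,K_2,K_3)$ BTD~\cite[Sec.~5]{lathauwer2008tensor_BTD2_uniqueness}, with $M_1\ge 2K_1$, $M_2\ge 2K_2$ and $L_m\ge 2K_3$ (or $K_3<L_m$) supplying the dimension requirements and the absolutely continuous distribution supplying genericity. Invoking them, the two blocks of~\eqref{eq:pf_msi} coincide with the true blocks up to a permutation of the two terms and invertible per-block transforms $\bT_i\in\amsmathbb{R}^{K_i\times K_i}$; in particular either $\widehat{\bB}_{Z,i}=\bB_{Z,i}\bT_i$ (correct labeling) or $\widehat{\bB}_{Z,i}=\bB_{\Psi,i}\bT_i$ (swapped), for $i\in\{1,2\}$. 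To rule out the swap I would read off from~\eqref{eq:pf_hsi} that the mode-$i$ column space of $\tensor{Y}_h$ equals $\vspan(\bP_i\widehat{\bB}_{Z,i})$, whereas the true model gives $\vspan(\bP_i\bB_{Z,i})$; the swapped case would then force $\vspan(\bP_i\bB_{Z,i})=\vspan(\bP_i\bB_{\Psi,i})$, an event of probability zero for two independently drawn spatial subspaces. Hence the labeling is correct almost surely and $\widehat{\bB}_{Z,i}=\bB_{Z,i}\bT_i$, $i\in\{1,2\}$, at full spatial resolution.

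Finally I would recover the spectral mode and the core and conclude. Using~\eqref{eq:dimension_conditions_thm_alg2_ef}, the mode-$3$ unfolding of $\tensor{Y}_h$ has rank $K_{Z,3}$, so matching its column space in~\eqref{eq:pf_hsi} against the true model yields $\widehat{\bB}_{Z,3}=\bB_{Z,3}\bT_3$ for some invertible $\bT_3$. Substituting $\widehat{\bB}_{Z,i}=\bB_{Z,i}\bT_i$ into~\eqref{eq:pf_hsi} and applying a left inverse in each mode to the full-column-rank factors $\bP_1\bB_{Z,1},\bP_2\bB_{Z,2},\bB_{Z,3}$ gives $\big\ldbrack\widehat{\tensor{G}}_Z;\bT_1,\bT_2,\bT_3\big\rdbrack=\tensor{G}_Z$, i.e. $\widehat{\tensor{G}}_Z=\big\ldbrack\tensor{G}_Z;\bT_1^{-1},\bT_2^{-1},\bT_3^{-1}\big\rdbrack$. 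Then, by property~\eqref{eq:mode_k_prop2},
\begin{align}
\widehat{\tensor{Z}}_h &= \big\ldbrack\widehat{\tensor{G}}_Z;\bB_{Z,1}\bT_1,\bB_{Z,2}\bT_2,\bB_{Z,3}\bT_3\big\rdbrack \nonumber\\
&= \big\ldbrack\tensor{G}_Z;\bB_{Z,1},\bB_{Z,2},\bB_{Z,3}\big\rdbrack = \tensor{Z}_h,
\end{align}
which is the claimed exact recovery.

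The crux is the second step: verifying that the dimension and rank hypotheses~\eqref{eq:dimension_conditions_thm_alg2_c}--\eqref{eq:dimension_conditions_thm_alg2_b} really do match the BTD essential-uniqueness conditions of~\cite{lathauwer2008tensor_BTD2_uniqueness} for $R=2$, and that the genericity from the absolutely continuous distribution simultaneously delivers the required full-rank conditions and separates the spatial subspaces of the two blocks, so that the block labeling is forced. The remaining algebra—cancelling full-column-rank factors and collapsing the $\bT_i$ via~\eqref{eq:mode_k_prop2}—is routine.
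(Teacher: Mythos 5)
Your overall architecture (noiseless fit forces both residuals to zero, BTD essential uniqueness for the MSI, fix the block permutation, cancel the ambiguity matrices) matches the paper's proof, but the step you use to fix the permutation is where your argument breaks, and it breaks precisely in the regime this theorem is designed to cover. You assert that under~\eqref{eq:dimension_conditions_thm_alg2_c} the degraded spatial factors $\bP_i\bB_{Z,i}$, $i\in\{1,2\}$, have full column rank almost surely, and you rule out the swapped labeling by claiming it would force $\vspan(\bP_i\bB_{Z,i})=\vspan(\bP_i\bB_{\Psi,i})$, an event of probability zero. However,~\eqref{eq:dimension_conditions_thm_alg2_c} only bounds $K_i$ by $M_i/2$; nothing in the hypotheses bounds $K_i$ by the HSI spatial dimension $N_i$. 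Indeed, the stated advantage of Theorem~\ref{thm:exacr_rec_alg2} over Theorem~\ref{thm:exacr_rec_alg1} is exactly that it drops the constraint $K_{Z,i}+K_{\Psi,i}\le N_i$ of~\eqref{eq:thm3_rank_cond4}, and in the paper's experiments CB-STAR is run with $K_{Z,1}=K_{Z,2}=70$ while $N_1,N_2\le 50$. When $K_i\ge N_i$, the matrices $\bP_i\bB_{Z,i}$ and $\bP_i\bB_{\Psi,i}$ are $N_i\times K_i$ and generically both span all of $\amsmathbb{R}^{N_i}$: their column spaces coincide with probability one, not zero, so no contradiction arises and the swap is not ruled out. (Your argument also implicitly needs $\rank(\unfold{\bY_{\!h}}{i})=K_i$ for $i\in\{1,2\}$, which the hypotheses do not supply; only the mode-3 condition~\eqref{eq:dimension_conditions_thm_alg2_ef} is assumed.) The same unjustified full-column-rank claim reappears in your final step, where you apply left inverses to $\bP_1\bB_{Z,1}$ and $\bP_2\bB_{Z,2}$ to recover the core; those left inverses need not exist.

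The repair is to move the entire disambiguation to the spectral mode, which is what the paper does. From the exact HSI fit one writes $\bP_3\unfold{\bY_{\!h}}{3}=\bP_3\widehat{\bB}_{Z,3}\widehat{\bX}=\bP_3\bB_{Z,3}\bX$, where~\eqref{eq:dimension_conditions_thm_alg2_ef} makes the co-factors $\bX$ and $\widehat{\bX}$ generically full row rank, hence $\vspan(\bP_3\widehat{\bB}_{Z,3})=\vspan(\bP_3\bB_{Z,3})$. Since $K_3<L_m$ (guaranteed by either~\eqref{eq:dimension_conditions_thm_alg2_a} or~\eqref{eq:dimension_conditions_thm_alg2_b}), generically $\bP_3\bB_{Z,3}$ and $\bP_3\bB_{\Psi,3}$ each have rank $K_3$ and span distinct subspaces of $\amsmathbb{R}^{L_m}$, so the swapped identification $\bP_3\widehat{\bB}_{Z,3}=\bP_3\bB_{\Psi,3}\bQ_{\Psi,3}$ is impossible; the permutation is thus fixed in mode 3 with no constraint relating $K_1,K_2$ to $N_1,N_2$. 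Then $\widehat{\bB}_{Z,3}=\bB_{Z,3}\bS$ follows from $\unfold{\bY_{\!h}}{3}$ having rank $K_3$, injectivity of $\bP_3\bB_{Z,3}$ identifies $\bS$ with the mode-3 ambiguity matrix $\bQ_{Z,3}$, and, crucially, the core relation $\widehat{\tensor{G}}_{Z}=\tensor{G}_{Z}\times_1\bQ_{Z,1}^{-1}\times_2\bQ_{Z,2}^{-1}\times_3\bQ_{Z,3}^{-1}$ is taken directly from the essential uniqueness of the BTD rather than re-derived by inverting spatial factors. With that, the three ambiguity matrices cancel in $\widehat{\tensor{Z}}_h=\ldbrack\widehat{\tensor{G}}_Z;\widehat{\bB}_{Z,1},\widehat{\bB}_{Z,2},\widehat{\bB}_{Z,3}\rdbrack$ exactly as in your final display.
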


\begin{proof}
Since there is no additive noise, the optimal solution to optimization problem~\eqref{eq:opt_prob_fus_btd} will necessarily make both terms in the cost function equal to zero. This implies that
\begin{align}
    \tensor{Y}_h & = \ldbrack \widehat{\tensor{G}}_{Z}; \bP_1\widehat{\bB}_{Z,1}, \bP_2\widehat{\bB}_{Z,2}, \widehat{\bB}_{Z,3} \rdbrack \,,
    \\
    \tensor{Y}_m & = \sum_{\iota\in\{Z,\Psi\}} \ldbrack \widehat{\tensor{G}}_{\iota}; \widehat{\bB}_{\iota,1}, \widehat{\bB}_{\iota,2}, \bP_3 \widehat{\bB}_{\iota,3} \rdbrack \,,
    \label{eq:recovered_MSI_BTD_opt}
\end{align}
where $\{\widehat{\tensor{G}}_{\iota},\widehat{\bB}_{\iota,i}:\iota\in\{Z,\Psi\},\,i\in\{1,2,3\}\}$ denotes a solution to~\eqref{eq:opt_prob_fus_btd}.

Since the dimension conditions~\eqref{eq:dimension_conditions_thm_alg2_c} and either~\eqref{eq:dimension_conditions_thm_alg2_a} or~\eqref{eq:dimension_conditions_thm_alg2_b} are satisfied and the core tensor and factor matrices are drawn from joint absolutely continuous distributions, the BTD decomposition of the MSI in~\eqref{eq:recovered_MSI_BTD_opt} is essentially unique according to Theorems~5.1 and~5.5 in~\cite{lathauwer2008tensor_BTD2_uniqueness}. This means that the following conditions are satisfied:
\begin{align}
    \widehat{\bB}_{Z,i} & = \bB_{\iota_1,i} \bQ_{\iota_1,i}, \,\, i\in\{1,2\}
    \label{eq:ambig_thm2_1}
    \\
    \widehat{\bB}_{\Psi,i} & = \bB_{\iota_2,i} \bQ_{\iota_2,i}, \,\, i\in\{1,2\}
    \label{eq:ambig_thm2_2}
    \\
    \bP_3\widehat{\bB}_{Z,3} & = \bP_3 \bB_{\iota_1,3} \bQ_{\iota_1,3}
    \label{eq:ambig_thm2_3}
    \\
    \bP_3\widehat{\bB}_{\Psi,3} & = \bP_3 \bB_{\iota_2,3} \bQ_{\iota_2,3}
    \label{eq:ambig_thm2_4}
    \\
    \widehat{\tensor{G}}_{Z} & = \tensor{G}_{\iota_1} \times_1 \bQ_{\iota_1,1}^{-1} \times_2 \bQ_{\iota_1,2}^{-1} \times_3 \bQ_{\iota_1,3}^{-1}
    \label{eq:ambig_thm2_5}
    \\
    \widehat{\tensor{G}}_{\Psi} & = \tensor{G}_{\iota_2} \times_1 \bQ_{\iota_2,1}^{-1} \times_2 \bQ_{\iota_2,2}^{-1} \times_3 \bQ_{\iota_2,3}^{-1}
    \label{eq:ambig_thm2_6}
\end{align}
{\color{magenta} To account for the possible permutation of the two BTD terms,  indexes $\iota_1$ and $\iota_2$ can be either $(\iota_1,\iota_2)=(Z,\Psi)$ or $(\iota_1,\iota_2)=(\Psi,Z)$,} and $\bQ_{Z,i}$, $\bQ_{\Psi,i}$, $i\in\{1,2,3\}$ are invertible matrices of appropriate size, {\color{magenta} which account for rotational \corange{and scaling} ambiguities of the model.}

{\color{magenta} Let us consider the mode-3 unfolding of the spectrally degraded HSI:}
\begin{align*}
    \bP_3\unfold{{\bY_{\!h}}}{3} & = \bP_3\widehat{\bB}_{Z,3} \underbrace{\unfold{\mathop{\widehat{\bG}_{Z}}}{3} \big(\bP_2\widehat{\bB}_{Z,2} \otimes \bP_1\widehat{\bB}_{Z,1}\big)^\top}_{\,\widehat{\!\bX}} 
    \\
    & = \bP_3{\bB}_{Z,3} \underbrace{\unfold{\mathop{{\bG}_{Z}}}{3} \big(\bP_2{\bB}_{Z,2} \otimes \bP_1{\bB}_{Z,1}\big)^\top}_{\bX} \,,
\end{align*}
where $\,\widehat{\!\bX}$ and ${\bX}$ are generically full row rank by the assumption~\eqref{eq:dimension_conditions_thm_alg2_ef} on the ranks and on the dimensions. 
Therefore, we have
\begin{align}
    \vspan\big( \bP_3\widehat{\bB}_{Z,3} \big) = \vspan\big( \bP_3{\bB}_{Z,3} \big) \,.
    \label{eq:thm2_btd_span_eq}
\end{align}

However, since $K_3<L_m$ and due to the distributional assumptions on the factor matrices, we have that, generically, matrices $\bP_3\bB_{Z,3}$ and $\bP_3\bB_{\Psi,3}$ both have rank $K_3$, and matrix $\begin{bmatrix}
\bP_3 {\bB}_{Z,3}  &  \bP_3\bB_{\Psi,3} 
\end{bmatrix}$ is full column rank (i.e., it has rank greater than $K_3$). Therefore, the subspaces spanned by $\bP_3\bB_{Z,3}$ and $\bP_3\bB_{\Psi,3}$ are different, and it is not possible to have $\bP_3 {\bB}_{Z,3}  =  \bP_3\bB_{\Psi,3} \bS$, for any matrix $\bS$.
Thus,~\eqref{eq:thm2_btd_span_eq} implies that the equation $\bP_3\widehat{\bB}_{Z,3}=\bP_3\bB_{\Psi,3}\bQ_{\Psi,3}$ is not possible (i.e., it cannot be satisfied for any matrix $\bQ_{\Psi,3}$). This ensures, due to \eqref{eq:ambig_thm2_3} (and to the essential uniqueness of the MSI BTD), that we must have $\iota_1=Z$ in~\eqref{eq:ambig_thm2_1},~\eqref{eq:ambig_thm2_3} and~\eqref{eq:ambig_thm2_5}, (i.e., the factor matrices related to $\tensor{\Psi}$ can not fit the HSI), which shows that the correct permutation of the BTD terms is selected.

Finally, since $\,\widehat{\!\bX}$ is generically full row rank, (or since $\unfold{{\bY_{\!h}}}{3}$ has rank $K_3$), we have that $\widehat{\bB}_{Z,3}=\bB_{Z,3}\bS$ for some $\bS$, and~\eqref{eq:ambig_thm2_3} ensures that $\bS=\bQ_{Z,3}$. This means that $\widehat{\bB}_{Z,3}=\bB_{Z,3}\bQ_{Z,3}$ (almost surely), and, using~\eqref{eq:ambig_thm2_1} and~\eqref{eq:ambig_thm2_5} and $\iota_1=Z$, the reconstructed image consequently satisfies
\begin{align}
    \corange{\widehat{\tensor{Z}}_h} & = \big\ldbrack \widehat{\tensor{G}}_{Z}; \widehat{\bB}_{Z,1}, \widehat{\bB}_{Z,2}, \widehat{\bB}_{Z,3} \big\rdbrack 
    \nonumber \\
    & = \big\ldbrack {\tensor{G}}_{Z}; {\bB}_{Z,1}, {\bB}_{Z,2}, {\bB}_{Z,3} \big\rdbrack 
    = \corange{\tensor{Z}_h},
\end{align}
(almost surely), which concludes the proof.
\end{proof}

By taking the block diagonal structure of $\tensor{C}_m$ into account, Theorem~\ref{thm:exacr_rec_alg2} obtains generally less restrictive recovery conditions. Comparing Theorems~\ref{thm:exacr_rec_alg1} and~\ref{thm:exacr_rec_alg2}, we can see that: 1) conditions~\eqref{eq:thm3_rank_cond3} and~\eqref{eq:dimension_conditions_thm_alg2_ef} are equivalent; 2) the conditions for the spectral ranks are not directly comparable but are similarly restrictive for both theorems; and, most notably, 3) The constraint~\eqref{eq:thm3_rank_cond4} on spatial ranks is much more restrictive when compared to the one in~\eqref{eq:dimension_conditions_thm_alg2_c}, required by Theorem~\ref{thm:exacr_rec_alg2}. \corange{This shows that, although computationally more demanding, CB-STAR has more flexibility and may be able to deliver \cdgreen{better performance} when compared to CT-STAR for images with complex spatial content.}

\section{\corange{Computational complexity}}\label{sec:comp}

\corange{The computational complexity of the algorithms is given as follows. The total cost involved with the main operations in CT-STAR (Alg.~\ref{alg:alg1}) are the following: 1) computation of the truncated SVDs in steps~2 and~3, which requires $\mathcal{O}(\max\{K_{Z,1},K_{Z,2}\}M_1M_2L_m+K_{Z,3}N_1N_2L_h)$ flops, 2) computation of the equations in steps~4 and~5, which requires~$\mathcal{O}(\max\{K_{Z,1},K_{Z,2}\}N_1N_2L_h+M_1N_1K_{Z,1}+M_2N_2K_{Z,2})$ flops, and solution of the linear equation in step~6, which requires~$\mathcal{O}\big(L_hN_1N_2(K_{Z,1}K_{Z,2}K_{Z,3})^2\big)$ flops.}

\corange{ For CB-STAR (Alg.~\ref{alg:alg2}), the total cost involved in each iteration is 1) solving problem~\eqref{eq:opt_subp_Z} in step~3 using Alg.~\ref{alg:alg3}, whose main costs are based on the solution to Sylvester equations, and are given by $\mathcal{O}\big(F(M_1^3+M_2^3+L_h^3+(K_{Z,2}K_{Z,3})^3)\big)$ (where we assume $K_{Z,1}\geq K_{Z,3}$ for simplicity), and 2) computing the high-order SVD in step~4, which costs~$\mathcal{O}\big(\max_i\{K_{\Psi,i}\}M_1M_2L_m\big)$ flops.
}


\section{Experiments}
\label{sec:results}

In this section, the performance of the proposed approach is illustrated through numerical experiments considering both synthetic and real data containing spatial and spectral variability.
All simulations were coded in MATLAB and run on a desktop with a 4.2 GHz Intel Core i7 and 16GB RAM.

\subsection{Experimental Setup} \label{sec:sim_setup}

We compared CT-STAR and CB-STAR to both matrix and tensor factorization-based algorithms. Among the matrix factorization-based methods, we considered the HySure~\cite{simoes2015HySure} and CNMF~\cite{yokoya2012coupledNMF} methods, the FuVar~\cite{Borsoi_2018_Fusion} method, which accounts for spectral variability, and the multiresolution analysis-based GLP-HS algorithm~\cite{aiazzi2006GLP_HS}. We also considered the \corange{LTMR~\cite{dian2019hyperspectralSRR_tensorMultiRank},} STEREO~\cite{kanatsoulis2018hyperspectralSRR_coupledCPD} and SCOTT~\cite{prevost2020coupledTucker_hyperspectralSRR_TSP} algorithms, \corange{which are tensor-based image fusion methods.}

The real \corange{HRIs and MSIs}, which were acquired at different time instants but at the same spatial resolution, were pre-processed as described in~\cite{simoes2015HySure}. \corange{This consisted} in the manual removal of water absorption and low-SNR bands, followed by the normalization of all bands of the \corange{HRIs and MSIs} such that the 0.999 intensity quantile \corange{corresponded} to a value of~1. 
Afterwards, the \corange{HRIs were} denoised (as described in~\cite{roger1996denoisingHSI}) to yield the high-SNR reference image~$\corange{\tensor{Z}_h}$~\cite{yokoya2017HS_MS_fusinoRev}.
The observed HSIs $\tensor{Y}_{h}$ were then generated from $\corange{\tensor{Z}_h}$ by applying a separable degradation operator, with $\bP_1=\bP_2$ (a Gaussian filter with unity variance followed by a subsampling with a decimation factor of two\footnote{Details on how to construct $\bP_1$ and $\bP_2$ can be found in~\cite{prevost2020coupledTucker_hyperspectralSRR_TSP}.}). Gaussian noise was also added to obtain an SNR of 30dB.
The observed MSIs $\tensor{Y}_{m}$ were generated by adding noise to the reference MSI to obtain an SNR of 40dB. The spectral response function $\bP_3$ was obtained from calibration measurements and known a priori\footnote{Available for download \href{https://earth.esa.int/web/sentinel/user-guides/sentinel-2-msi/document-library/-/asset_publisher/Wk0TKajiISaR/content/sentinel-2a-spectral-responses}{here}.}.
\corange{Note that the HRI $\tensor{Z}_h$ is not known by the algorithms, and is just used to assess their performance during the experiments using quantitative metrics.}

The parameters of the algorithms were selected as follows. 
We selected the ranks and regularization parameters for HySure, CNMF, FuVar and LTMR according to the original works~\cite{simoes2015HySure,yokoya2012coupledNMF,Borsoi_2018_Fusion,dian2019hyperspectralSRR_tensorMultiRank}. For STEREO, we selected the rank in the interval $[5,80]$ which led to the best reconstruction results. Similarly, for SCOTT and for the proposed algorithms, we selected the spatial ranks in the intervals $[10,80]$ and the spectral ranks in the interval $[2,30]$, which led to the best reconstruction results. \corange{For simplicity, we also set $\lambda=1$ for CB-STAR.} The spatial and spectral degradation operators (or, equivalently, the blurring kernels for HySure, CNMF and FuVar) were assumed to be known a priori for all methods.
The BCD procedure in Alg.~\ref{alg:alg2} was performed until the relative change in the objective function value was smaller than~$10^{-3}$. \corange{At each iteration of Alg.~\ref{alg:alg2}, \cdgreen{we ran Alg.~\ref{alg:alg3} for} one single inner iteration (i.e., $F=1$), which resulted in good experimental performance with moderate execution times.} Both the \emph{interpolation} and the \emph{pseudoinverse} initializations described in Section~\ref{sec:alg2_initialization} were considered, but only the first one (which performed better) is shown in the visual results.

To evaluate the quality of the reconstructed images $\corange{\widehat{\tensor{Z}}_h}$, we considered four quantitative metrics, which were previously used in~\mbox{\cite{yokoya2017HS_MS_fusinoRev,simoes2015HySure,Borsoi_2018_Fusion}}.
The first metric is the peak signal to noise ratio (PSNR), defined as
\begin{align}
	{\rm PSNR}
	{}={} \frac{10}{L} \sum_{\ell=1}^L
    \log_{10} \Bigg(
    \frac{M_1 M_2 \Ex\!\big\{\!\max\big(\corange{[\tensor{Z}_h]_{:,:,\ell}}\big)\big\}}{\big\|\corange{[\tensor{Z}_h]_{:,:,\ell}  -[\widehat{\tensor{Z}}_h]_{:,:,\ell}\big\|_F^2}} \Bigg)
    \nonumber\,,
\end{align}
where $\Ex\{\cdot\}$ denotes the expectation operator.

The second metric is the Spectral Angle Mapper (SAM):
\begin{align}
	{\rm SAM}
	{}={} \frac{1}{M_1 M_2} \sum_{n,m} \arccos \Bigg( \frac{\corange{[\tensor{Z}_h]_{:,n,m}^\top[\widehat{\tensor{Z}}_h]_{:,n,m}}}
    {\big\|\corange{[\tensor{Z}_h]_{:,n,m}}\big\|_2\big\|\corange{[\widehat{\tensor{Z}}_h]_{:,n,m}}\big\|_2} \Bigg)
    \nonumber\,.
\end{align}

The ERGAS~\cite{wald2000qualityERGAS} metric provides a global statistical measure of the quality of the fused data, and is defined as:
\begin{align}
	{\rm ERGAS}
	{}={} \frac{M_1M_2}{N_1N_2} \sqrt{\!\frac{10^4}{L_h} \!
    \sum_{\ell=1}^{L_h}  \frac{\big\|\corange{[\tensor{Z}_h]_{:,:,\ell}
    -[\widehat{\tensor{Z}}_h]_{:,:,\ell}}\big\|_F^2}{(\cb{1}^\top\corange{[\tensor{Z}_h]_{:,:,\ell}}\cb{1}/(M_1M_2))^2}} 
    \nonumber \,.
\end{align}

The last metric is the average of the bandwise UIQI~\cite{wang2002qualityUIQI}, which evaluates image distortions caused by loss of correlation and by luminance and contrast distortion, with value approaching one as $\corange{\widehat{\tensor{Z}}_h}$ approaches $\corange{\tensor{Z}_h}$.

We also evaluate the reconstructed images visually, by displaying true- and pseudo-color representations of the visual and infrared spectra of~$\corange{\widehat{\tensor{Z}}_h}$ (corresponding to the wavelengths $0.45$, $0.56$ and $0.66~\mu m$, and $0.80$, $1.50$ and~$2.20~\mu m$, respectively). Due to space limitations, we only display the results of FuVar, \corange{LTMR,} STEREO, SCOTT, CT-STAR and CB-STAR, since these are the methods which performed best, and the ones which were conceptually closest to our approach.
The spectrally degraded additive factors $\opPspec(\widehat{\tensor{\Psi}})$ estimated by CT-STAR and CB-STAR are also evaluated visually, through pseudo-color representations of its visible and infrared spectra, and by the norm (over all bands) of each of its pixels.


\begin{table}[ht!]	
\caption{Results - synthetic example (``rk'' stands for ``ranks'')} \vspace{-5pt}
\centering	
\renewcommand{\arraystretch}{1.15}
\setlength\tabcolsep{3.5pt}
\resizebox{\linewidth}{!}{
\pgfplotstabletypeset[header=false,
col sep = tab,
	columns/0/.style={string type},
every head row/.style={before row={\hline}, after row=\hline},
columns/0/.style={string type, column name={Algorithm}, column type/.add={@{}@{\,}}{}},
columns/1/.style={column name={SAM}, column type/.add={@{\,}|@{\,}}{}},
columns/2/.style={column name={ERGAS}, column type/.add={@{\,}|@{\,}}{}},
columns/3/.style={column name={PSNR}, column type/.add={@{\,}|@{\,}}{}},
columns/4/.style={column name={UIQI}, column type/.add={@{\,}|@{\,}}{}},
columns/5/.style={column name={time}, fixed, precision=2, column type/.add={@{\,}|@{\,}}{}},
columns ={0,1,2,3,4,5},
every row 9 column 1/.style={postproc cell content/.style={ @cell content/.add={$\bf}{$} }},
every row 13 column 1/.style={postproc cell content/.style={ @cell content/.add={$\bf}{$} }},
every row 13 column 2/.style={postproc cell content/.style={ @cell content/.add={$\bf}{$} }},
every row 13 column 3/.style={postproc cell content/.style={ @cell content/.add={$\bf}{$} }},
every row 7 column 4/.style={postproc cell content/.style={ @cell content/.add={$\bf}{$} }},
every row 8 column 4/.style={postproc cell content/.style={ @cell content/.add={$\bf}{$} }},
every row 9 column 4/.style={postproc cell content/.style={ @cell content/.add={$\bf}{$} }},
every row 10 column 4/.style={postproc cell content/.style={ @cell content/.add={$\bf}{$} }},
every row 11 column 4/.style={postproc cell content/.style={ @cell content/.add={$\bf}{$} }},
every row 12 column 4/.style={postproc cell content/.style={ @cell content/.add={$\bf}{$} }},
every row 13 column 4/.style={postproc cell content/.style={ @cell content/.add={$\bf}{$} }},
every row 6 column 5/.style={postproc cell content/.style={ @cell content/.add={$\bf}{$} }},
]{tables/tab_synthetic_MC.txt}}
\label{tab:ex0_bis}
\end{table}

\subsection{Examples -- Synthetic data}

To evaluate the \corange{proposed} algorithms in a controlled scenario, we first considered a simulation with synthetic data.
The tensors $\corange{\tensor{Z}_h}$ and $\tensor{\Psi}$, of dimensions $100\times100\times200$, were generated following the Tucker model, with uniformly distributed entries on the interval $[0,1]$ and ranks $(10,10,5)$ and $(5,5,3)$, respectively. 
The spectral response function $\bP_3\in\amsmathbb{R}^{10\times200}$ was constructed by uniformly averaging groups of $20$ bands, and the rest of the simulation setup was the same as described in Section~\ref{sec:sim_setup}. For this experiment, we initialized CB-STAR with the results of CT-STAR.
\corange{\cdgreen{We considered} two examples. First, \cdgreen{we compared} the proposed method to other state-of-the-art algorithms, for different choices of rank. Afterwards, \cdgreen{we evaluated} the sensitivity of CT-STAR and CB-STAR to the presence of additive noise. In both cases, we report average results of a Monte Carlo simulation with 100 \cmag{noise} realizations.}

\subsubsection{Comparison to other algorithms}

\corange{For this comparison, we set the} ranks of STEREO and SCOTT were $50$ and $(60,60,5)$, respectively. \corange{\cdgreen{We also ran} the proposed methods with four different rank values (smaller, equal, and larger than the true data ranks)}, indicated in Table~\ref{tab:ex0_bis}.
The results in Table~\ref{tab:ex0_bis} show that the proposed methods yielded significant improvements when compared to the other algorithms, which is expected since this dataset was generated according to the model~\eqref{eq:tensor_obs_mdl_hsi}--\eqref{eq:tensor_obs_mdl_msi_add}. Moreover, the performance of both CT-STAR and CB-STAR as a function of the ranks was similar, with the best results \corange{observed} when the rank was the same as the ground truth, but with similar performance when the rank was underestimated. \corange{This indicates that CT-STAR and CB-STAR can still perform well when the data rank is higher than the one specified for the model.} When the selected rank was overestimated, the performance of the proposed methods degraded more sharply (with a more prominent decrease for CB-STAR), indicating that the ranks should not be much greater than the true values in order to obtain the best performance.

\begin{color}{orange}
\subsubsection{Effect of noise} 
The exact recovery results obtained in Theorems~\ref{thm:exacr_rec_alg1} and~\ref{thm:exacr_rec_alg2} assume a noiseless observation model, which is not the case in real applications.
To illustrate how the performance of the proposed methods is affected by the presence of additive noise~$\tensor{E}_h$ and~$\tensor{E}_m$, \cdgreen{we evaluated} the quantitative reconstruction metrics for different SNRs, varying from 0dB up to the noiseless case (SNR = $\infty$). For simplicity, we set $K_Z$ and $K_{\Psi}$ as the correct ranks of $\tensor{Z}_h$ and $\tensor{\Psi}$, respectively. The results are shown in Table~\ref{tab:exp_synth_SNRs}. It can be seen that the performance of both CT-STAR and CB-STAR decreased for low SNRs. Moreover, CB-STAR achieved better results for high SNRs ($\geq$40dB), but was surpassed by CT-STAR for low SNR scenarios ($\leq$30dB), whose results were considerably better for extremely noisy scenarios (0dB). Moreover, for SNRs found in typical scenes ($\geq$20dB), \cdgreen{these results indicate that} both methods are able to obtain satisfactory results.
Finally, we note that since the ranks of $\tensor{Z}_h$ and $\tensor{\Psi}$ \cdgreen{satisfied the} requirements of Theorems~\ref{thm:exacr_rec_alg1} and~\ref{thm:exacr_rec_alg2}, exact reconstruction was obtained for SNR~=~$\infty$ (i.e., the original and reconstructed images were equal up to machine precision).

The theoretical results presented in this work \cdgreen{made use of two hypotheses} to guarantee the recovery of the HRI, namely, that the image and variability factors have low multilinear rank, and that there is no observation noise. However, those hypotheses are often not strictly satisfied in practice. Nonetheless, CT-STAR and CB-STAR only perform a low-rank approximation of the data. Thus, they can still perform well when the true rank of the image is larger than the one specified to the algorithm and under the presence of noise, as illustrated in the previous simulations (even though exact recovery is not guaranteed in those cases). In the following section, we will evaluate the performance of the proposed methods with real HSIs and MSIs, which do not necessarily come from the specified low-rank models.

\end{color}

\begin{table}[th]
\renewcommand{\arraystretch}{1.15}
\setlength\tabcolsep{3.5pt}
\centering
\caption{\corange{Performance of the proposed methods for different SNRs}}  \vspace{-5pt}
\resizebox{\linewidth}{!}{
\begin{tabular}{c|cccccccccc} \hline
\multicolumn{10}{c}{CT-STAR} \\\hline
SNR	&	$0$	&	$10$	&	$20$	&	$30$	&	$40$	&	$60$	&	$80$	&	$100$	&	$\infty$	\\\hline
SAM	&	$9.622$	&	$3.191$	&	$1.197$	&	$0.647$	&	$0.370$	&	$0.033$	&	$0.003$	&	$0$	&	$0$	\\
ERGAS	&	$15.33$	&	$6.948$	&	$2.005$	&	$0.887$	&	$0.436$	&	$0.040$	&	$0.004$	&	$0$	&	$0$	\\
PSNR	&	$17.18$	&	$23.76$	&	$34.59$	&	$41.72$	&	$48.35$	&	$69.23$	&	$89.35$	&	$109.4$	&	$297.4$	\\
UIQI	&	$0.501$	&	$0.856$	&	$0.987$	&	$0.998$	&	$0.999$	&	$1$	&	$1$	&	$1$	&	$1$	\\
\hline
\multicolumn{10}{c}{CB-STAR} \\\hline
SNR	&	$0$	&	$10$	&	$20$	&	$30$	&	$40$	&	$60$	&	$80$	&	$100$	&	$\infty$	\\\hline
SAM	&	$36.89$	&	$14.47$	&	$4.567$	&	$1.272$	&	$0.295$	&	$0.023$	&	$0.002$	&	$0$	&	$0$	\\
ERGAS	&	$52.83$	&	$17.26$	&	$5.130$	&	$1.433$	&	$0.342$	&	$0.028$	&	$0.003$	&	$0$	&	$0$	\\
PSNR	&	$7.74$	&	$17.32$	&	$27.66$	&	$38.17$	&	$50.54$	&	$71.99$	&	$91.95$	&	$111.9$	&	$265.9$	\\
UIQI	&	$0.157$	&	$0.590$	&	$0.927$	&	$0.994$	&	$1$	&	$1$	&	$1$	&	$1$	&	$1$	\\
\hline
\end{tabular}}
\label{tab:exp_synth_SNRs}
\end{table}

\begin{figure}[thb]
	\centering
    \begin{minipage}{0.36\linewidth}
        \centering
        \includegraphics[width=1\linewidth]{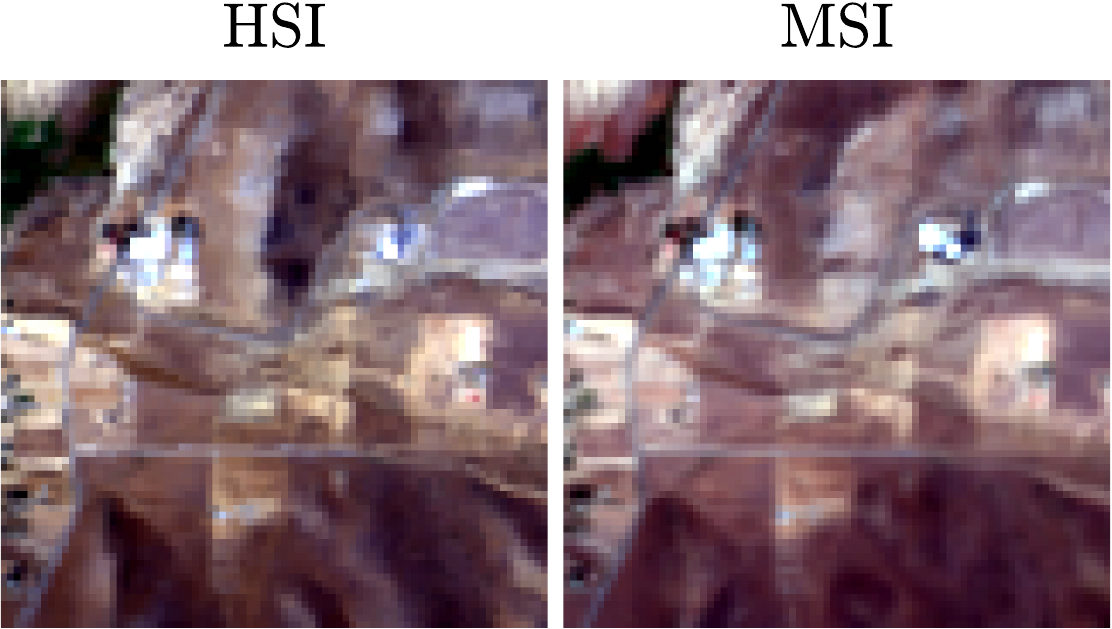}
        Lake Isabella
    \end{minipage}
    \,\,
    \begin{minipage}{0.42\linewidth}
        \centering
        \includegraphics[width=1\linewidth]{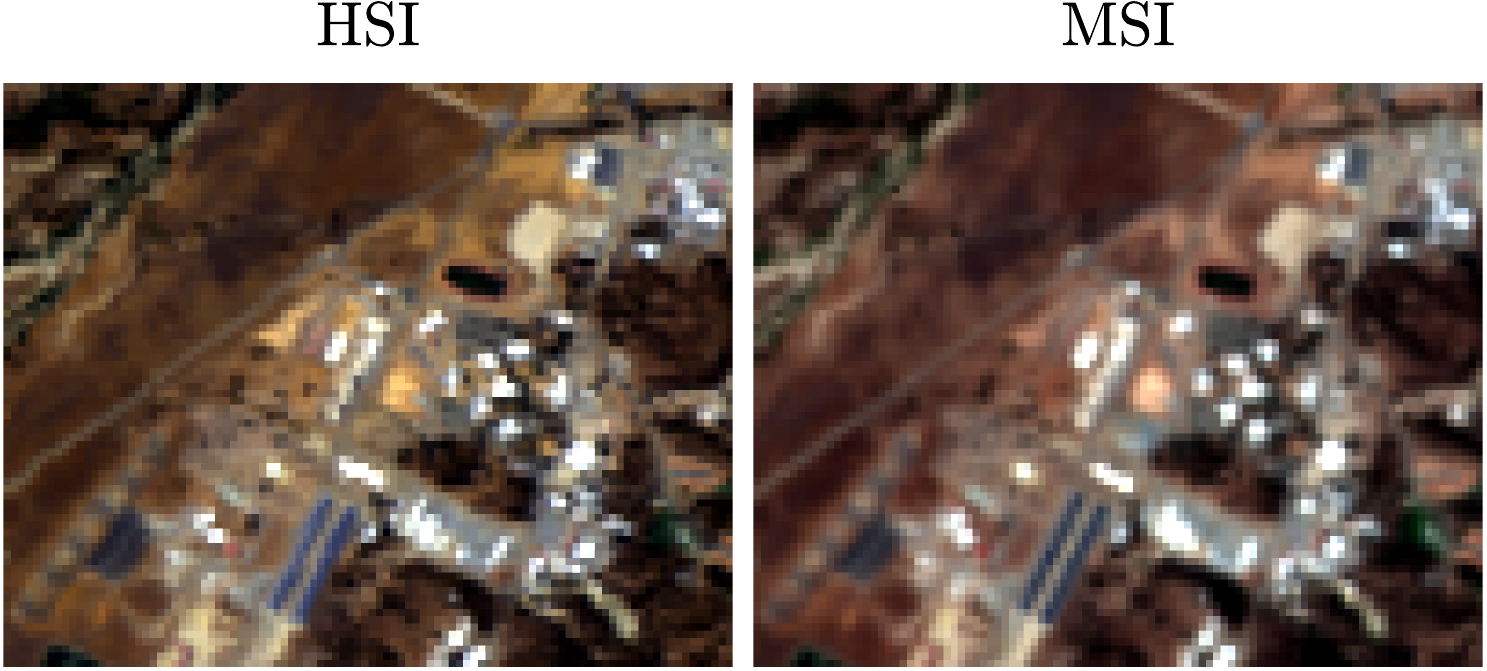}
        Lockwood
    \end{minipage}
    \caption{Hyperspectral and multispectral images with a small acquisition time difference used in the experiments.}
    \label{fig:ex3_HSIs_MSIs_a}
\end{figure}
\begin{figure}[thb]
	\centering
    \begin{minipage}{0.28\linewidth}
        \centering
        \includegraphics[width=1\linewidth]{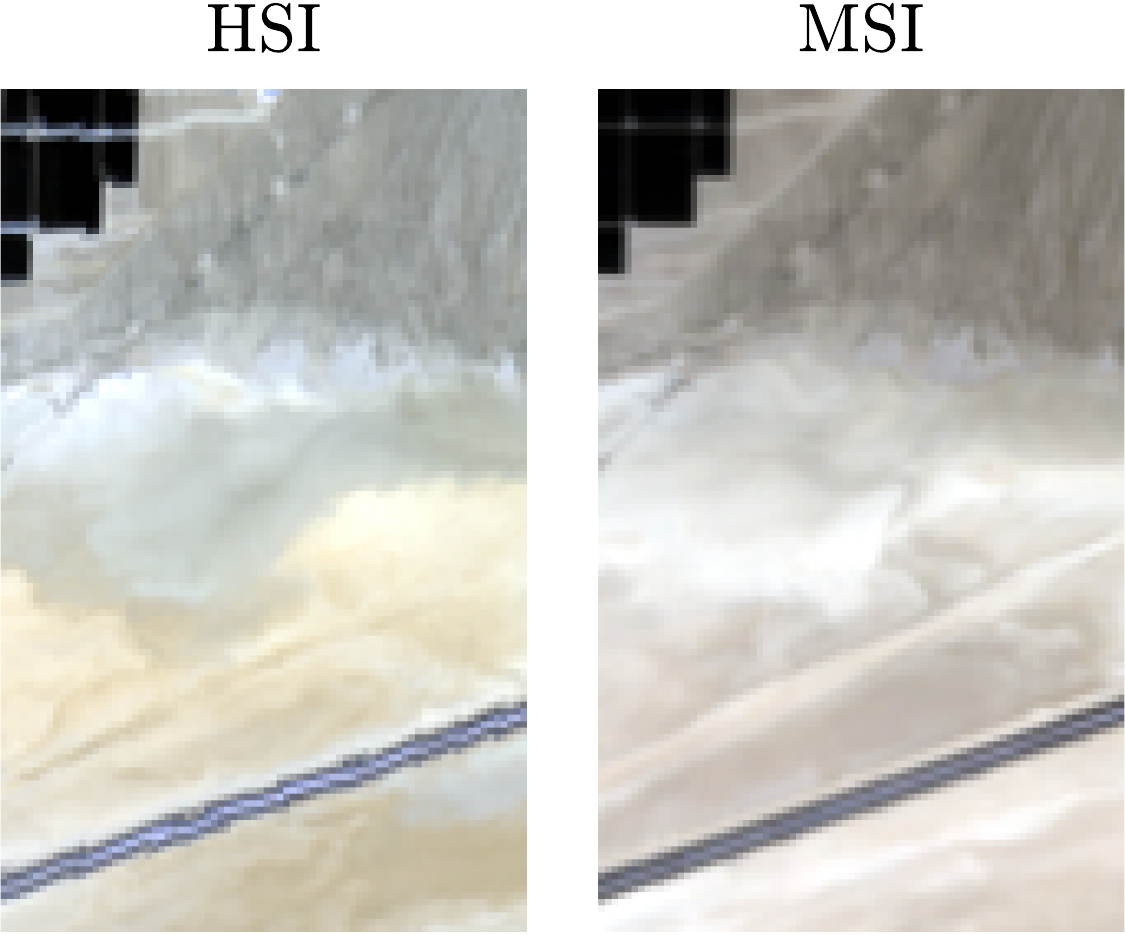}
        Ivanpah Playa
    \end{minipage}
    \,\,
    \begin{minipage}{0.32\linewidth}
        \centering
        \includegraphics[width=1\linewidth]{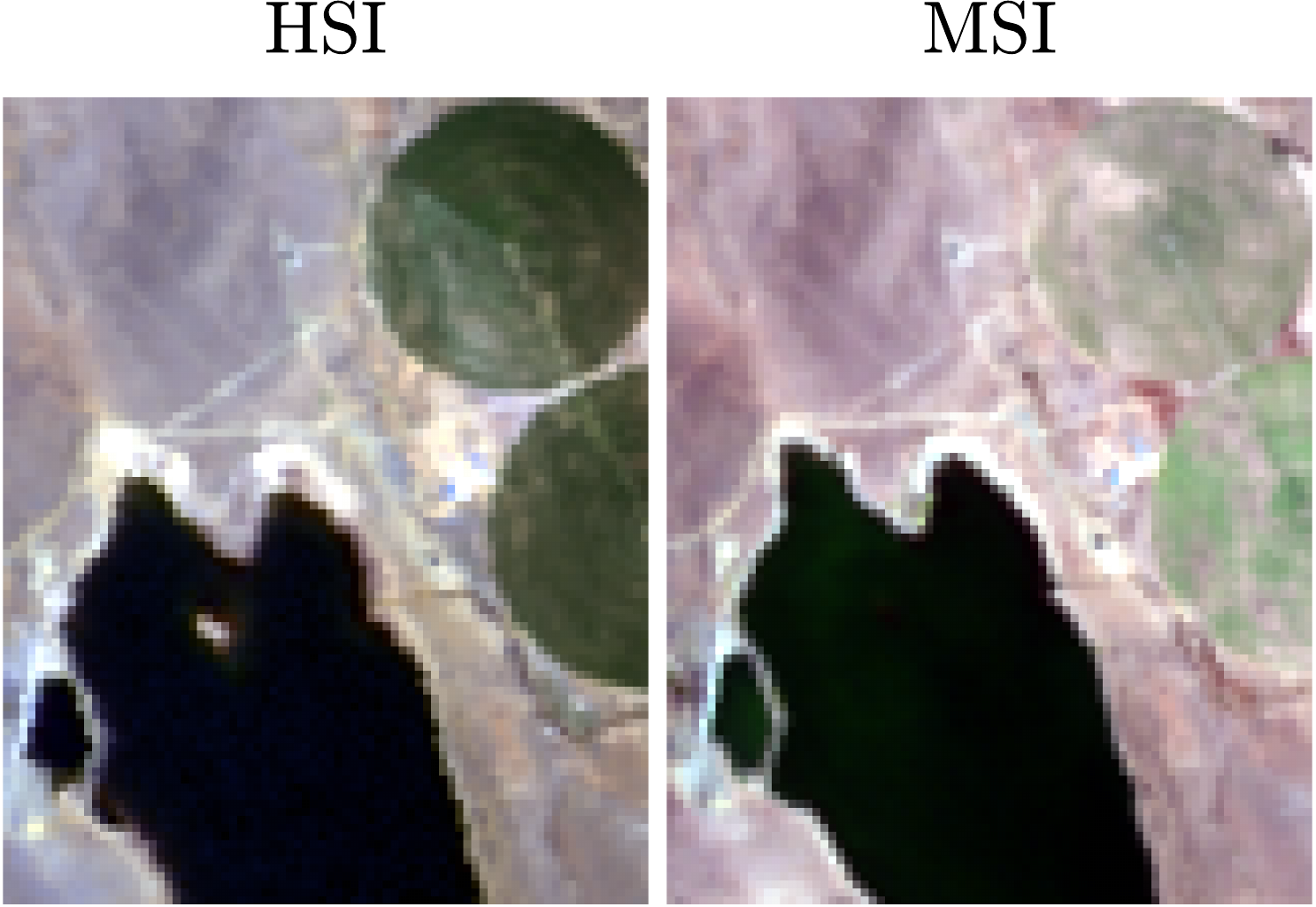}
        Lake Tahoe A
    \end{minipage}
    \,\,
    \begin{minipage}{0.32\linewidth}
        \centering
        \includegraphics[width=1\linewidth]{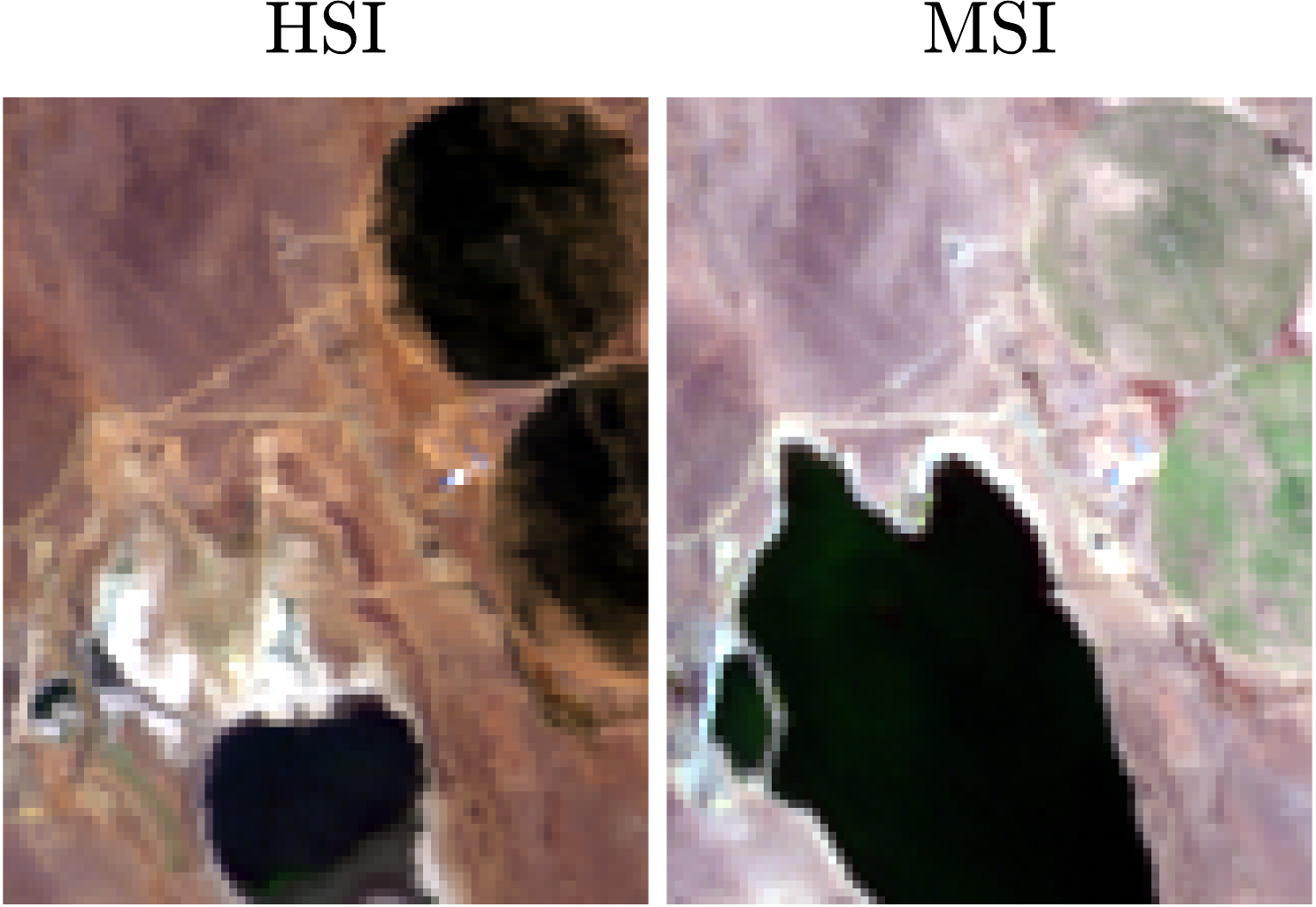}
        Lake Tahoe B
    \end{minipage}
    \caption{Hyperspectral and multispectral images with a large acquisition time difference used in the experiments.}
    \label{fig:ex3_HSIs_MSIs_b}
\end{figure}

\subsection{Example -- Real data}

In this example, we evaluated the algorithms using real HS and MS images acquired at different time instants, thus presenting different acquisition and seasonal conditions.
The reference hyperspectral and multispectral images, with a pixel size of 20~m, acquired by the AVIRIS and by the Sentinel-2A instruments, respectively, were originally considered in~\cite{Borsoi_2018_Fusion}. 
Four sets of image pairs were available. Two of which contained images acquired less than three months apart (thus containing moderate variability). The other two contained images acquired with a time difference of more than one year (thus containing more significant variability).
The HSI and MSI contained $L_h=173$ and $L_m=10$ bands, respectively. 
The selected ranks for the tensor-based methods are shown in Table~\ref{tab:exp_ranks_values}.
\corange{Although the ranks of CB-STAR \cdgreen{satisfied Theorem~\ref{thm:exacr_rec_alg2} only for} the Ivanpah Playa image pair, \cdgreen{this did not have} a negative impact on its performance, as will be shown in the following.}


\subsubsection{\cmag{Rank} \corange{sensitivity analysis}}
\corange{
%
Before proceeding, we evaluate how sensitive the performance of the proposed methods is to the selection of the ranks by plotting the PSNR as a function of each of the ranks $K_{Z,i}$ and $K_{\Psi,i}$, while keeping the others fixed. 
For simplicity, we kept the spatial ranks equal to each other (i.e., $K_{Z,1}=K_{Z,2}$ and $K_{\Psi,1}=K_{\Psi,2}$), and only show the results for the Lockwood image (to be described in Section~\ref{sec:res_realdata_moderate}) due to space limitations\footnote{Additional results are available on the supplemental material.}. The results, shown in Fig.~\ref{fig:sensitivity_ex6}, indicate that CT-STAR performs well when $K_{\Psi,1}=K_{\Psi,2}$ are small, and performs well for values of $K_{Z,3}$ which are not small. The optimal $K_{Z,1}=K_{Z,2}$ were relatively large, but a drop in performance was observed when they approach their upper limit $N_i-K_{\Psi,i}$, $i\in\{1,2,\}$. The performance of CB-STAR increased steadily with $K_{Z,1}=K_{Z,2}$ and for small values of $K_{Z,3}$, but decreased more sharply when all values $K_{Z,i}$, $i\in\{1,2,3\}$ were large. The variability ranks $K_{\Psi,i}$, on the other hand, did not affect the results too much when $K_{\Psi,1}=K_{\Psi,2}$ were sufficiently large. 
}

\begin{figure}
    \centering
    \begin{minipage}{0.45\linewidth} \centering
    \includegraphics[width=\linewidth]{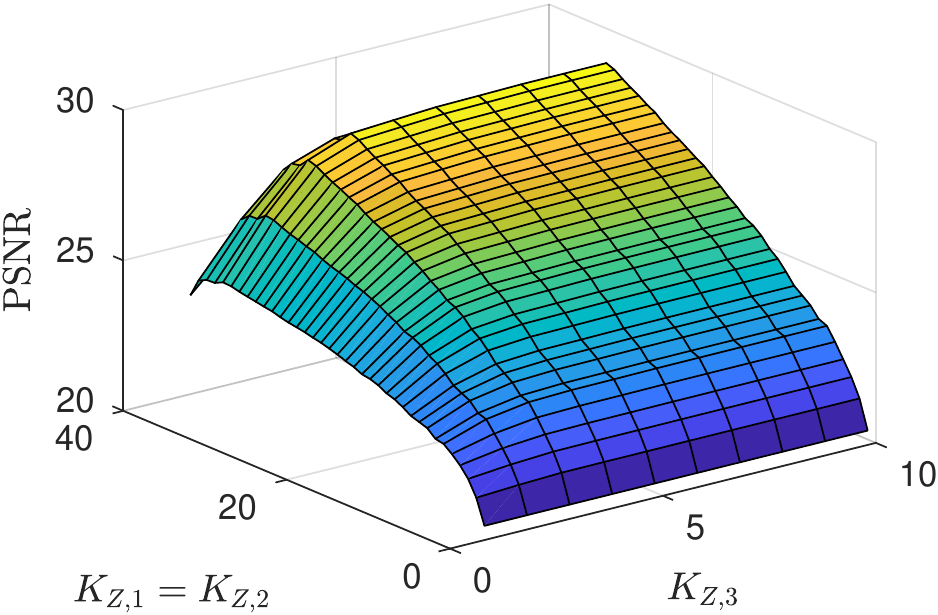}
    \end{minipage}
    \begin{minipage}{0.45\linewidth} \centering
    \includegraphics[width=\linewidth]{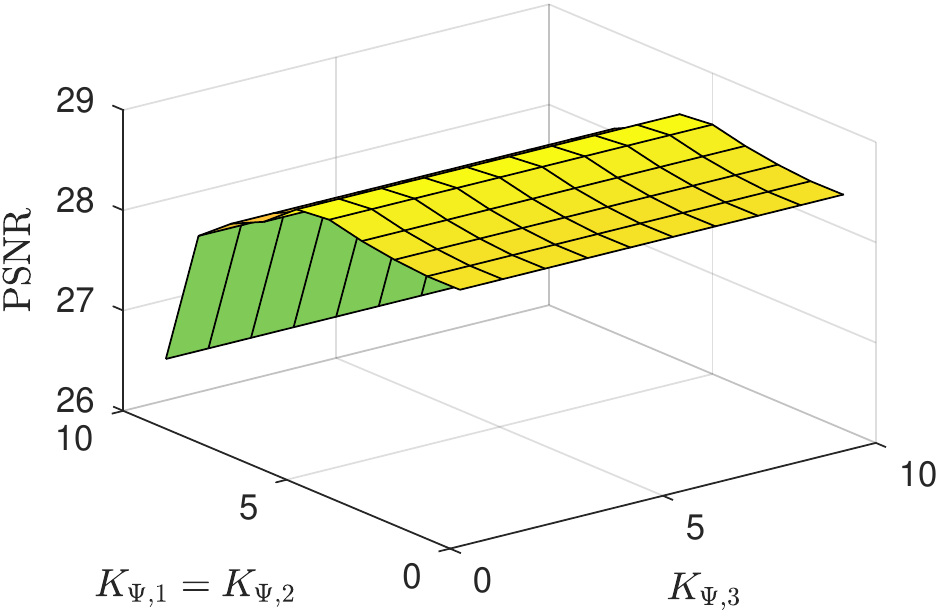}
    \end{minipage}
    \begin{minipage}{0.45\linewidth} \centering
    \includegraphics[width=\linewidth]{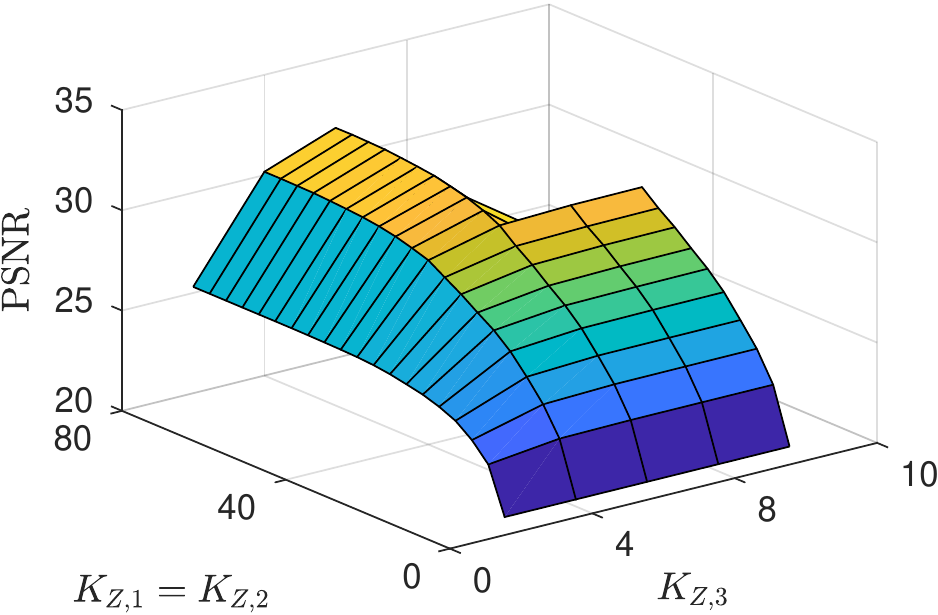}
    \end{minipage}
    \begin{minipage}{0.45\linewidth} \centering
    \includegraphics[width=\linewidth]{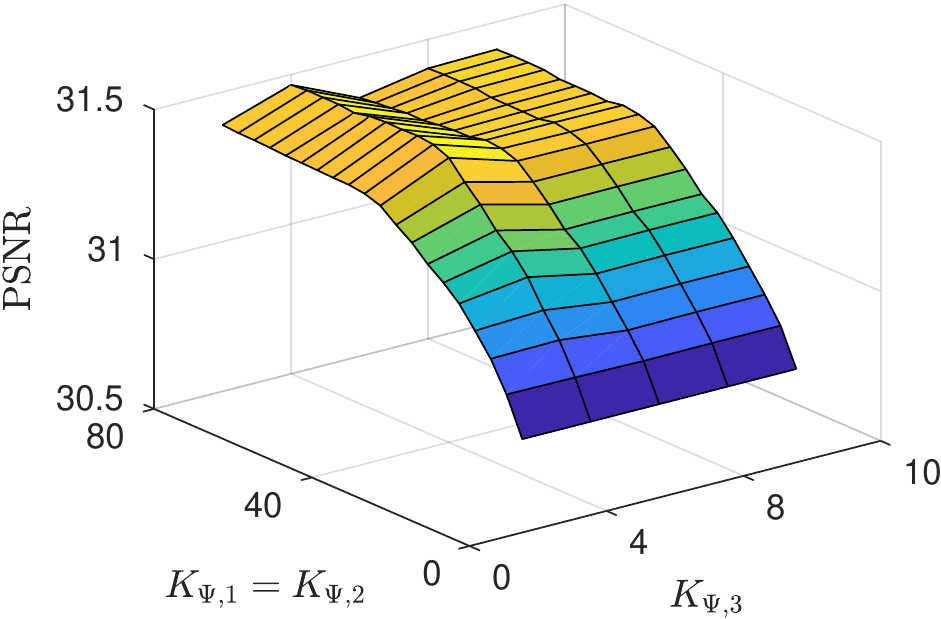}
    \end{minipage}
    \caption{\corange{Sensitivity analysis of the proposed methods Lockwood image. Top: PSNR of CT-STAR as a function of the ranks of $\tensor{Z}_h$ (left) and $\tensor{\Psi}$ (right). Bottom: PSNR of CB-STAR as a function of the ranks of $\tensor{Z}_h$ (left) and $\tensor{\Psi}$ (right)}}
    \label{fig:sensitivity_ex6}
\end{figure}

\begin{table}[ht]
\renewcommand{\arraystretch}{1.15}
\setlength\tabcolsep{3.5pt}
\centering
\caption{Ranks of the tensor-based algorithms used in the experiments}  \vspace{-5pt}
{\small 
    \begin{tabular}{c|cccc} \hline
    & CT-STAR: & CB-STAR: & SCOTT: & STEREO: \\ 
    & $K_Z,K_{\Psi}$ & $K_Z,K_{\Psi}$ & $K$ & $K$ \\ \hline
    Lockwood      & \makecell{$(30,30,8)$,\\$(3,3,2)$} & \makecell{$(70,70,5)$,\\$(40,40,3)$} & $(60,60,5)$ & $50$\\
    Lake Isabella & \makecell{$(30,30,8)$,\\$(3,3,2)$} & \makecell{$(50,50,5)$,\\$(40,40,3)$} & $(60,60,5)$ & $50$\\
    Lake Tahoe    & \makecell{$(30,30,10)$,\\$(3,3,1)$} & \makecell{$(35,35,9)$,\\$(50,50,4)$} & $(40,40,7)$ & $30$\\
    Ivanpah Playa & \makecell{$(16,16,8)$,\\$(3,3,2)$} & \makecell{$(40,40,4)$,\\$(40,40,5)$} & $(30,30,30)$ & $10$\\ \hline
\end{tabular}}
\label{tab:exp_ranks_values}
\end{table}

\subsubsection{Moderate variability}
\label{sec:res_realdata_moderate}

The first pair of images considered in this example contained $80\times80$ pixels and were acquired over the region surrounding Lake Isabella, on 2018-06-27 and on 2018-08-27. The second pair of images contained $80\times100$ pixels and was acquired near Lockwood, on 2018-08-20 and on 2018-10-19. A true color representation of the HSI and MSI for this example can be seen in Fig.~\ref{fig:ex3_HSIs_MSIs_a}.
Due to the relatively small difference between the acquisition dates of both images, the HSI and MSI look similar. However, there are slight differences between them, as seen in the overall color hue of the images and in the upper right part of the Lake Isabella HSI.
The quantitative performance metrics of all algorithms are shown in Tables~\ref{tab:ex_lockwood} and~\ref{tab:ex_isabellalake}, while the reconstructed images are presented in Figs.~\ref{fig:ex_lockwood_Z_Psi_est} and~\ref{fig:ex_isabellalake_Z_Psi_est}.

\begin{table}[ht!]	
\caption{Results - Lockwood} \vspace{-5pt}
\centering	
\renewcommand{\arraystretch}{1.15}
\setlength\tabcolsep{3.5pt}
{\small  
\pgfplotstabletypeset[header=false,
col sep = tab,
	columns/0/.style={string type},
every head row/.style={before row={\hline}, after row=\hline},
columns/0/.style={string type, column name={Algorithm}, column type/.add={@{}@{\,}}{}},
columns/1/.style={column name={SAM}, column type/.add={@{\,}|@{\,}}{}},
columns/2/.style={column name={ERGAS}, column type/.add={@{\,}|@{\,}}{}},
columns/3/.style={column name={PSNR}, column type/.add={@{\,}|@{\,}}{}},
columns/4/.style={column name={UIQI}, column type/.add={@{\,}|@{\,}}{}},
columns/5/.style={column name={time}, fixed, precision=2, column type/.add={@{\,}|@{\,}}{}},
columns ={0,1,2,3,4,5},
every row 8 column 1/.style={postproc cell content/.style={ @cell content/.add={$\bf}{$} }},
every row 3 column 2/.style={postproc cell content/.style={ @cell content/.add={$\bf}{$} }},
every row 8 column 3/.style={postproc cell content/.style={ @cell content/.add={$\bf}{$} }},
every row 8 column 4/.style={postproc cell content/.style={ @cell content/.add={$\bf}{$} }},
every row 9 column 4/.style={postproc cell content/.style={ @cell content/.add={$\bf}{$} }},
every row 6 column 5/.style={postproc cell content/.style={ @cell content/.add={$\bf}{$} }},
]{tables/tab_lockwood.txt}}
\label{tab:ex_lockwood}
\end{table}


\begin{table}[ht!]	
\caption{Results - Isabella Lake} \vspace{-5pt}
\centering	
\renewcommand{\arraystretch}{1.15}
\setlength\tabcolsep{3.5pt}
{\small  
\pgfplotstabletypeset[header=false,
col sep = tab,
	columns/0/.style={string type},
every head row/.style={before row={\hline}, after row=\hline},
columns/0/.style={string type, column name={Algorithm}, column type/.add={@{}@{\,}}{}},
columns/1/.style={column name={SAM}, column type/.add={@{\,}|@{\,}}{}},
columns/2/.style={column name={ERGAS}, column type/.add={@{\,}|@{\,}}{}},
columns/3/.style={column name={PSNR}, column type/.add={@{\,}|@{\,}}{}},
columns/4/.style={column name={UIQI}, column type/.add={@{\,}|@{\,}}{}},
columns/5/.style={column name={time}, fixed, precision=2, column type/.add={@{\,}|@{\,}}{}},
columns ={0,1,2,3,4,5},
every row 8 column 1/.style={postproc cell content/.style={ @cell content/.add={$\bf}{$} }},
every row 3 column 2/.style={postproc cell content/.style={ @cell content/.add={$\bf}{$} }},
every row 8 column 3/.style={postproc cell content/.style={ @cell content/.add={$\bf}{$} }},
every row 8 column 4/.style={postproc cell content/.style={ @cell content/.add={$\bf}{$} }},
every row 6 column 5/.style={postproc cell content/.style={ @cell content/.add={$\bf}{$} }},
]{tables/tab_isabellalake.txt}}
\label{tab:ex_isabellalake}
\end{table}


The quantitative results show that CB-STAR achieved the overall best results for this example, outperforming the other methods in all metrics except in ERGAS, where it performed very similarly to FuVar (which yielded the best results for this metric). CT-STAR, on the other hand, performed similarly to STEREO and SCOTT, being limited by the more stringent constraints on the image ranks. 
The visual inspection of the results indicates that CB-STAR provides reconstructions closest to the ground truth when compared to the remaining methods. Although FuVar also provided good results, it yielded a slightly worse representation of the road in the left part of the Lockwood HSI, as well as more aberrations in the color of the light-brown regions in the middle of the Isabella Lake scene (which are not seen in the results of CB-STAR). \corange{LTMR,} STEREO and SCOTT, not being able to account for variability, yielded slight color aberrations in the reconstructions, which are most clearly seen in the central part of the Isabella Lake image, while CT-STAR produced significant artifacts due to the stringent rank constraints.
The estimated factors $\opPspec(\widehat{\tensor{\Psi}})$ are in agreement with the localized changes observed in Fig.~\ref{fig:ex3_HSIs_MSIs_a}, particularly in the upper-central area of the Isabella Lake image pair, which is subject to local illumination changes.
The computation times of the algorithms show a large difference between that of FuVar and those of the other algorithms, which indicates that CB-STAR achieves better results at a significantly smaller computational complexity.

\subsubsection{Significant variability}


The remaining image pairs used in this example were acquired over the Ivanpah Playa and over Lake Tahoe area. The Ivanpah Playa image pair contained $80\times128$ pixels and was acquired on 2015-10-26 and on 2017-12-17. For the Lake Tahoe region, we considered two different image pairs (``A'' and ``B''), both with $100\times80$ pixels, the first one acquired on 2014-10-04 and on 2017-10-24, and the second one acquired on 2014-09-19 and on 2017-10-24. A true color representation of the HSI and MSI for this example can be seen in Fig.~\ref{fig:ex3_HSIs_MSIs_b}.
Due to the considerable difference between the acquisition date/time of the HSI and MSI, significant differences can be found between them. For the Ivanpah Playa images, there are large variations between the sand colors in the central part of the image. For the Lake Tahoe region, significant differences are observed in both image pairs, with differences in the color hues of the ground and of the crop circles for the image pair A, and also a large change in the water level of the lake in the image pair B.
The quantitative performance metrics of all algorithms are shown in Tables~\ref{tab:ex_playa},~\ref{tab:ex_tahoe}, and~\ref{tab:ex_tahoeChanged}, while the reconstructed images are presented in Figs.~\ref{fig:ex_playa_Z_Psi_est},~\ref{fig:ex_tahoe_Z_Psi_est} and~\ref{fig:ex_tahoeChanged_Z_Psi_est}.

\begin{table}[ht!]	
\caption{Results - Ivanpah Playa} \vspace{-5pt}
\centering	
\renewcommand{\arraystretch}{1.15}
\setlength\tabcolsep{3.5pt}
{\small  
\pgfplotstabletypeset[header=false,
col sep = tab,
	columns/0/.style={string type},
every head row/.style={before row={\hline}, after row=\hline},
columns/0/.style={string type, column name={Algorithm}, column type/.add={@{}@{\,}}{}},
columns/1/.style={column name={SAM}, column type/.add={@{\,}|@{\,}}{}},
columns/2/.style={column name={ERGAS}, column type/.add={@{\,}|@{\,}}{}},
columns/3/.style={column name={PSNR}, column type/.add={@{\,}|@{\,}}{}},
columns/4/.style={column name={UIQI}, column type/.add={@{\,}|@{\,}}{}},
columns/5/.style={column name={time}, fixed, precision=2, column type/.add={@{\,}|@{\,}}{}},
columns ={0,1,2,3,4,5},
every row 3 column 1/.style={postproc cell content/.style={ @cell content/.add={$\bf}{$} }},
every row 8 column 2/.style={postproc cell content/.style={ @cell content/.add={$\bf}{$} }},
every row 8 column 3/.style={postproc cell content/.style={ @cell content/.add={$\bf}{$} }},
every row 3 column 4/.style={postproc cell content/.style={ @cell content/.add={$\bf}{$} }},
every row 7 column 5/.style={postproc cell content/.style={ @cell content/.add={$\bf}{$} }},
]{tables/tab_playa.txt}}
\label{tab:ex_playa}
\end{table}

\begin{table}[ht!]	
\caption{Results - Lake Tahoe A} \vspace{-5pt}
\centering	
\renewcommand{\arraystretch}{1.15}
\setlength\tabcolsep{3.5pt}
{\small  
\pgfplotstabletypeset[header=false,
col sep = tab,
	columns/0/.style={string type},
every head row/.style={before row={\hline}, after row=\hline},
columns/0/.style={string type, column name={Algorithm}, column type/.add={@{}@{\,}}{}},
columns/1/.style={column name={SAM}, column type/.add={@{\,}|@{\,}}{}},
columns/2/.style={column name={ERGAS}, column type/.add={@{\,}|@{\,}}{}},
columns/3/.style={column name={PSNR}, column type/.add={@{\,}|@{\,}}{}},
columns/4/.style={column name={UIQI}, column type/.add={@{\,}|@{\,}}{}},
columns/5/.style={column name={time}, fixed, precision=2, column type/.add={@{\,}|@{\,}}{}},
columns ={0,1,2,3,4,5},
every row 3 column 1/.style={postproc cell content/.style={ @cell content/.add={$\bf}{$} }},
every row 8 column 2/.style={postproc cell content/.style={ @cell content/.add={$\bf}{$} }},
every row 8 column 3/.style={postproc cell content/.style={ @cell content/.add={$\bf}{$} }},
every row 8 column 4/.style={postproc cell content/.style={ @cell content/.add={$\bf}{$} }},
every row 9 column 4/.style={postproc cell content/.style={ @cell content/.add={$\bf}{$} }},
every row 5 column 5/.style={postproc cell content/.style={ @cell content/.add={$\bf}{$} }},
]{tables/tab_tahoe.txt}}
\label{tab:ex_tahoe}
\end{table}

\begin{table}[ht!]	
\caption{Results - Lake Tahoe B} \vspace{-5pt}
\centering	
\renewcommand{\arraystretch}{1.15}
\setlength\tabcolsep{3.5pt}
{\small  
\pgfplotstabletypeset[header=false,
col sep = tab,
	columns/0/.style={string type},
every head row/.style={before row={\hline}, after row=\hline},
columns/0/.style={string type, column name={Algorithm}, column type/.add={@{}@{\,}}{}},
columns/1/.style={column name={SAM}, column type/.add={@{\,}|@{\,}}{}},
columns/2/.style={column name={ERGAS}, column type/.add={@{\,}|@{\,}}{}},
columns/3/.style={column name={PSNR}, column type/.add={@{\,}|@{\,}}{}},
columns/4/.style={column name={UIQI}, column type/.add={@{\,}|@{\,}}{}},
columns/5/.style={column name={time}, fixed, precision=2, column type/.add={@{\,}|@{\,}}{}},
columns ={0,1,2,3,4,5},
every row 8 column 1/.style={postproc cell content/.style={ @cell content/.add={$\bf}{$} }},
every row 8 column 2/.style={postproc cell content/.style={ @cell content/.add={$\bf}{$} }},
every row 8 column 3/.style={postproc cell content/.style={ @cell content/.add={$\bf}{$} }},
every row 8 column 4/.style={postproc cell content/.style={ @cell content/.add={$\bf}{$} }},
every row 5 column 5/.style={postproc cell content/.style={ @cell content/.add={$\bf}{$} }},
]{tables/tab_tahoeChanged.txt}}
\label{tab:ex_tahoeChanged}
\end{table}


The quantitative results show that \mbox{CB-STAR} achieved again the overall best results for this example, outperforming the remaining algorithms in most metrics, except in the SAM and UIQI for the Ivanpah Playa HRI and in the SAM of the Lake Tahoe A HRI. Moreover, there was a stronger gap between the performance of the methods that consider variability and the remaining algorithms. \mbox{CT-STAR}, although better than \corange{LTMR,} STEREO and SCOTT, performed significantly worse than CB-STAR due to its stringent constraints on the image ranks. 
The visual inspection of the results again indicates that CB-STAR provides reconstructions closest to the ground truth when compared to the remaining methods. Although FuVar also provided good results \corange{(as it accounts for spectral variability)}, the reconstructions by CB-STAR were closer to the ground truth, as can be observed in the color shades of the upper part of the Ivanpah Playa image and of the crop circles of the Lake Tahoe~A image, and especially in the overall colors in the more uniform regions containing soil and water and vegetation in the Lake Tahoe B image. \corange{However, FuVar results showed slightly sharper edges in some regions (e.g., around the crop circles in the Lake Tahoe images), which occurs due to the use of a Total Variation spatial regularization. Nonetheless, a spatial regularization can  also be incorporated to the CB-STAR cost function in~\eqref{eq:opt_prob_fus_btd} to achieve a similar effect.}
\corange{LTMR,} STEREO and SCOTT, on the other hand, produced significant artifacts in all reconstructed images. CT-STAR also produced significant artifacts, which as in the previous example are due to the stringent rank constraints\corange{, with reconstructions visually worse than those of CB-STAR and FuVar}.
The estimated factors $\opPspec(\widehat{\tensor{\Psi}})$ were in close agreement with the variability seen in the scenes, notably in the sand region of the central part of the Ivanpah Playa image (which lies at the botton of a hill), and in the regions near the lake in the Lake Tahoe A and B images, which undergo variations in the water level. Moreover, the overall amplitude of the variables was significantly larger than in the previous example, in which the differences between the images were more moderate.
The computation times of all algorithms were similar to those observed in the previous example, except for that of CB-STAR, which was higher since it underwent a larger number of iterations for the data in this example. Nonetheless, the computation times of CB-STAR were still considerably smaller than those of FuVar.

\begin{figure*}
    \centering
    \resizebox{0.85\linewidth}{!}
    {
    \includegraphics[height=10ex]{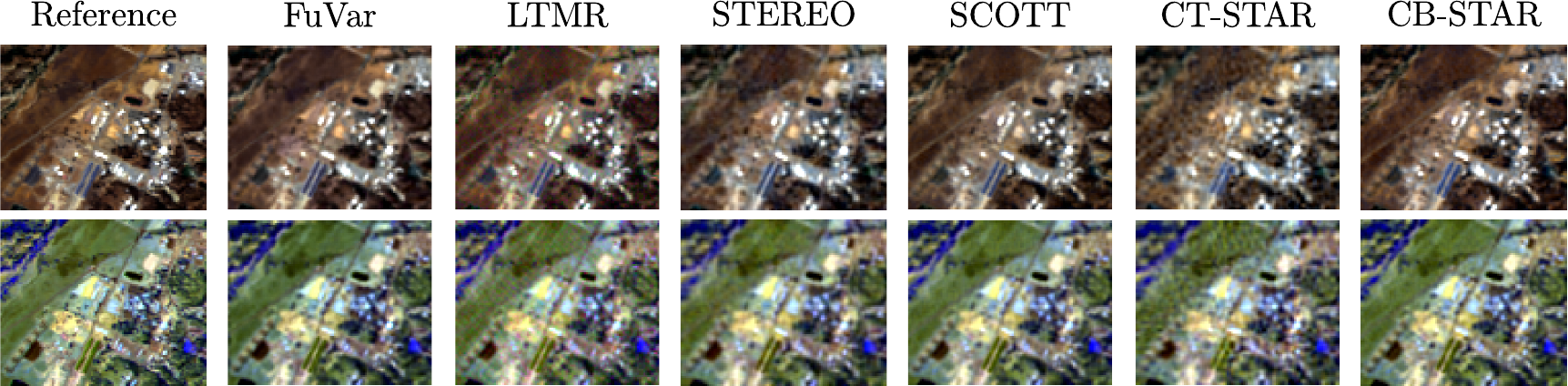}
    \includegraphics[height=9.85ex]{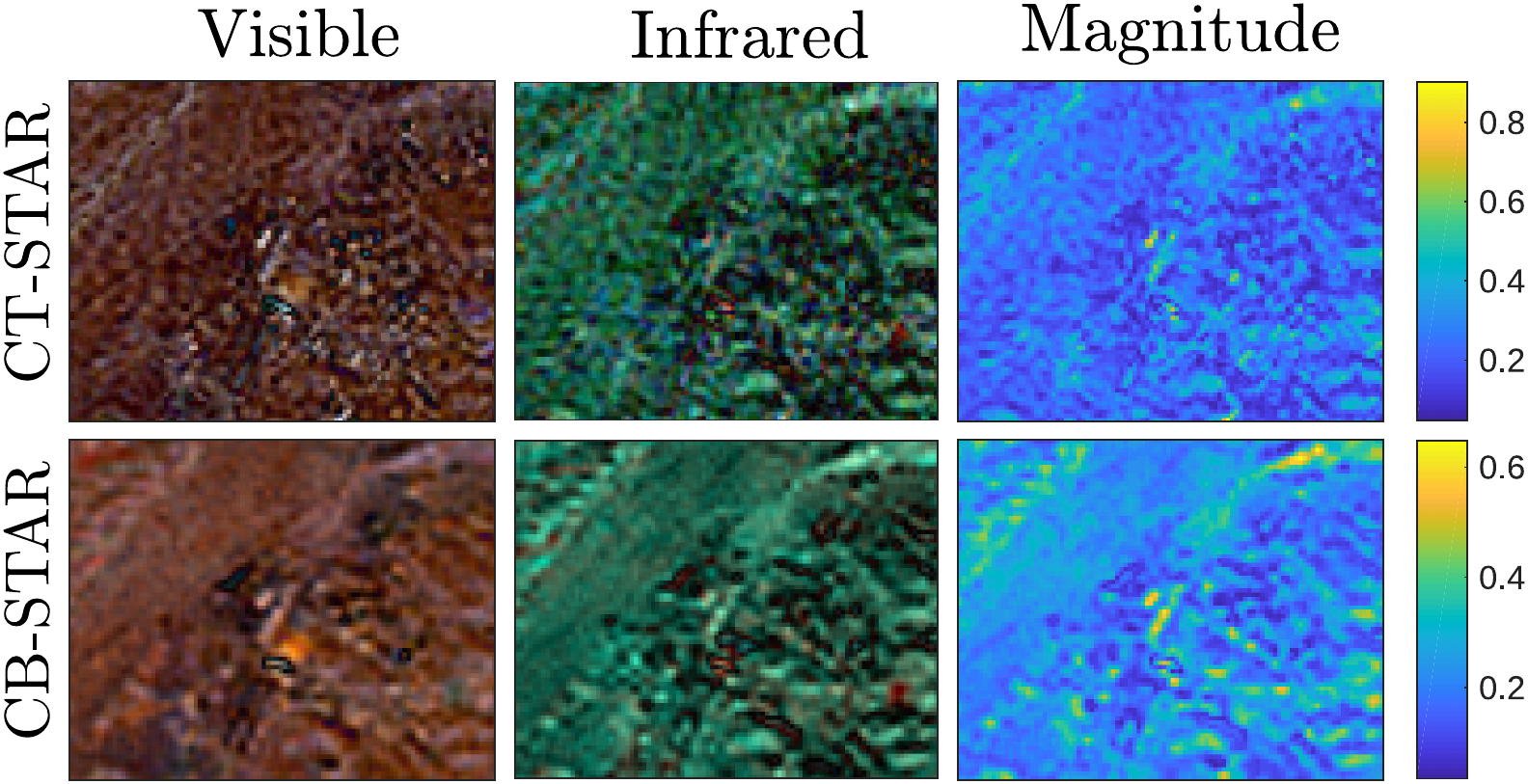}
    }
    \vspace{-0.3cm}
    \caption{Left: Visible (top) and infrared (bottom) representation of the true and estimated versions of the Lockwood HSI. Right: Spectrally degraded additive scaling factors $\opPspec(\tensor{\Psi})$ estimated by CT-STAR and CB-STAR.}
    \label{fig:ex_lockwood_Z_Psi_est}
\end{figure*}

\begin{figure*}
    \centering
    \resizebox{0.85\linewidth}{!}
    {
    \includegraphics[height=10ex]{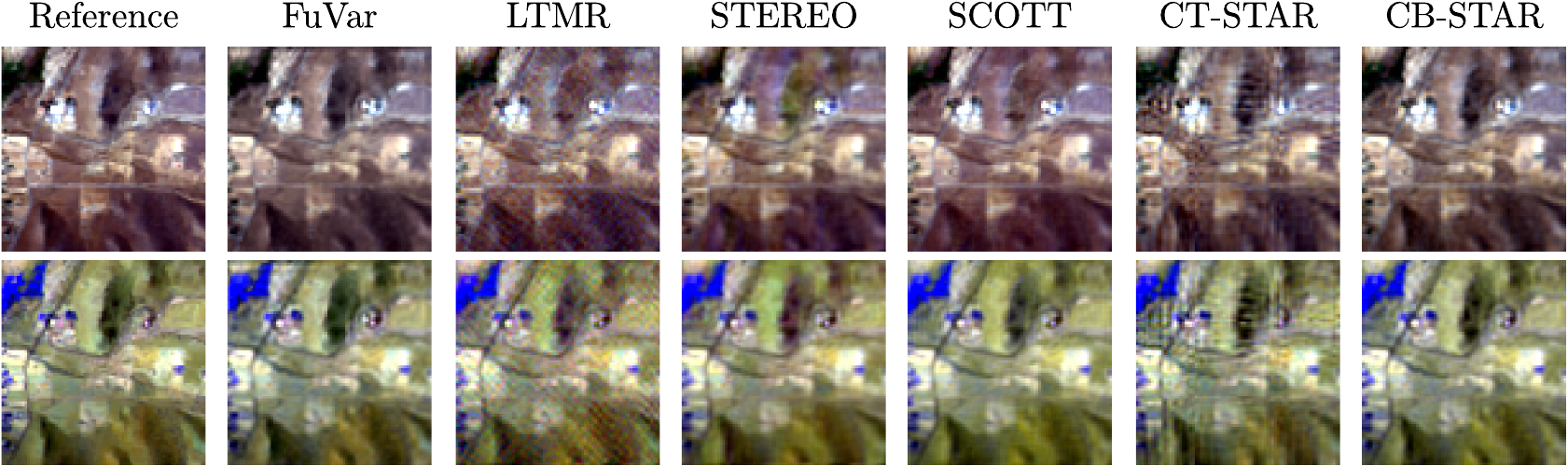}
    \includegraphics[height=9.9ex]{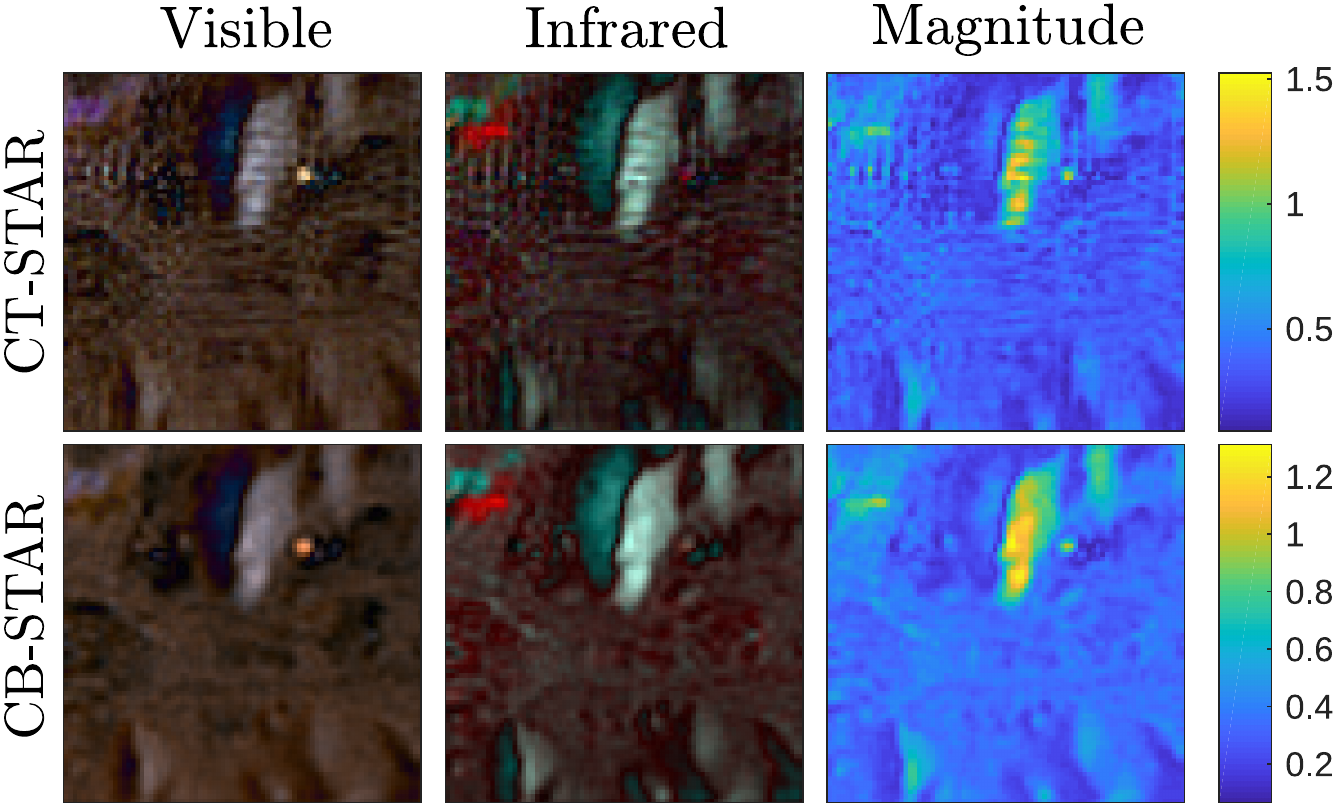}
    }
    \vspace{-0.3cm}
    \caption{Left: Visible (top) and infrared (bottom) representation of the true and estimated versions of the Isabella Lake HSI. Right: Spectrally degraded additive scaling factors $\opPspec(\tensor{\Psi})$ estimated by CT-STAR and CB-STAR.}
    \label{fig:ex_isabellalake_Z_Psi_est}
\end{figure*}

\begin{figure*}
    \centering
    \resizebox{0.85\linewidth}{!}
    {
    \includegraphics[height=10ex]{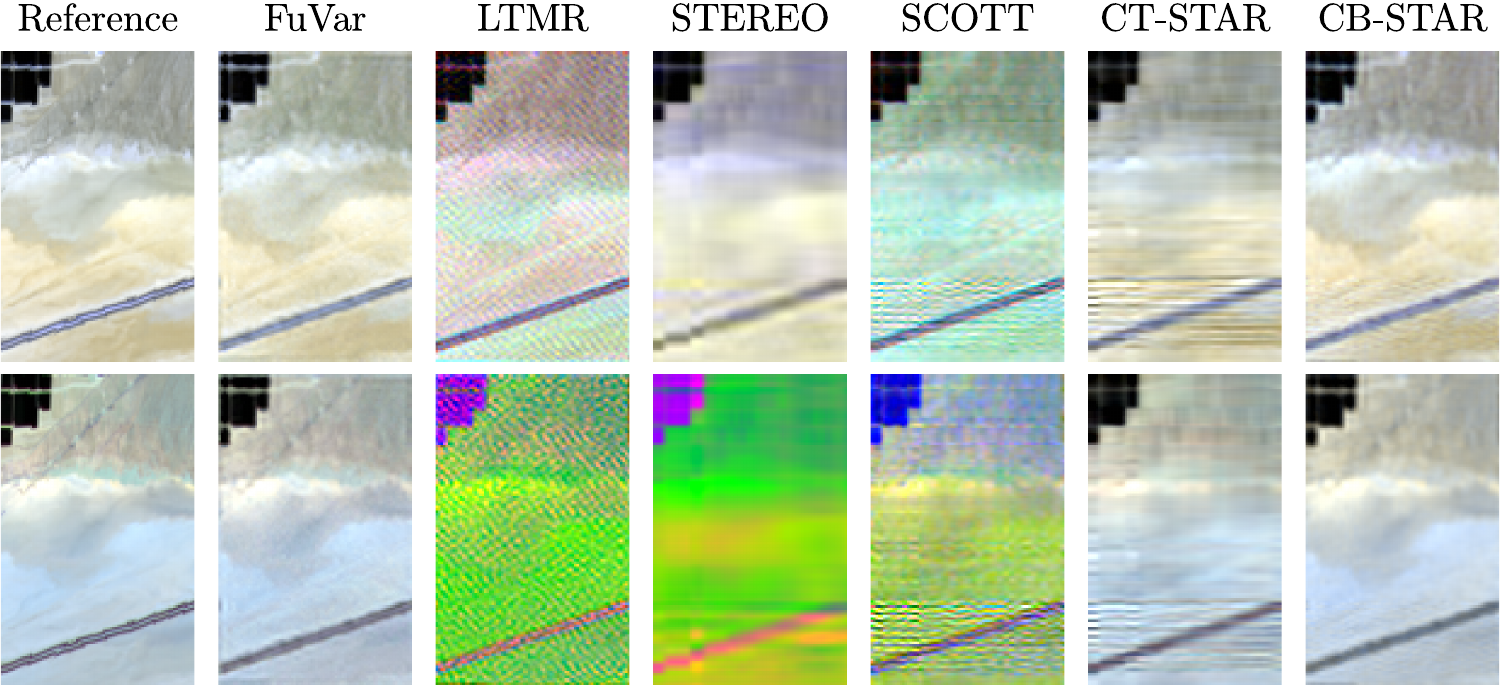}
    \includegraphics[height=9.95ex]{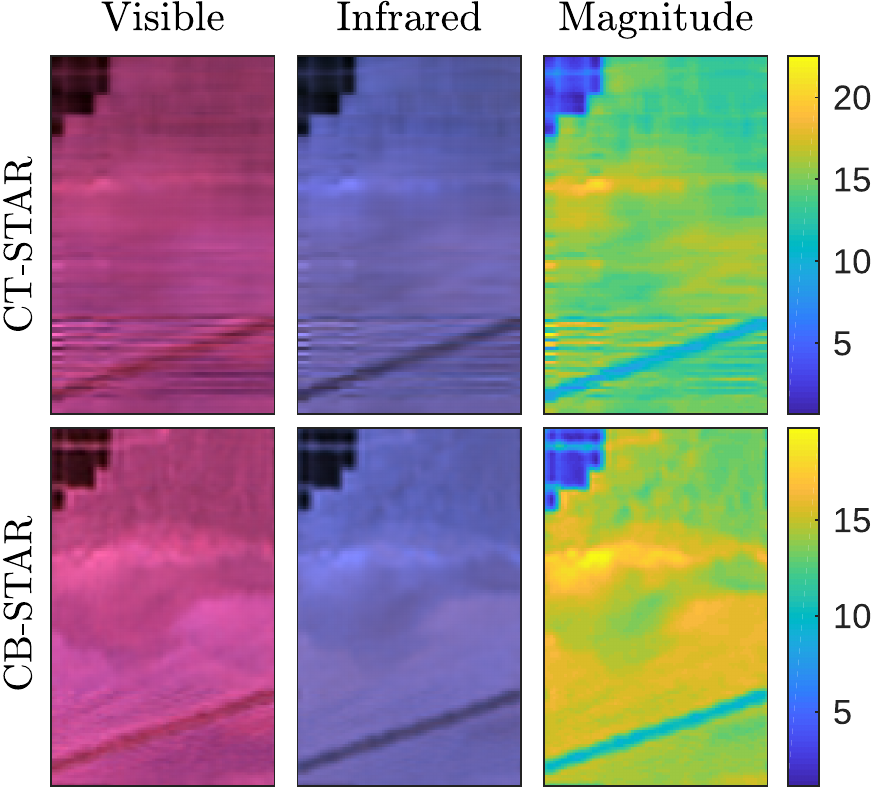}
    }
    \vspace{-0.3cm}
    \caption{Left: Visible (top) and infrared (bottom) representation of the true and estimated versions of the Ivanpah Playa HSI. Right: Spectrally degraded additive scaling factors $\opPspec(\tensor{\Psi})$ estimated by CT-STAR and CB-STAR.}
    \label{fig:ex_playa_Z_Psi_est}
\end{figure*}

\begin{figure*}
    \centering
    \resizebox{0.85\linewidth}{!}
    {
    \includegraphics[height=10ex]{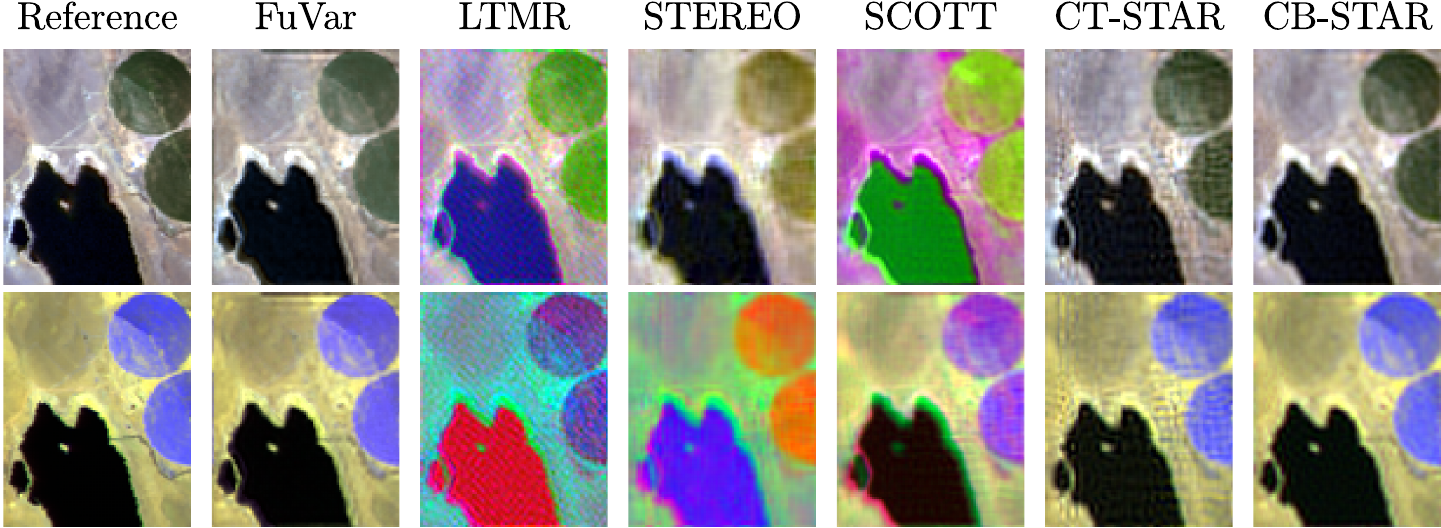}
    \includegraphics[height=9.9ex]{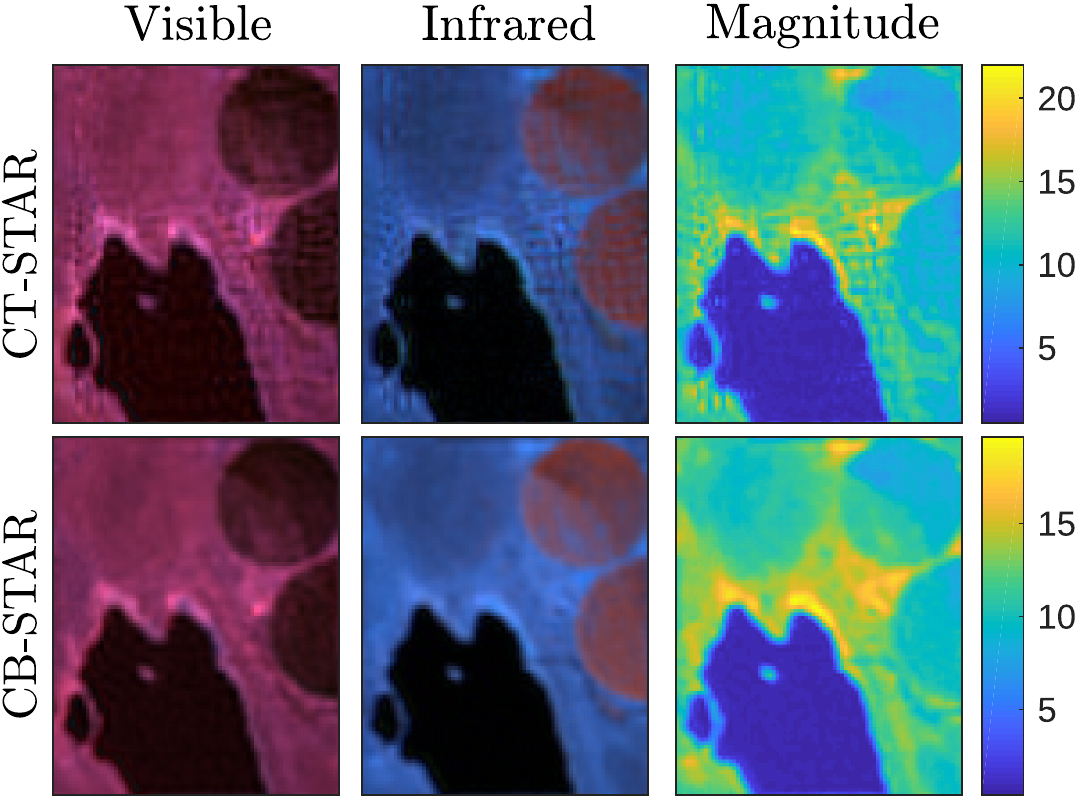}
    }
    \vspace{-0.3cm}
    \caption{Left: Visible (top) and infrared (bottom) representation of the true and estimated versions of the Lake Tahoe A HSI. Right: Spectrally degraded additive scaling factors $\opPspec(\tensor{\Psi})$ estimated by CT-STAR and CB-STAR.}
    \label{fig:ex_tahoe_Z_Psi_est}
\end{figure*}

\begin{figure*}[thb]
    \centering
    \resizebox{0.85\linewidth}{!}
    {
    \includegraphics[height=10ex]{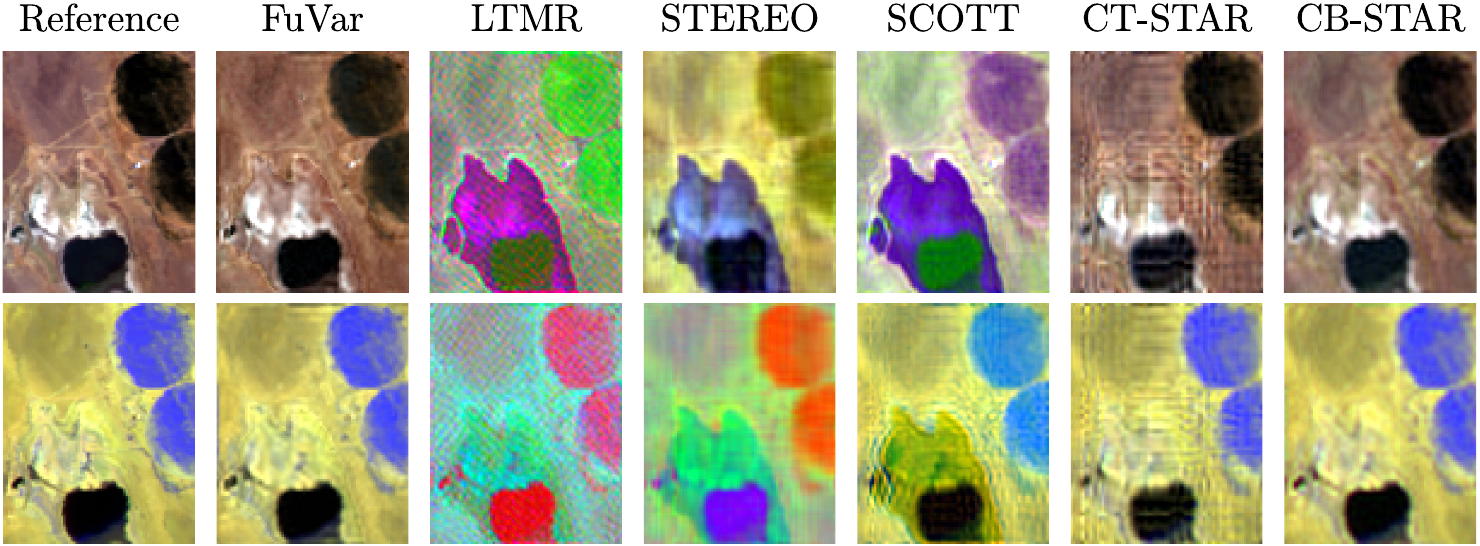}
    \includegraphics[height=9.9ex]{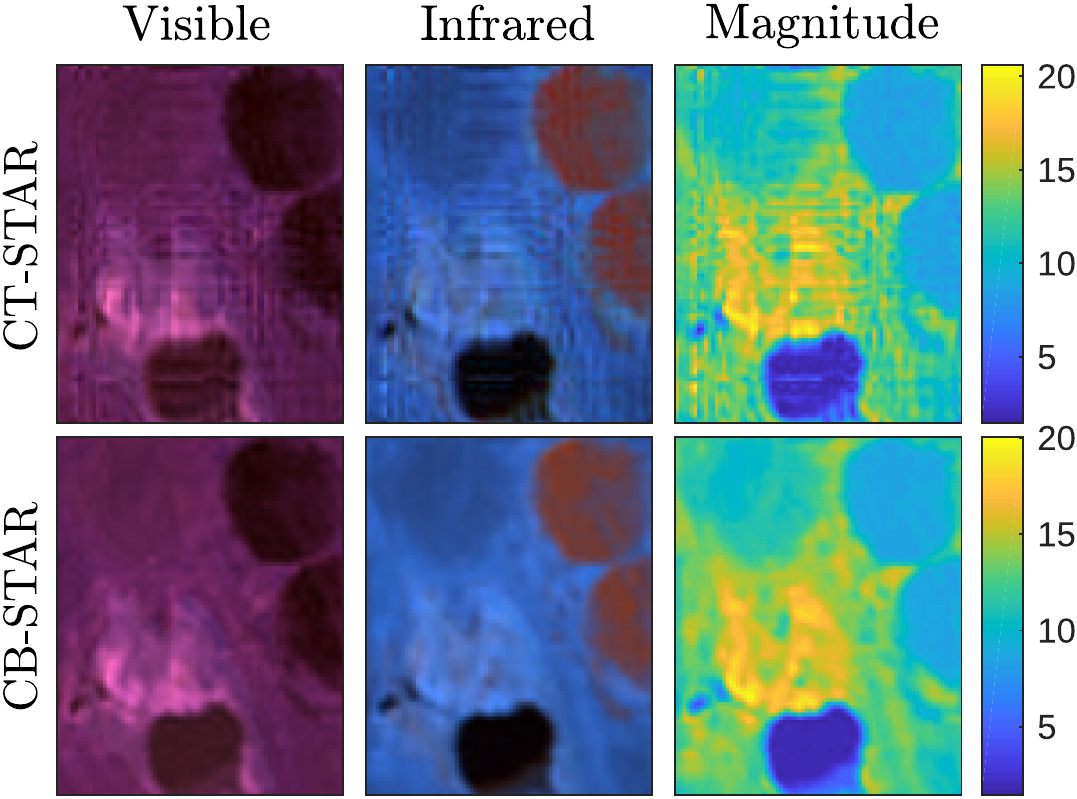}
    }
    \vspace{-0.3cm}
    \caption{Left: Visible (top) and infrared (bottom) representation of the true and estimated versions of the Lake Tahoe B HSI. Right: Spectrally degraded additive scaling factors $\opPspec(\tensor{\Psi})$ estimated by CT-STAR and CB-STAR.}
    \label{fig:ex_tahoeChanged_Z_Psi_est}
\end{figure*}

\section{Conclusions}
\label{sec:conclusions}

In this paper, we proposed a novel framework for multimodal (hyperspectral and multispectral) image fusion accounting for spatially and spectrally localized changes. We first studied the general identifiability of the considered model, which becomes more ambiguous due to the presence of changes. Then, assuming that the high resolution image and the variation factors admit a Tucker decomposition, two new algorithms were proposed -- one being purely algebraic (which was computationally more efficient), and another based on an optimization procedure (which allowed for more relaxed specification of the multilinear ranks). Theoretical guarantees for the exact recovery of the high resolution image were provided for both algorithms. The proposed optimization-based algorithm achieved superior experimental performance in the presence of spectral and spatial variations between the images, while also exhibiting a smaller computational cost.

\appendices

\section{Optimization of cost function~\eqref{eq:opt_subp_Z}}
\label{sec:appendix1}

\corange{To solve~\eqref{eq:opt_subp_Z}, we consider a block coordinate descent strategy (as in, e.g.,~\cite{lathauwer2008tensor_BTD3_algorithms}), which successively minimizes the cost function in~\eqref{eq:opt_subp_Z} with respect to each of the variables $\tensor{G}_Z$, $\bB_{Z,i}$, $i\in\{1,2,3\}$,} while keeping the remaining ones fixed.
Let us denote by $J\big(\bX\,|\,\bTheta_{\setminus\{\bX\}}\big)$ the cost function $J$ in which all variables but $\bX$ are fixed.
Note that we apply the QR factorization after computing each of the factor matrices $\bB_{\iota,j}$ to constrain them to be unitary at all the iterations,  as performed in~\cite{lathauwer2008tensor_BTD3_algorithms}. This normalization prevents convergence issues by avoiding under/over-flow and keeping these matrices well-conditioned. \corange{Starting from an initialization $\tensor{G}_Z$, $\bB_{Z,i}$, $i\in\{1,2,3\}$, this procedure is repeated for $F$ inner iterations.}

\begin{algorithm} [thb]
\footnotesize
\SetKwInOut{Input}{Input}
\SetKwInOut{Output}{Output}
\caption{\mbox{Block coordinate descent to solve~\eqref{eq:opt_subp_Z}}
\label{alg:alg3}}
\Input{\mbox{Tensors $\tensor{Y}_h$, $\tensor{X}_0$, $\tensor{G}_Z$ matrices $\bB_{Z,i}$,$i\in\{1,2,3\}$\corange{, iterations~$F$}}}
\Output{$\tensor{G}_Z$ matrices $\bB_{Z,i}$,$i\in\{1,2,3\}$}

\For{$i=1,\ldots,F$}{
Minimize $J(\tensor{G}_Z|\bTheta_{\setminus\{\tensor{G}_Z\}}^{(i)})$ w.r.t. $\tensor{G}_Z$\;
Minimize $J(\bB_{Z,1}|\bTheta_{\setminus\{\bB_{Z,1}\}}^{(i)})$ w.r.t. $\bB_{Z,1}$\;
Minimize $J(\bB_{Z,2}|\bTheta_{\setminus\{\bB_{Z,2}\}}^{(i)})$ w.r.t. $\bB_{Z,2}$\;
Minimize $J(\bB_{Z,3}|\bTheta_{\setminus\{\bB_{Z,3}\}}^{(i)})$ w.r.t. $\bB_{Z,3}$\;}
\end{algorithm}

\subsubsection{Optimizing w.r.t. $\bB_{Z,i}$, $i\in\{1,2,3\}$}

To save space, we present only the case where $i=1$. The extension to $i\in\{2,3\}$ is straightforward.
Note that the cost function $J\big(\bB_{Z,1}\,|\,\bTheta_{\setminus\{\bB_{Z,1}\}}\big)$ can be equivalently reformulated using the mode-$1$ matricization as
\begin{align} \label{eq:opt_subp_1_i}
    J\big(& \bB_{Z,1}\,|\,\bTheta_{\setminus\{\bB_{Z,1}\}}\big) = \big\|\unfold{\bY_{\!h}}{1}^\top - \bX_2 \bB_{Z,1}^\top\bP_1^\top\big\|_F^2
    \nonumber\\ &
    + \corange{\lambda} \big\|\unfold{\bY_{\!m}}{1}^\top - \bX_3 - \bX_1 \bB_{Z,1}^\top\big\|_F^2 \,,
\end{align}
where matrices $\bX_1$, $\bX_2$ and $\bX_3$ are given by
\begin{align}
    \bX_1 &= \big({\bB}_{Z,3} \otimes \bP_3{\bB}_{Z,2}\big) \unfold{\bG_Z}{1}^\top \,,
    \\
    \bX_2 &= \big(\bP_2{\bB}_{Z,3} \otimes {\bB}_{Z,2}\big) \unfold{\bG_Z}{1}^\top \,,
    \\
    \bX_3 & = \big( \bP_3 {\bB}_{\Psi,3} \otimes {\bB}_{\Psi,2}\big)\unfold{\bG_{\Psi}}{1}^\top {\bB}_{\Psi,1}^\top \,.
\end{align}

Computing the derivative of~\eqref{eq:opt_subp_1_i} and setting it equal to zero results in the following expression:
\begin{align}
\begin{split}
    \corange{\lambda}\bX_1^\top\bX_1 {\bB}_{Z,1}^\top 
    & + \bX_2^\top\bX_2 {\bB}_{Z,1}^\top\bP_1^\top\bP_1
    \\
    = \corange{\lambda} & \bX_1^\top(\unfold{\bY_{\!m}}{1}^\top - \bX_3)
    + \bX_2^\top\unfold{\bY_{\!h}}{1}^\top\bP_1 \,.
\end{split}
\end{align}
This is a Sylvester equation that can be directly solved using existing software with, e.g., the Hessenberg-Schur or the Bartels-Stewart algorithms (see~\cite{simoncini2016solversSylvesterEq} and references therein).

\subsubsection{Optimizing w.r.t. $\tensor{G}_{Z}$}

Cost function $J\big(\tensor{G}_{Z}\,|\,\bTheta_{\setminus\{\tensor{G}_{Z}\}}\big)$ can be equivalently reformulated using the tensor vectorization as
\begin{align} \label{eq:opt_subp_4_i}
    J\big(\tensor{G}_{Z}\,|\,\bTheta_{\setminus\{\tensor{G}_{Z}\}}\big) 
    ={} & \big\|\vect(\tensor{Y}_h) - \bX_2 \,\bg_Z \,\big\|_F^2
    \nonumber \\ &
    + \corange{\lambda} \big\|\vect(\tensor{Y}_m) - \bx_3 - \bX_1 \, \bg_Z \, \big\|_F^2 \,,
\end{align}
where $\bg_Z=\vect(\tensor{G}_{Z})$ is the vectorization of the core tensor, and $\bX_1$, $\bX_2$ and $\bx_3$ are given by
\begin{align}
    \bX_1 & = \big(\bP_3 \bB_{Z,3} \otimes \bB_{Z,2} \otimes \bB_{Z,1} \big) \,,
    \\
    \bX_2 & = \big(\bB_{Z,3} \otimes \bP_2 \bB_{Z,2} \otimes \bP_1 \bB_{Z,1} \big) \,,
    \\
    \bx_3 & = \big(\bP_3\bB_{\Psi,3} \otimes \bB_{\Psi,2} \otimes \bB_{\Psi,1} \big) \vect(\tensor{G}_{\Psi}) \,.
\end{align}
The solution that minimizes~\eqref{eq:opt_subp_4_i} can be computed through the normal equations, which can be written as
\begin{align}
    (\corange{\lambda}\bX_1^\top\bX_1 + \bX_2^\top\bX_2) \bg_Z {}={} &  \corange{\lambda}\bX_1^\top(\vect(\tensor{Y}_m)-\bx_3)
    \nonumber \\
    & + \bX_2^\top\vect(\tensor{Y}_h) \,.
\end{align}
As shown in~\cite{prevost2020coupledTucker_hyperspectralSRR_TSP}, this set of equations can be alternatively interpreted as a generalized Sylvester equation, for which efficient solvers can be used.

\bibliographystyle{IEEEtran}
\bibliography{references_fus2,references_VarRevPaper,references_fus_old}

\end{document}